\numberwithin{equation}{section}
\DeclareFontFamily{U}{mathx}{}
\DeclareFontShape{U}{mathx}{m}{n}{<-> mathx10}{}
\DeclareSymbolFont{mathx}{U}{mathx}{m}{n}
\DeclareMathAccent{\widecheck}{0}{mathx}{"71}
\def\R{{\mathbb R}}
\def\C{{\mathbb C}}
\def\N{{\mathbb N}}
\def\Z{{\mathbb Z}}
\def\AA{\mathcal{A}}
\def\BB{\mathcal{B}}
\def\FF{\mathcal{F}}
\def\SS{\mathcal{S}}
\def\sine{\operatorname{sine}}
\def\Ai{\operatorname{Ai}}
\def\sigmaPc{{\sigma}_{P_{<N}}^{\hbar}}
\def\sigmaHc{{\sigma}_{H_N}^{\hbar}}
\def\sigmaKc{{\sigma}_{P_{<N}}^{\hbar}}
\def\sigmaQc{{\sigma}_{H_N}^{\hbar}}
\newtheorem{thm}{Theorem}
\newtheorem{prop}{Proposition}
\newtheorem{lem}{Lemma}
\newtheorem{definition}{Definition}
\theoremstyle{definition}
\newtheorem*{notation}{Notation}
\theoremstyle{remark}
\newtheorem{rem}{Remark}
\newcommand{\be}{\begin{equation}}
\newcommand{\ee}{\end{equation}}
\newcommand{\ben}{\begin{equation*}}
\newcommand{\een}{\end{equation*}}
\def\_#1{\def\next{#1}%
 \ifx\next\risingsign\expandafter\rising\else^{\underline{#1}}\fi}
\def\risingsign{^}
\def\rising#1{^{\overline{#1}}}
\title[The semiclassical limit of a quantum Zeno dynamics]{The semiclassical limit of a quantum Zeno dynamics}
\author[F.~D.~Cunden]{Fabio Deelan Cunden}
\address{Dipartimento di Matematica, Univerisit\`a degli Studi di Bari, I-70125 Bari, Italy, and
INFN, Sezione di Bari, I-70126 Bari, Italy
}
\email{fabio.cunden@uniba.it}
\author[P.~Facchi]{Paolo Facchi}
\address{Dipartimento di Fisica, Univerisit\`a degli Studi di Bari, I-70126 Bari, Italy, and
INFN, Sezione di Bari, I-70126 Bari, Italy
} 
\email{paolo.facchi@ba.infn.it}
\author[M.~Ligab\`o]{Marilena Ligab\`o}
\address{Dipartimento di Matematica, Univerisit\`a degli Studi di Bari, I-70125 Bari, Italy}
\email{marilena.ligabo@uniba.it}
\begin{document}
 
\maketitle
  
\begin{abstract}  Motivated by a quantum Zeno dynamics in a cavity quantum electrodynamics setting, we study the asymptotics of a family of symbols corresponding to a truncated momentum operator, in the semiclassical limit of vanishing Planck constant $\hbar\to0$ and large quantum number $N\to\infty$, with $\hbar N$ kept fixed. In a suitable topology, the limit is the discontinuous symbol $p\chi_D(x,p)$ where $\chi_D$ is the characteristic function of the classically permitted region $D$ in  phase space. A refined analysis shows that the symbol is asymptotically close to the function $p\chi_D^{(N)}(x,p)$, where $\chi_D^{(N)}$ is a smooth version of $\chi_D$ related to the integrated Airy function. We also discuss the limit from a dynamical point of view.

\end{abstract}
  
  \section{Introduction}
   \label{sec:Introduction}
   
In the \emph{quantum Zeno effect}, frequent projective measurements can slow down the evolution of a quantum system and eventually hinder any transition to states different from the initial one. The situation is much richer when the measurement does not confine the system in a single state, but rather in a multidimensional subspace of its Hilbert space. This gives rise to a \emph{quantum Zeno dynamics} (QZD): the system evolves in the projected subspace under the action of its projected Hamiltonian. 
This phenomenon, first considered by Beskow and Nilsson~\cite{Beskow} in their study of the decay of unstable systems, was dubbed quantum Zeno effect (QZE) by Misra and Sudarshan~\cite{Misra} who suggested a parallelism with the paradox of the `flying arrow at rest' by the philosopher Zeno of Elea.
Since then, QZE has received constant attention by physicists and mathematicians, who explored different aspects of the phenomenon.

From the mathematical point of view, QZD is related to the limit of a product formula obtained by intertwining the dynamical time evolution group with the orthogonal projection associated with the measurements performed on the system. It can be viewed as a generalization of Trotter-Kato product formulas~\cite{chernoff, kato, Trotter1, Trotter2} to more singular objects in which one semigroup is replaced by a projection.
The structure of the QZD product formula has been thoroughly investigated and has been well characterized under quite general assumptions~\cite{exner1, exner2, exner3, exner4, friedman1, friedman2, gustafson, Matolcsi, Schmidt1, Schmidt2, Facchi10}.

QZE has been observed experimentally in a variety of systems, on experiments involving photons, nuclear spins, ions, optical pumping, photons in a cavity, ultracold atoms, and Bose-Einstein condensates, see~\cite{zenoreview} and references therein. In all the abovementioned implementations, the quantum system is forced to remain in its initial state through a measurement associated with a one-dimensional projection. 
The present study is inspired by a  proposal  by Raimond \emph{et al.}~\cite{Raimond10,Raimond12} for generating a multidimensional QZD in a cavity quantum electrodynamics experiment. We briefly describe the proposal, skipping most of the non-mathematical details.

The mode of the quantized electromagnetic field in a cavity can be conveniently described in the Fock space representation. The Hamiltonian of the quantized field is that of a harmonic oscillator (with angular frequency $\omega=1$)
\begin{gather}
H_{\operatorname{h.o.}}=\frac{1}{2}\left(-\hbar^2\frac{d^2}{d x^2}+\hat{x}^2\right),
\label{eq:Hho}
\end{gather}
where $\hat{x}$ is the position operator and $\hat{p}=-i\hbar \frac{d}{dx}$ is the momentum operator on $L^2(\R)$. The operators $\hat{x}$, $\hat{p}$, and $H_{\operatorname{h.o.}}$ are essentially self-adjoint on the common core $\SS(\R)$, the Schwartz space of rapidly decreasing functions. The eigenfunction $\psi_n$ of $H_{\operatorname{h.o.}}$ represents  a cavity state with $n$ photons ($n=0,1,2,\dots$) and energy $\lambda_n= \hbar (n+1/2)$.

The cavity field undergoes a stroboscopic evolution alternating a short
continuous time evolution $e^{-i \frac{\tau}{\hbar} \hat{p}}$ given by a \emph{displacement operator}, that without loss of generality is taken to be generated by  $\hat{p}$, and an instantaneous interaction 
\be \label{eqn:PN}
P_{<N}=\sum_{k=0}^{N-1}|\psi_k\rangle\langle \psi_k|=\chi_{(-\infty,\hbar N)}(H_{\operatorname{h.o.}}),
\ee 
with atoms injected into the cavity to ascertain whether in the cavity there are less than $N$ photons ($N \geq 1$ is a chosen maximal photon number). 

The quantum Zeno dynamics consists in performing a series of $P_{<N}$-measurements in a fixed time interval $[0,t]$ at times $t_j= j \tau$, $j=0, \dots,n$, with period $\tau=t/n$. 
The intertwining of the continuous time evolutions and the projective measurements corresponds to the evolution operator
$$
V_n(t)= \left(P_{<N}e^{- \frac{i t}{n\hbar}\hat{p}}P_{<N}\right)^n.
$$
Observe that since $\operatorname{Ran} P_{<N}\subset D(\hat{p})=H^{1}(\R)$, we have~\cite{Facchi10} 
$$
\lim_{n\to \infty}V_n(t)=P_{<N}e^{-itH_N/\hbar},
$$
in the strong operator topology, uniformly for $t$ in compact subsets of $\R$, where the \emph{Zeno Hamiltonian}  $H_N$ is a rank-$N$ truncation of $\hat{p}$:
\begin{align}
H_N&=P_{<N}\hat{p}P_{<N} \nonumber\\
&=\chi_{(-\infty,\hbar N)}(H_{\operatorname{h.o.}})\,\hat{p}\,\chi_{(-\infty,\hbar N)}(H_{\operatorname{h.o.}}) \label{eqn:HN}.
\end{align}

Hence the QZD establishes a sort of `hard wall' in the Hilbert space: the state of the system evolves unitarily within the $N$-dimensional \emph{Zeno subspace} spanned by states with at most $(N-1)$ photons, $\psi_0,\dots,\psi_{N-1}$. This hard wall prevents the state to escape from the Zeno subspace and induces remarkable features in the quantum evolution~\cite{Raimond10,Raimond12}.  

The question addressed in this paper is: 
\emph{What is the semiclassical limit of the Zeno Hamiltonian $H_N$ and of its corresponding quantum dynamics?}

\subsection{Semiclassical limit of the Zeno Hamiltonian}\label{sect:zenoham}

Semiclassical theory concerns the asymptotic analysis for vanishing Planck constant ($\hbar\to0$) of operators and vectors,  with the ultimate goal of understanding the quantum-to-classical transition. It is therefore convenient to use a phase space description of quantum mechanics where  operators are represented by functions on the classical phase space (called Weyl symbols), states are described by quasi-probability distributions (called Wigner functions), and the noncommutative product of operators is mapped in a twisted convolution product of symbols (called Moyal product), see e.g.~\cite{folland, robert}. 

If we describe the QZD in the phase space, in the semiclassical limit $N\to\infty$, $\hbar\to0$ with the product $\hbar N=\mu$ kept fixed, we expect the motion to be confined in the classically allowed region. The level sets of the classical harmonic oscillator 
\be
\mathfrak{h}_{\operatorname{h.o.}}(x,p)=\frac{1}{2}\left(p^2+x^2\right)
\ee 
are circles centered at the origin of the phase space $\R_x\times\R_p$ . In qualitative terms, the hard wall can be viewed in the phase space as a circle with a radius $\propto\sqrt{\hbar N}$. In the limit, the corresponding classically allowed region is the disk 
$$
D:=\{(x,p)\in\R^2\colon \mathfrak{h}_{\operatorname{h.o.}}(x,p) < \mu\}=\{p^2+x^2< 2\mu\},$$ 
whose boundary
$$\partial D:=\{(x,p)\in\R^2\colon \mathfrak{h}_{\operatorname{h.o.}}(x,p)= \mu\}=\{p^2+x^2= 2\mu\}$$
is the circle  of radius $\sqrt{2\mu}$. This is what Raimond \emph{et al.}~\cite{Raimond10,Raimond12} called the `exclusion circle': it separates $D$  from the classically forbidden region where $\mathfrak{h}_{\operatorname{h.o.}}>\mu$.

\begin{figure}[t]
	\centering
	\includegraphics[width=.65\textwidth]{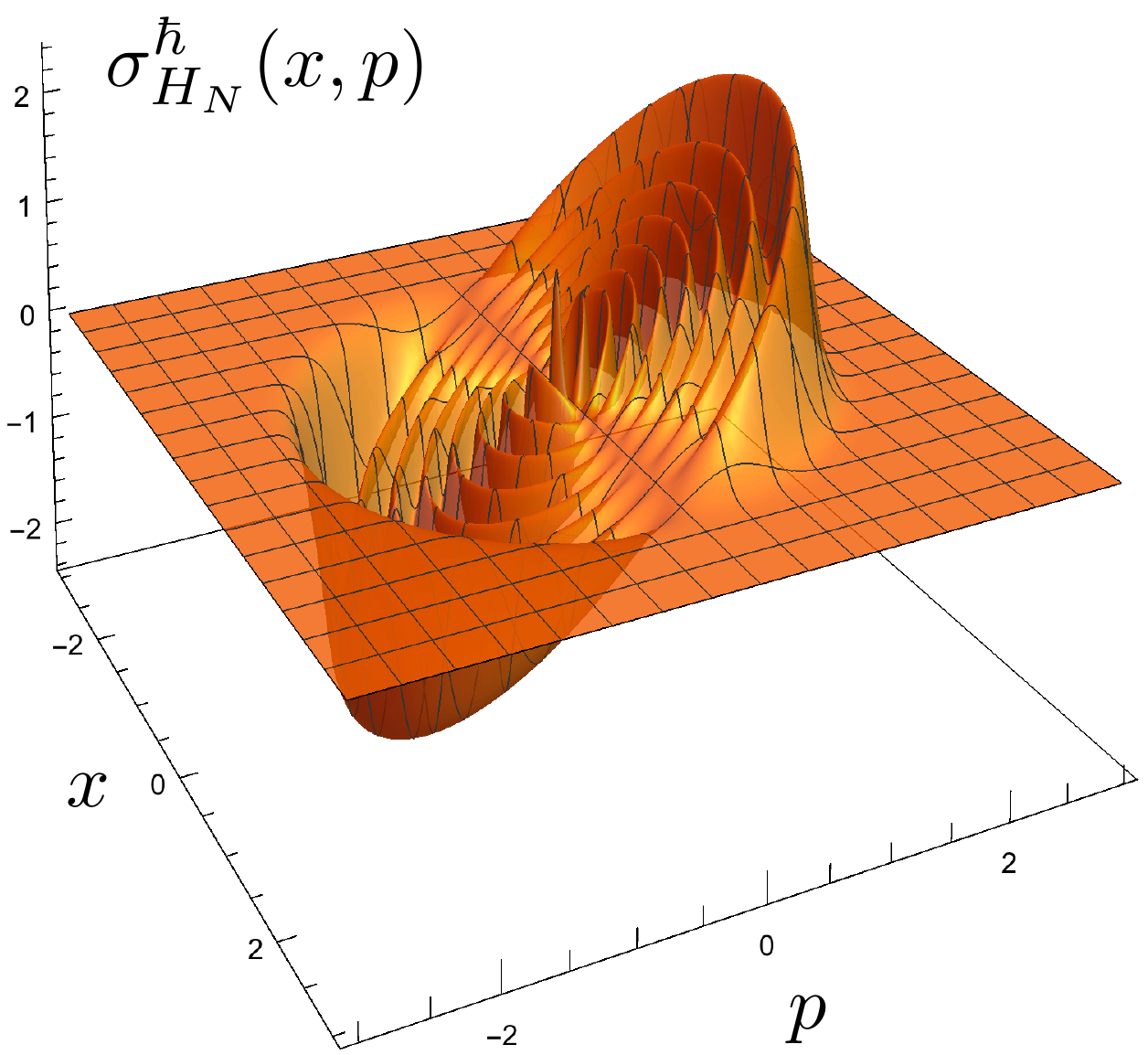}
	\caption{Plot of the symbol $\sigmaHc$ in the phase plane $(x,p)$. Here $N=17$ and $\mu=2$. Already for such a small value of $N$, the graph of the symbol resembles a (rippled) tilted coin in the disk $D$ and zero outside.}
	\label{fig:symbolQ}
\end{figure}

Let $\chi_D(x,p)=\chi_{(-\infty,\mu)}\big(\mathfrak{h}_{\operatorname{h.o.}}(x,p)\big)$ be the characteristic function of the disk $D$.
The first main result of the paper is the identification of the limit of the Weyl symbols $\sigmaPc(x,p)$ and $\sigmaHc(x,p)$ of the projection operator $P_{<N}$ and the Zeno Hamiltonian $H_N$, respectively. (The  definition of the Weyl symbol of an operator is given in Definition~\ref{def:Weyl}.) 

 \begin{thm}[Weak convergence of the symbols]\label{thm:convergenceAprime_intro} Set $\mu>0$. Then,
 \begin{align}
\label{eq:asymp_sigma_P_<N}
\lim_{\substack{N\to\infty, \hbar\to0\\\hbar N=\mu}}\int\limits_{\R_x\times\R_p}\left[\sigmaPc(x,p)-\chi_{D}(x,p)\right]\varphi(x,p)dxdp=0,\\
\label{eq:asymp_sigma_H_N}
\lim_{\substack{N\to\infty, \hbar\to0\\\hbar N=\mu}}\int\limits_{\R_x\times\R_p}\left[\sigmaHc(x,p)-p\chi_{D}(x,p)\right]\varphi(x,p)dxdp=0,
\end{align}
for all $\varphi\in\AA$. 
\end{thm}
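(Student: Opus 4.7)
The plan is to prove both limits by approximating the sharp spectral cutoff $\chi_{(-\infty,\mu)}$ from above and below by smooth cutoffs and invoking the semiclassical functional calculus for the harmonic oscillator. Fix $\varphi\in\AA$ and, for small $\varepsilon>0$, pick $f_\varepsilon^\pm\in C_c^\infty(\R)$ with $f_\varepsilon^-\leq\chi_{(-\infty,\mu)}\leq f_\varepsilon^+$, each agreeing with the step function outside an $\varepsilon$-neighbourhood of $\mu$. For any such smooth $f$, standard semiclassical functional calculus (for instance via the Helffer--Sj\"ostrand formula) produces the asymptotic expansion
\ben
\sigma_{f(H_{\operatorname{h.o.}})}^{\hbar}(x,p)=f(\mathfrak{h}_{\operatorname{h.o.}}(x,p))+O(\hbar),
\een
valid in a topology fine enough that $\int\sigma_{f(H_{\operatorname{h.o.}})}^{\hbar}\,\varphi\,dxdp\to\int f(\mathfrak{h}_{\operatorname{h.o.}})\,\varphi\,dxdp$ as $\hbar\to 0$.

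The quantitative ingredient that lets one replace $P_{<N}$ by $f_\varepsilon^{\pm}(H_{\operatorname{h.o.}})$ is a Hilbert--Schmidt estimate on the error $R_\varepsilon:=P_{<N}-f_\varepsilon^+(H_{\operatorname{h.o.}})$: since $\chi_{(-\infty,\mu)}-f_\varepsilon^+$ is supported in an $\varepsilon$-window around $\mu$ and the spectrum of $H_{\operatorname{h.o.}}$ is the arithmetic progression $\{\hbar(k+1/2)\}_{k\geq 0}$ of spacing $\hbar$, at most $O(\varepsilon/\hbar)$ eigenvalues contribute and $\|R_\varepsilon\|_{\mathrm{HS}}^2\leq C\varepsilon/\hbar$. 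Combined with the Plancherel identity $\|\sigma_A\|_{L^2}^2=2\pi\hbar\,\|A\|_{\mathrm{HS}}^2$ for the Weyl transform, Cauchy--Schwarz yields
\ben
\left|\int\big(\sigmaPc-\sigma^{\hbar}_{f_\varepsilon^+(H_{\operatorname{h.o.}})}\big)\varphi\,dxdp\right|\leq C\sqrt{\varepsilon}\,\|\varphi\|_{L^2},
\een
uniformly in $\hbar$ and $N$ with $\hbar N=\mu$. Sandwiching against $f_\varepsilon^\pm$, sending $\hbar\to0$ first and then $\varepsilon\to 0$ closes the proof of \eqref{eq:asymp_sigma_P_<N}.

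For \eqref{eq:asymp_sigma_H_N} I would rerun the sandwich with the operators $A_\varepsilon^\pm:=f_\varepsilon^\pm(H_{\operatorname{h.o.}})\,\hat p\,f_\varepsilon^\pm(H_{\operatorname{h.o.}})$. The composition formula for $\hbar$-pseudodifferential operators (principal symbols multiply, with corrections of order $\hbar$) gives $\sigma^{\hbar}_{A_\varepsilon^\pm}=p\,f_\varepsilon^\pm(\mathfrak{h}_{\operatorname{h.o.}})^2+O(\hbar)$ when paired with $\varphi\in\AA$, and as $\varepsilon\to 0$ the main term converges to $p\chi_D$ by dominated convergence. For the remainder $H_N-A_\varepsilon^\pm=R_\varepsilon\hat p\,P_{<N}+f_\varepsilon^\pm(H_{\operatorname{h.o.}})\hat p\,R_\varepsilon$, the unbounded factor $\hat p$ is controlled via $\hat p^2\leq 2H_{\operatorname{h.o.}}$, giving $\|\hat p\,g(H_{\operatorname{h.o.}})\|_{\mathrm{op}}\leq\sqrt{2(\mu+\varepsilon)}$ for any $g$ supported in $(-\infty,\mu+\varepsilon]$. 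This upgrades the Hilbert--Schmidt bound to $\|H_N-A_\varepsilon^\pm\|_{\mathrm{HS}}\leq C\sqrt{\varepsilon/\hbar}$ and hence $\|\sigmaHc-\sigma^\hbar_{A_\varepsilon^\pm}\|_{L^2}^2=O(\varepsilon)$; the double limit concludes as before.

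The main technical obstacle is to make the two semiclassical expansions rigorous uniformly in the double scaling $\hbar\to 0,\ N\to\infty,\ \hbar N=\mu$, and valid in duality with the class $\AA$ of test functions of the statement. The Hilbert--Schmidt trick is what makes the sharp cutoff $\chi_{(-\infty,\mu)}$ accessible despite its discontinuity in the spectral parameter; the additional subtlety for \eqref{eq:asymp_sigma_H_N} is handling the unbounded factor $\hat p$ in the cross terms, which is possible because both $P_{<N}$ and $f_\varepsilon^\pm(H_{\operatorname{h.o.}})$ localise spectrally to the classically allowed energy window, forcing $\hat p$ to act as a bounded operator on their range.
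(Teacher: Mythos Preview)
Your route is genuinely different from the paper's. The paper never invokes semiclassical functional calculus or Hilbert--Schmidt estimates; it works directly on the kernel side. The Christoffel--Darboux formula collapses $K_N(u,v)$ to an expression in $\psi^\hbar_N,\psi^\hbar_{N-1}$ only, and the Plancherel--Rotach asymptotics of Hermite functions (Proposition~\ref{prop:Asymptotics_K_N}) give locally uniform convergence of $\hbar K_N(x+\hbar t,x+\hbar s)$ to the sine kernel. Testing against the dense subspace $\BB\subset\AA$ of functions with compactly supported $\FF_2\varphi$ then yields~\eqref{eq:asymp_sigma_P_<N} at rate $O(1/N)$, and the extension to all of $\AA$ uses the uniform bound $\|\sigmaPc\|_{\AA'}\le\hbar N$ of Proposition~\ref{prop:bound_symbols}. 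For~\eqref{eq:asymp_sigma_H_N} the paper does not rerun any sandwich: it shows directly (Lemma~\ref{lem:prod_symb}, Proposition~\ref{prop:bound_symbols}) that $\|\sigmaHc-p\,\sigmaPc\|_{\AA'}\le\hbar\sqrt{\mu/2}$, reducing the second limit to the first. Your approach is more robust---it would carry over to other confining potentials without explicit polynomial asymptotics---while the paper's gives explicit rates and, crucially, recycles the same kernel machinery to prove the finer edge result of Theorem~\ref{thm:convergenceAiry_intro}.

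There is one genuine gap you should repair. Your Cauchy--Schwarz step bounds the error by $C\sqrt{\varepsilon}\,\|\varphi\|_{L^2}$, but $\AA$ is \emph{not} contained in $L^2(\R_x\times\R_p)$: for instance $\varphi(x,p)=(1+|x|)^{-1/2}h(p)$ with $h\in\SS(\R)$ lies in $\AA$ (approximate $(1+|x|)^{-1/2}$ uniformly by $C_c^\infty$ functions) but not in $L^2$. So as written your argument only reaches $\varphi\in L^2\cap\AA$, and extending by density requires the uniform bound $\sup_N\|\sigmaPc\|_{\AA'}<\infty$, which you have not established. The cure is simple and in fact bypasses the Hilbert--Schmidt detour: since $R_\varepsilon=\sum_k c_k|\psi_k\rangle\langle\psi_k|$ with at most $O(\varepsilon/\hbar)$ nonzero coefficients, each $|c_k|\le 1$, the same elementary estimate as in the paper's Proposition~\ref{prop:bound_symbols} (Cauchy--Schwarz in the $x$-integral of $|\psi_k(x-\hbar y/2)\psi_k(x+\hbar y/2)|$) gives $\|\sigma^\hbar_{R_\varepsilon}\|_{\AA'}\le C\hbar\cdot(\varepsilon/\hbar)=C\varepsilon$ directly. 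With this in hand, together with $\sigma^\hbar_{f(H_{\operatorname{h.o.}})}-f(\mathfrak{h}_{\operatorname{h.o.}})\to 0$ in $\AA'$ (which you should state in $\AA'$ rather than in an unspecified ``fine enough'' topology; it follows e.g.\ from an $L^1$ bound on the remainder since $L^1\subset\AA'$), your scheme closes.
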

\begin{rem}
Here  $\AA$ is the space of test functions introduced by Lions and Paul~\cite{Lions93} as the completion of the smooth functions of compact support in the phase space $C^{\infty}_c(\R_x\times\R_p)$ under the norm
\be
\|\varphi\|_{\AA}:=\int_{\R} dy\sup_x\left|\FF_2 \varphi(x,y)\right|.
\ee
In this paper $\FF_2 \varphi$ denotes the partial Fourier transform of $\varphi$ in the second variable,
\be
\FF_2 \varphi(x,y):=\int_{\R} \varphi(x,p)e^{-ipy}dp.
\ee
\end{rem}
\begin{table}
\begin{tabular}{l|ll}
Quantum & Classical \\
 $N\in\N$& $\hbar\to0$, $N\to\infty$\\
  $\hbar>0$&with $\hbar N=\mu$ &  \\
 \hline\hline\\
$P_{<N}=\chi_{(-\infty,\hbar N)}(H_{\operatorname{h.o.}})$ & $\chi_{(-\infty,\mu)}(\mathfrak{h}_{\operatorname{h.o.}}(x,p))=\chi_{(-\infty,\sqrt{2\mu})}(\sqrt{x^2+p^2})$ \\   \\
$H_N=\chi_{(-\infty,\hbar N)}(H_{\operatorname{h.o.}})\,\hat{p}\,\chi_{(-\infty,\hbar N)}(H_{\operatorname{h.o.}})$ & $p\chi_{(-\infty,\mu)}(\mathfrak{h}_{\operatorname{h.o.}}(x,p))=p\chi_{(-\infty,\sqrt{2\mu})}(\sqrt{x^2+p^2})$   \\   \\
\end{tabular}
\caption{Summary of the operators and their semiclassical limits.}
\label{eq:tab_summary}
\end{table}
\begin{rem}
Theorem~\ref{thm:convergenceAprime_intro} makes precise the heuristic expectation that the symbol $\sigmaPc(x,p)$ of the projection operator converges to the characteristic function $\chi_{D}(x,p)$ of the classically allowed region, and  the symbol of the Zeno Hamiltonian $\sigmaHc(x,p)$ converges to $p\chi_{D}(x,p)$. The content of Theorem~\ref{thm:convergenceAprime_intro} is schematically summarised in Table~\ref{eq:tab_summary}.   
\end{rem}
\begin{rem}
A  plot of the Weyl symbol exhibits pronounced oscillations, also known as \emph{quantum ripples}~\cite{Bettelheim12}, in  $ D=\{\mathfrak{h}_{\operatorname{h.o.}}(x,p)<\mu\}$. If the oscillations are smoothed out, then the graph of $\sigmaHc$ is asymptotically close to a `tilted coin'. See Fig. ~\ref{fig:symbolQ} and Fig.~\ref{fig:symbolQ2}. 
\end{rem}

\begin{figure}[t]
	\centering
	\includegraphics[width=.65\textwidth]{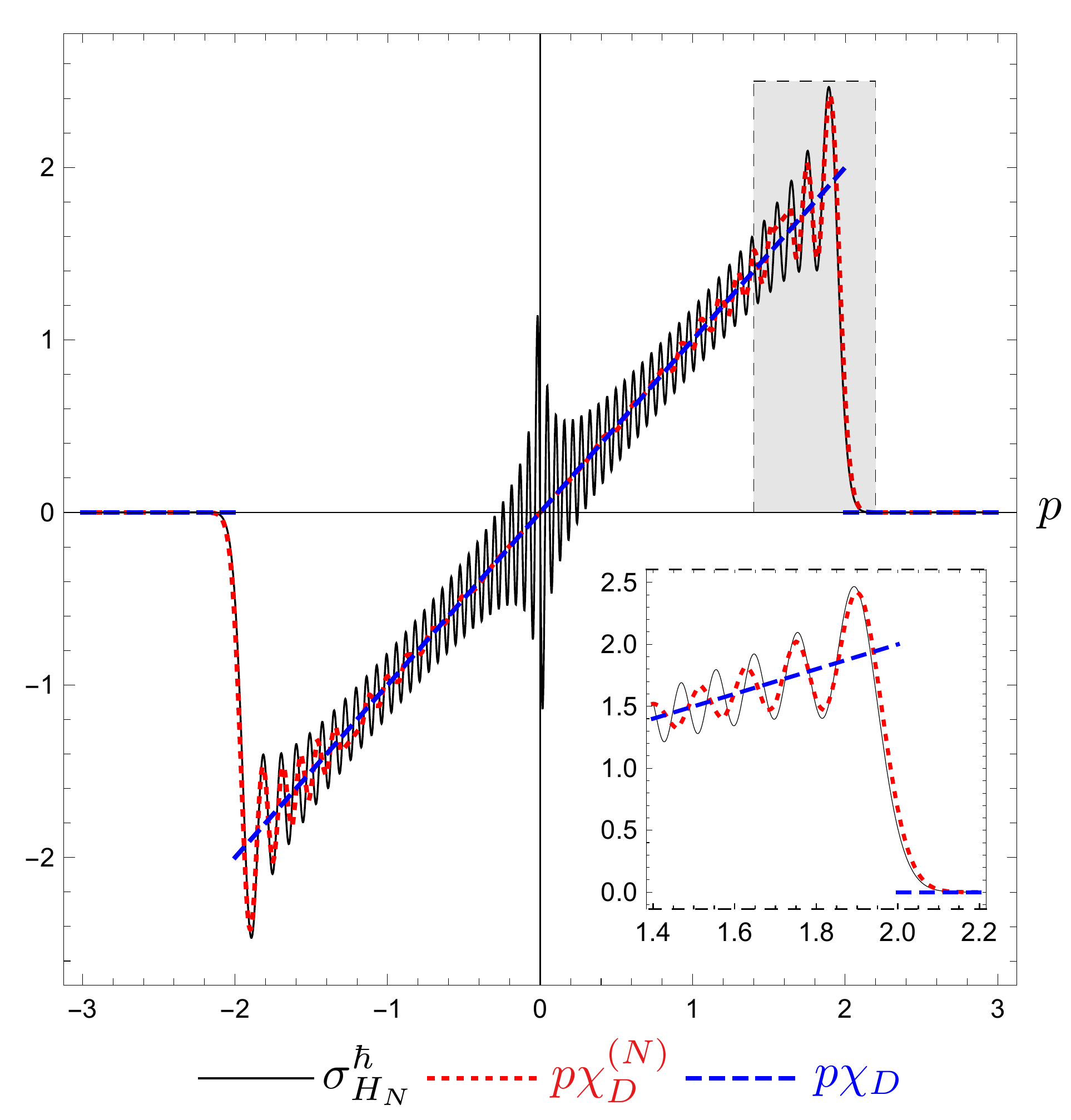}
	\caption{Comparison between the symbol of the Zeno Hamiltonian, $\sigmaHc$, and its semiclassical versions,  $p\chi_D^{(N)}$  and $p\chi_D$, as functions of $p$ with $x=0$ fixed. The inset (a zoom of the shaded area) show how $\chi_D^{(N)}$ better approximates the symbol $\sigmaHc$ near the edge. Here $N=57$, $\mu=2$.}
	\label{fig:symbolQ2}
\end{figure}

We see that at the boundary $\partial D$  the symbols $\sigmaPc$ and $\sigmaHc$ develop a jump, for large $N$. The second main result of the paper concerns a finer asymptotics of $\sigmaPc$ and $\sigmaHc$ near $\partial D$.
By zooming in at the edge $\partial D$, one sees that the symbols have nontrivial scaling limits related to the integrated  Airy function 
$$
\Ai_1(\xi):=\int_{\xi}^{+\infty} \Ai (u)\, du, \qquad \xi \in \R,
$$ 
see~\eqref{eq:Ai1}.  More precisely, set 
\be
\chi_{D}^{(N)}(x,p):=\Ai_1\left(\frac{(2N)^{\frac{2}{3}}}{\mu}\left(\mathfrak{h}_{\operatorname{h.o.}}(x,p)-\mu\right)\right), \qquad x,p \in \R. 
\ee 
It follows from~\eqref{eq:asymp_Ai1}  that $\chi_{D}^{(N)}$ is a sequence of rotational symmetric smooth functions on the phase space that approximate the characteristic function, 
$$
\lim_{N\to\infty}\chi_{D}^{(N)}(x,p)=\chi_D(x,p)
$$
for all $(x,p)\notin\partial D$. (On the boundary $\partial D$, $\chi_{D}^{(N)}=1/3$, for all $N$.)

We can now state our second main result.
\begin{thm}[Weak asymptotics at the boundary]\label{thm:convergenceAiry_intro}  Fix $\mu>0$. For all $g\in C^{\infty}_c(\R)$, 
\be
\label{eq:edge_asymp_K}
\lim_{\substack{N\to\infty, \hbar\to0\\\hbar N=\mu}}\int\limits_{\R_x\times\R_p} \left[\sigmaPc(x,p)-\chi_{D}^{(N)}(x,p)\right]\frac{1}{\hbar^{\frac{2}{3}}}g\left(\frac{x^2+p^2-2\mu}{\hbar^{\frac{2}{3}}}\right)dxdp=0.
\ee
and
\be
\label{eq:edge_asymp_Q}
\lim_{\substack{N\to\infty, \hbar\to0\\\hbar N=\mu}}\int\limits_{\R_x\times\R_p}\left[\sigmaHc(x,p)-p\chi_{D}^{(N)}(x,p)\right]\frac{1}{\hbar^{\frac{2}{3}}}g\left(\frac{x^2+p^2-2\mu}{\hbar^{\frac{2}{3}}}\right)dxdp=0.
\ee
\end{thm}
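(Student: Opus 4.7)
The plan is to treat the two equations separately: (\ref{eq:edge_asymp_Q}) will follow from a parity argument that makes both integrands vanish for every $\hbar>0$, while (\ref{eq:edge_asymp_K}) requires an honest edge (Plancherel--Rotach) asymptotic analysis of the Laguerre sum that represents $\sigmaPc$.

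For (\ref{eq:edge_asymp_Q}) I would first observe the parity structure. Let $U$ be the parity operator, $(U\psi)(x)=\psi(-x)$. Each Hermite eigenfunction satisfies $U\psi_k=(-1)^k\psi_k$, so $UP_{<N}U=P_{<N}$, and since $U\hat pU=-\hat p$ we also have $UH_NU=-H_N$. Translated to Weyl symbols this gives $\sigmaHc(-x,-p)=-\sigmaHc(x,p)$, and likewise $p\chi_D^{(N)}$ is parity-odd because $\chi_D^{(N)}$ depends only on $x^2+p^2$. The test function $\hbar^{-2/3}g((x^2+p^2-2\mu)/\hbar^{2/3})$ is parity-even, hence the integrand in (\ref{eq:edge_asymp_Q}) is odd on the parity-symmetric domain $\R_x\times\R_p$ and the integral vanishes for every $\hbar>0$, not just in the limit.

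For (\ref{eq:edge_asymp_K}) I would exploit the rotational symmetry in phase space. Since $P_{<N}=\chi_{(-\infty,\hbar N)}(H_{\operatorname{h.o.}})$ commutes with the rotation generator $H_{\operatorname{h.o.}}$, both $\sigmaPc$ and $\chi_D^{(N)}$ depend only on $r^2=x^2+p^2$; write $\sigmaPc(x,p)=F_N^\hbar(r^2)$. Passing to polar coordinates $(r,\theta)$ and then to the edge variable $u=(r^2-2\mu)/\hbar^{2/3}$, one has $dx\,dp=\tfrac{1}{2}\hbar^{2/3}\,du\,d\theta$, which cancels the $\hbar^{-2/3}$ prefactor; the $\theta$-integration contributes $2\pi$, and using $\hbar N=\mu$ the argument of $\Ai_1$ in $\chi_D^{(N)}$ becomes $u/(2\mu)^{1/3}$, so (\ref{eq:edge_asymp_K}) reduces to
\[
\lim_{\substack{N\to\infty\\ \hbar N=\mu}}\pi\int_\R\left[F_N^\hbar(2\mu+\hbar^{2/3}u)-\Ai_1\!\left(\frac{u}{(2\mu)^{1/3}}\right)\right]g(u)\,du=0.
\]
I would then invoke the classical Laguerre representation of the Wigner functions of the harmonic-oscillator eigenstates, namely $F_N^\hbar(r^2)=2e^{-r^2/\hbar}\sum_{k=0}^{N-1}(-1)^kL_k(2r^2/\hbar)$, and apply soft-edge Plancherel--Rotach asymptotics — most naturally via a contour-integral representation of the partial sum followed by steepest descent near the coalescing saddle at $z\simeq 4N$ — to obtain uniform convergence of $F_N^\hbar(2\mu+\hbar^{2/3}u)$ to $\Ai_1(u/(2\mu)^{1/3})$ on compact subsets of $\R$, together with an $N$-uniform dominating bound on $\operatorname{supp}g$. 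Dominated convergence then closes the argument. The main obstacle is precisely this uniform edge asymptotic for the alternating Laguerre sum: once the pointwise limit and a suitable dominating bound are established, the radial reduction, the change of variables, and the dominated-convergence step are all routine.
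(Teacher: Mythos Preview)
Your parity argument for \eqref{eq:edge_asymp_Q} is correct and, in fact, sharper than what the paper does: since $\sigmaHc(x,p)=4p\,e^{-(x^2+p^2)/\hbar}\sum_j(-1)^jL_j^{(1)}(2(x^2+p^2)/\hbar)$ is $p$ times a radial function (Proposition~\ref{prop:symbols_finiteN}), and the same is true of $p\chi_D^{(N)}$, both integrals against a radial test function vanish separately for every $N$. The paper instead deduces \eqref{eq:edge_asymp_Q} from \eqref{eq:edge_asymp_K} via Lemma~\ref{lem:prod_symb}, which controls $\sigmaQc-p\sigmaKc$; your route bypasses this entirely, though it also exposes that, for this particular class of radial test functions, \eqref{eq:edge_asymp_Q} carries no asymptotic content beyond \eqref{eq:edge_asymp_K}.

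For \eqref{eq:edge_asymp_K} your radial reduction and change of variables are correct (and since $g$ is compactly supported, uniform convergence on compacts in $u$ suffices---no separate dominating bound is needed). The substantive difference is in how the pointwise edge limit $F_N^\hbar(2\mu+\hbar^{2/3}u)\to\Ai_1(u/(2\mu)^{1/3})$ is obtained. You propose a direct steepest-descent analysis of the alternating Laguerre partial sum; the paper instead writes $\sigmaPc$ as the inverse Fourier transform in $p$ of the rescaled Christoffel--Darboux kernel, invokes the edge convergence $K_{N,\sqrt{2\mu},\hbar^{2/3}}\to c_\mu K_{\Ai}$ (Proposition~\ref{prop:Asymptotics_K_N2}, from Hermite Plancherel--Rotach), and couples this with an integrable tail bound on the antidiagonal (Lemma~\ref{lem:tail_K}) to justify dominated convergence in the Fourier integral. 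The advantage of the paper's route is that the CD~$\to$~Airy kernel limit is a standard, well-documented result, so the technical work is localized in the tail estimate; your route is more direct but requires establishing the edge asymptotic of $e^{-z/2}\sum_{k<N}(-1)^kL_k(z)$ at $z=4N+O(N^{1/3})$, which is not an off-the-shelf statement (though it is certainly provable by the contour method you sketch, or indeed by translating back through the Hermite/CD representation). Either approach closes the argument.
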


\begin{rem}
In order to zoom at $\partial D$, we need to integrate the symbols $\sigmaPc$ and $\sigmaHc$ against (sequences of) compactly supported test functions that concentrate around $\partial D$. Since $\partial D$ is invariant under rotations, without loss of generality we consider test functions that are also rotational symmetric. The idea is to consider, for $g\in C^{\infty}_c(\R)$, the rescaling $\epsilon^{-2}g(\epsilon^{-2}(x^2+p^2-2\mu))$ that is nonzero in a region of order $\mathrm{O}(\epsilon)$ within the boundary $\partial D$. The blow-up scale that gives rise to a nontrivial limit is $\epsilon=\hbar^{\frac{1}{3}}$. The reason for this choice will emerge in the following (see Section~\ref{sect:scaling}).
Note that the space of test functions $\AA$ in Theorem~\ref{thm:convergenceAprime_intro} does not depend on the details of the model. On the contrary, in Theorem~\ref{thm:convergenceAiry_intro} we integrate the symbols $\sigmaPc$ and $\sigmaHc$ against test functions that concentrate around $\partial D$ in a suitable way. 
\end{rem}

\begin{rem}\label{rempoint}
The limits in Theorems~\ref{thm:convergenceAprime_intro} and~\ref{thm:convergenceAiry_intro} do \emph{not} hold pointwise, in general. For instance, 
it is easy to show (by using the parity of the harmonic oscillator eigenfunctions) that
\begin{align}
\sigmaPc(0,0)
=1+(-1)^{N+1}.
\label{eq:sigma_in_0}
\end{align}
The reader is invited to have a glance at Fig.~\ref{fig:symbolQ2}. Inside the disk $D$, the symbols oscillate with frequency of order $\operatorname{O}(N)$, while in the classically forbidden region $D^c= (\R_x \times \R_p) \setminus D$ the symbols are exponentially suppressed. The monotonic behaviour outside the disk suggests that for $(x,p)\in D^c$ the convergence to the limits may hold in a stronger sense. In fact, a slight adaptation of the proof of Theorem~\ref{thm:convergenceAiry_intro} shows that outside the disk, $\sigmaPc$ and $\sigmaHc$  converge pointwise to the limit symbols. 
\end{rem}

\begin{thm}[Pointwise asymptotics in the classically forbidden region]
 \label{thm:edge_out_pointwise} Fix $\mu>0$.  For all $(x,p)\in D^c$,
\be
\label{eq:edge_pointwise1}
\lim_{\substack{N\to\infty, \hbar\to0\\\hbar N=\mu}} \left[\sigmaKc(x,p)-\chi_{D}^{(N)}(x,p)\right]=0,
\ee
and
\be
\label{eq:edge_pointwise2}
\lim_{\substack{N\to\infty, \hbar\to0\\\hbar N=\mu}}\left[\sigmaQc(x,p)-p\chi_{D}^{(N)}(x,p)\right]=0.
\ee
\end{thm}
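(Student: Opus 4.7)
My plan is to leverage the machinery used in the proof of Theorem~\ref{thm:convergenceAiry_intro} and to exploit the crucial feature of the forbidden region, highlighted in Remark~\ref{rempoint}, that outside the disk the symbols decay monotonically without the oscillations present inside $D$. Consequently the averaging against test functions required in the weak statement becomes unnecessary, and the comparison can be upgraded to a pointwise one.

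I would begin from the closed-form expression of $\sigmaPc$ in terms of Hermite-function Wigner functions,
\ben
\sigmaPc(x,p) = 2\, e^{-r^2/\hbar}\sum_{k=0}^{N-1} (-1)^k L_k\!\left(\tfrac{2r^2}{\hbar}\right), \qquad r^2=x^2+p^2,
\een
together with an analogous representation for $\sigmaHc$ arising from $H_N = P_{<N}\,\hat p\,P_{<N}$ through the Moyal product. A Christoffel--Darboux-type telescoping collapses the alternating Laguerre sum to a combination of $L_{N-1}(2r^2/\hbar)$ and $L_N(2r^2/\hbar)$.

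For $(x,p)\in D^c$ one has $2r^2/\hbar = 2Nr^2/\mu > 4N$, placing the Laguerre polynomial strictly outside its spectral interval $[0,4N]$. The Plancherel--Rotach asymptotics in the exterior regime then provide a uniform Airy description,
\ben
e^{-r^2/\hbar}\, L_N(2r^2/\hbar) \;\sim\; C_N\, \Ai\!\left(\tfrac{(2N)^{2/3}}{\mu}\bigl(\tfrac{r^2}{2}-\mu\bigr)\right),
\een
with a relative error that is strictly smaller than the (already exponentially small) leading Airy tail. After the telescoping, the cumulative Airy contribution matches precisely the definition of $\chi_D^{(N)}$ via the integrated Airy function $\Ai_1$, yielding~\eqref{eq:edge_pointwise1}. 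Estimate~\eqref{eq:edge_pointwise2} follows identically, the multiplicative $p$ arising from the Moyal structure of $\sigmaHc$ and the quantum corrections being of higher order in $\hbar$, hence negligible against the exponentially small leading Airy tail.

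The main obstacle I foresee is the quantitative control of the Plancherel--Rotach error uniformly throughout $D^c$, and sharp enough to beat the exponentially small leading term: both leading order and error are vanishingly small, and one needs a true asymptotic, not merely an upper bound. The required estimates are available either from the Riemann--Hilbert/steepest descent analysis for the Laguerre orthogonal polynomials or from the classical Perron-type uniform bounds of Szeg\H{o}. Once this quantitative asymptotic is secured, the remaining steps are bookkeeping.
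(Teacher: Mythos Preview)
Your route through the Laguerre representation differs from the paper's and, as written, has two concrete gaps. First, the claimed ``Christoffel--Darboux-type telescoping'' of $\sum_{j=0}^{N-1}(-1)^jL_j(2r^2/\hbar)$ to a two-term combination of $L_{N-1}$ and $L_N$ is not a standard identity: the Christoffel--Darboux formula collapses sums of the form $\sum_k L_k(x)L_k(y)$, not an alternating single-variable sum, and the identity $\sum_{k=0}^{n}L_k=L_n^{(1)}$ does not survive the sign alternation. Without this reduction the subsequent Plancherel--Rotach step cannot be applied as you describe. Second, even granting some two-term reduction, the appearance of the \emph{integrated} Airy function $\Ai_1$ from pointwise $\Ai$-asymptotics of individual $L_N$'s is asserted but not explained. (Incidentally, your final worry about the error having to ``beat the exponentially small leading term'' asks for more than the theorem demands: for fixed $(x,p)\in D^c$ both $\sigmaPc$ and $\chi_D^{(N)}$ tend to zero, so only their difference, not their ratio, needs to be controlled.)

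The paper avoids these issues by staying in the kernel picture. By the rotational invariance of $\sigmaPc$ one may move any $(x,p)\in D^c$ to $(\sqrt{2\mu},z)$ with $z\in\R$ --- and this is precisely where the hypothesis $(x,p)\notin D$ enters, since inside $D$ one would be forced to take $z$ imaginary and the argument below would break. Then both $\sigmaPc(\sqrt{2\mu},z)$ and $\chi_D^{(N)}(\sqrt{2\mu},z)$ are inverse Fourier transforms (in a rescaled variable) of $K_{N,\sqrt{2\mu},\hbar^{2/3}}(-y/2,y/2)$ and $c_{\mu} K_{\Ai}(-c_{\mu} y/2,c_{\mu} y/2)$ respectively; the unimodular phase drops under absolute values, and the resulting $L^1$-difference of kernels is handled by an $\epsilon/3$ splitting, using the locally uniform edge convergence of Proposition~\ref{prop:Asymptotics_K_N2} on $[-L,L]$ together with the uniform integrable tail bound of Lemma~\ref{lem:tail_K}. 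The second limit~\eqref{eq:edge_pointwise2} then follows from Lemma~\ref{lem:prod_symb}. No new asymptotic input beyond what was already assembled for Theorem~\ref{thm:convergenceAiry_intro} is required.
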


\subsection{Semiclassical limit of the quantum Zeno dynamics}
\label{sec:Hamiltonian}

The quantum dynamics in  phase space is ruled by two elements: the Weyl symbol of the Zeno Hamiltonian  $\sigmaQc$ and the Moyal bracket (that does depend on $\hbar$)~\cite{folland, robert}. Hence, the semiclassical limit of the dynamics should encompass a simultaneous $\hbar\to0$ limit of the symbol (the generator of the dynamics) \emph{and} the Moyal structure.

By Theorem~\ref{thm:convergenceAprime_intro}, the symbol $\sigmaQc$  of the Zeno Hamiltonian converges as $\hbar\to0$, $N\to\infty$, with $\hbar N = \mu >0$, to
$p\chi_D(x,p)$. Moreover,  the Moyal bracket has an asymptotic expansion in powers of $\hbar$ whose leading term (zero-th order) is the classical Poisson bracket. Hence, it is reasonable to expect that the limiting dynamics is well described by the Hamiltonian evolution (i.e. Poisson) in  phase space where the Hamiltonian is the limit symbol $p\chi_D(x,p)$.

However, in this na\"ive approach we immediately face an obstruction: the symbol $p\chi_D(x,p)$ is \emph{not} smooth, and hence it is not possible to write Hamilton's equations of motion!
If we insist in writing, formally, Hamilton's equations, we get
\be
\begin{cases}
\displaystyle\dot{x}=\dfrac{\partial }{\partial p}\left(p\chi_D(x,p)\right)=\chi_{[0,\sqrt{2\mu})}(r)+p\delta_{\sqrt{2\mu}}(r)\dfrac{p}{r},\\
\displaystyle\dot{p}=-\dfrac{\partial}{\partial x}\left(p\chi_D(x,p)\right)=-p\delta_{\sqrt{2\mu}}(r)\dfrac{x}{r},
\end{cases}\qquad (\diamondsuit)
\ee 
where $r=\sqrt{x^2+p^2}$. The Dirac delta $\delta_{\sqrt{2\mu}}(r)$ arises as distributional derivative of the step function. We stress again that the above expressions are formal: the Hamiltonian is discontinuous at  $\partial D$, and its  vector field in  $(\diamondsuit)$ is singular.

We can now look at the corresponding phase portrait. First, the Hamiltonian vector field is zero outside the closure of the disk $D$. Thus, all points there are equilibrium points. 
If the particle is in $D$, then  the equation of motions are $\dot{x}=1,\dot{p}=0$, and the particle moves with constant momentum
$$
x(t)=x_0+t,\qquad p(t)=p_0.
$$
It is thus proceeding at a constant velocity along the $x$-axis. When it hits the boundary $\partial D$, the evolution is given by the singular contributions, proportional to the delta functions: 
$\dot{x}=p\delta_{\sqrt{2\mu}}(r)p/r,\dot{p}=-p\delta_{\sqrt{2\mu}}(r)x/r$. Heuristically, these equations would correspond to a field tangential to the boundary of $D$ that yields a motion along the circle $\partial D$ at `infinite' speed. The particle reappears on the other side of the boundary (with the same momentum $p = p_0$) and resumes its motion along the $x$-axis at a constant velocity. The collision at the edge $\partial D$ thus realizes, in this semiclassical picture, a reflection around the $p$-axis of the phase space, transforming $\left(\sqrt{2\mu-p_0^2},p_0\right)$ into $\left(-\sqrt{2\mu-p_0^2},p_0\right)$.

An interesting interpretation of the semiclassical limit of the Zeno dynamics is as follows. In the limit dynamics, the points $(x,p)$, $(-x,p)$ on the cirle $\partial D\subset\R_x\times\R_p$ are identified. Hence, one can think of the $N\to\infty,\hbar\to0$ limit, with $\hbar N = \mu$, as yielding a \emph{change of topology}: the dynamics on the disk becomes a motion on the sphere!
We emphasise again that all this is formal, although very close to what was observed in~\cite{Raimond10,Raimond12}, and called `phase inversion mechanism'. The function  $p\chi_D(x,p)$ is \emph{not} smooth and therefore it is not the generator of a classical Hamiltonian dynamics.

\begin{figure}[t]
	\centering
		\begin{subfigure}[t]{0.31\columnwidth}
		\centering
		\includegraphics[width=\textwidth]{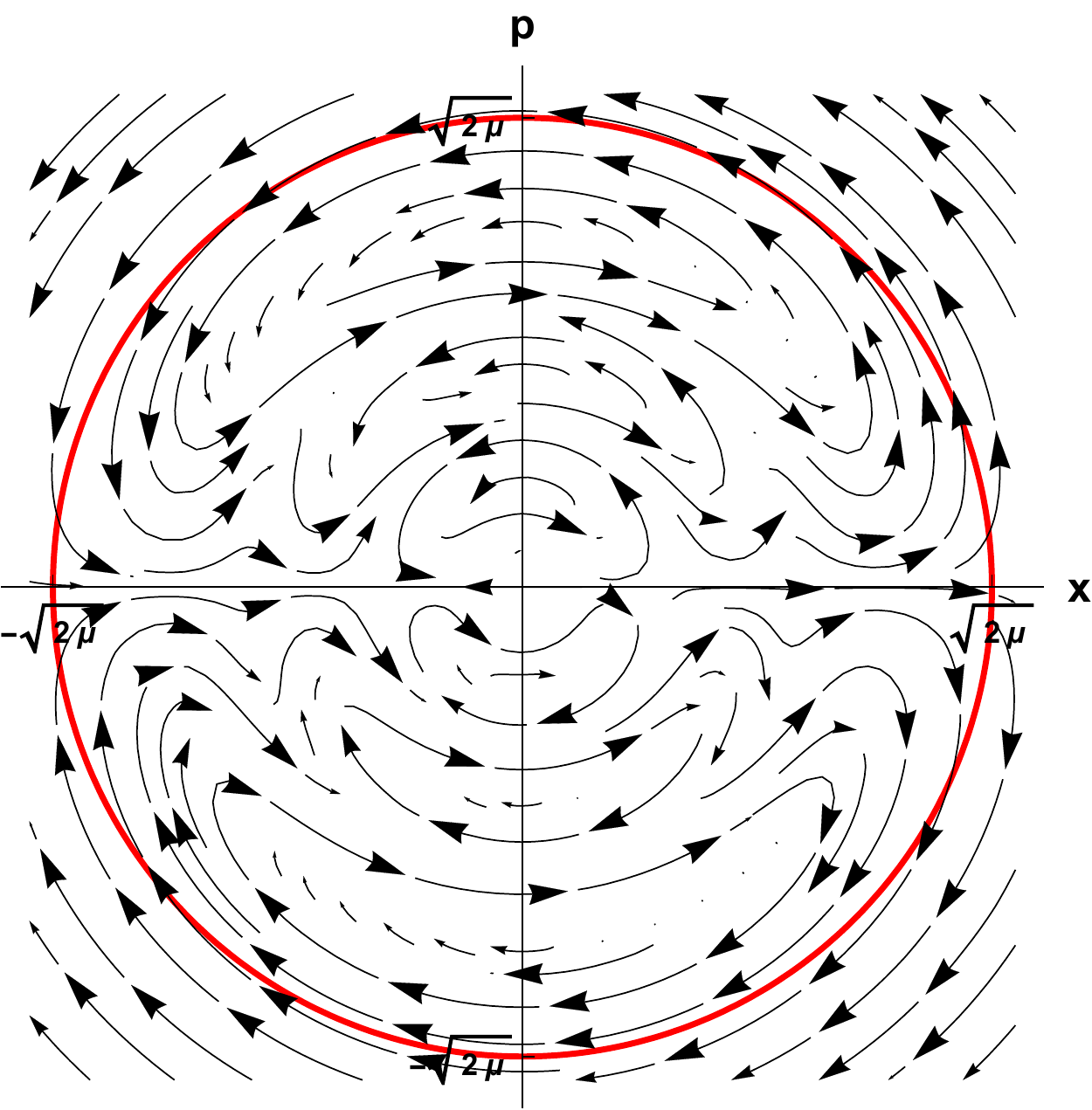}
		\caption{Trajectories generated by the Weyl symbol $\sigmaHc(x,p)$ with $N=7$.}
		\label{fig:1}
	\end{subfigure}\quad
	\begin{subfigure}[t]{0.31\columnwidth}
		\centering
		\includegraphics[width=\textwidth]{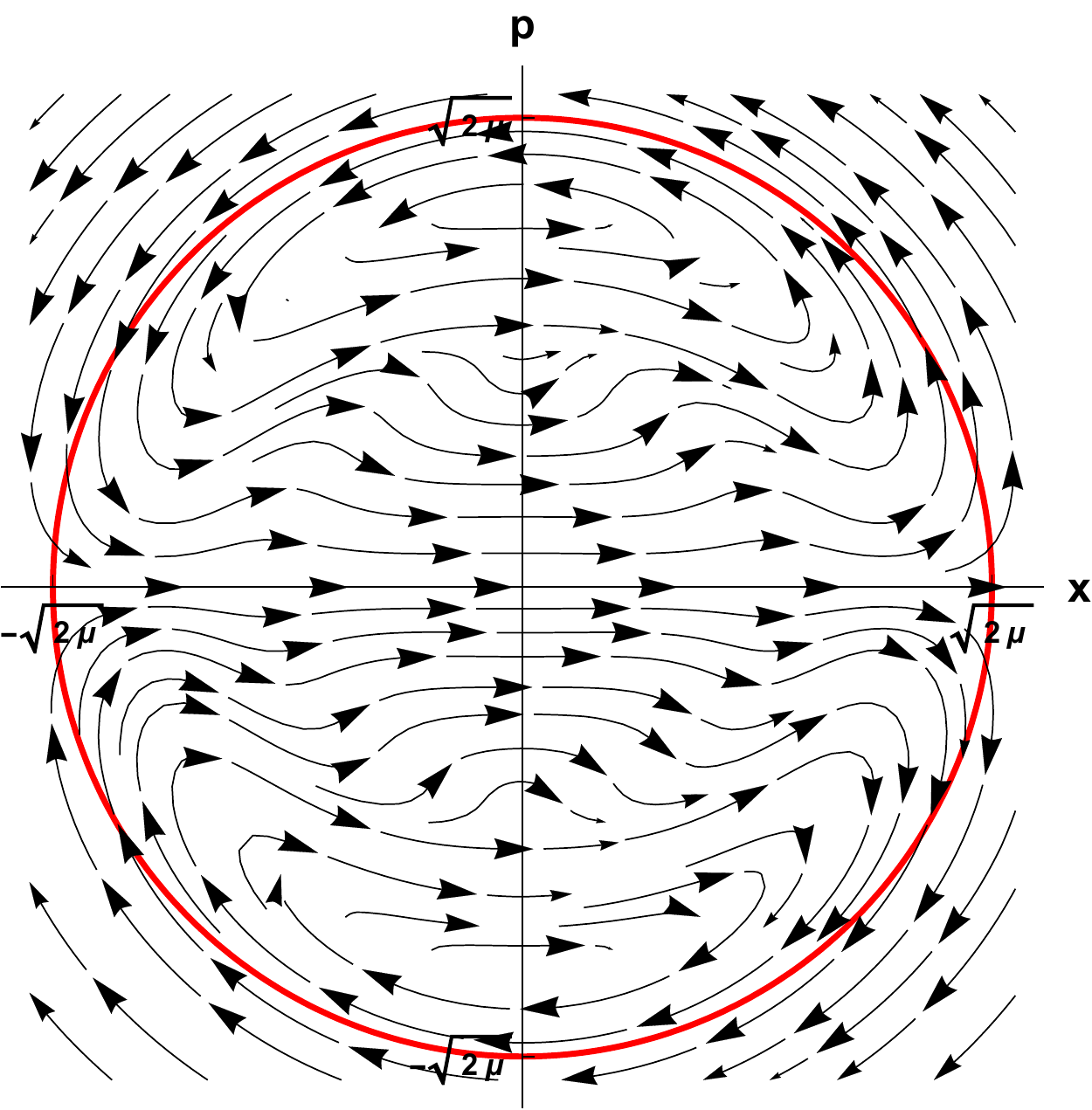}
		\caption{Trajectories generated by the smooth function $p\chi_D^{(N)}(x,p)$ with $N=7$. A particle near $\partial D$ moves at speed $\propto N^{\frac{2}{3}}$. }
		\label{fig:2}
	\end{subfigure}\quad 
		\begin{subfigure}[t]{0.31\columnwidth}
			\centering
		\includegraphics[width=\textwidth]{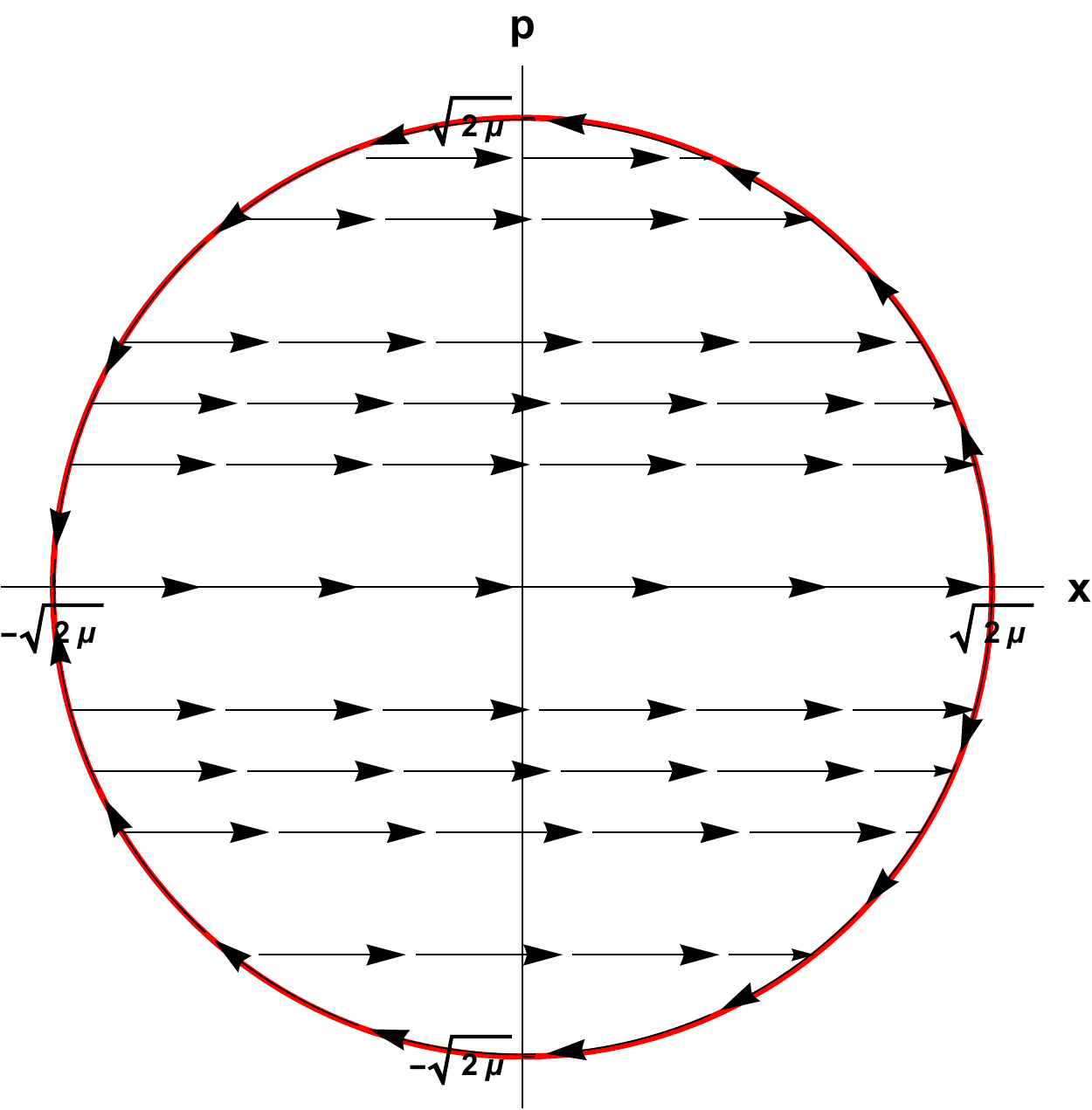}
		\caption{Trajectories generated by the discontinuous function $p\chi_D(x,p)$. The motion on $\partial D$ is at `infinite' speed. }
		\label{fig:3}
	\end{subfigure}
	\caption{Phase portraits for the Hamiltonian dynamics. The red solid line is the boundary $\partial D$ of the disk. Here $\mu=2$.} 
	\label{fig:trajectories}
\end{figure}

We know, however, by Theorem~\ref{thm:convergenceAiry_intro},  that the symbol $\sigmaQc(x,p)$ is asymptotically close to a smoothed version 
\be
p\chi_D^{(N)}(x,p)=p\Ai_1\left(\frac{(2N)^{\frac{2}{3}}}{2\mu}\left(r^2-2\mu\right)\right).
\ee 
For each $N$, it makes sense to consider the Hamiltonian system generated by  $p\chi_D^{(N)}(x,p)$, 
\be
\begin{cases}
\displaystyle\dot{x}=\dfrac{\partial }{\partial p}\left(p\chi_D^{(N)}(x,p)\right)\\
\displaystyle\dot{p}=-\dfrac{\partial}{\partial x}\left(p\chi_D^{(N)}(x,p)\right)
\end{cases} \qquad (\clubsuit).
\ee 
This is a family of well-defined Hamilton equations and we can expect, for large $N$, the solutions of $(\clubsuit)$ to be `close' to the sought semiclassical limiting dynamics.

We give here a sketch of an argument showing that for large $N$, the solutions of $(\clubsuit)$ behave as the formal solutions of  the singular problem $(\diamondsuit)$.
The equations of motions from $(\clubsuit)$ are
\begin{gather}
\displaystyle\dot{x}=\chi^{(N)}_{[0,\sqrt{2\mu})}(r)+p\delta^{(N)}_{\sqrt{2\mu}}(r)\dfrac{p}{r}, \qquad
\displaystyle\dot{p}=-p\delta^{(N)}_{\sqrt{2\mu}}(r)\dfrac{x}{r},
\end{gather}
where
\be
\chi^{(N)}_{[0,\sqrt{2\mu})}(r):=\Ai_1\left(\frac{(2N)^{\frac{2}{3}}}{2\mu}\left(r^2-2\mu\right)\right),\quad 
\delta^{(N)}_{\sqrt{2\mu}}(r):=-r\frac{(2N)^{\frac{2}{3}}}{\mu}\Ai\left(\frac{(2N)^{\frac{2}{3}}}{2\mu}\left(r^2-2\mu\right)\right).
\ee
Observe that $\chi^{(N)}_{[0,\sqrt{2\mu})}(r)$ are uniformly bounded functions that converge, as $N\to\infty$ to the characteristic function $\chi_{[0,\sqrt{2\mu})}$, see~\eqref{eq:asymp_Ai1}. The corresponding component of the field is of order $\mathrm{O}(1)$. The sequence of functions $\delta^{(N)}_{\sqrt{2\mu}}(r)$ converges to $\delta_{\sqrt{2\mu}}(r)$ in a distributional sense, as $N\to\infty$. This can be seen, in Fourier space, from the identity $
\int_{\R}\Ai(x)e^{-ikx}dx=e^{ik^3/3}$.  \par

We conclude that the Hamiltonian vector field generated by $p\chi_D^{(N)}(x,p)$ converges  to the singular vector field generated by $p\chi_D(x,p)$.  The  component of the field containing $\delta_{\sqrt{2\mu}}^{(N)}(r)$ is of order $\mathrm{O}(N^{\frac{2}{3}})$ and generates a motion at speed $\propto N^{\frac{2}{3}}$, which becomes `infinite' in the singular limit. Fig.~\ref{fig:trajectories} shows a comparison of the phase portraits of the Hamiltonian dynamics generated by the Weyl symbol $\sigmaHc(x,p)$, the smooth Hamiltonian $p\chi_D^{(N)}(x,p)$ and the discontinuous function $p\chi_D(x,p)$. Note the effective change of topology in the limit singular case that results from the instantaneous motion along the circle $\partial D$. 

\subsection{Spectral analysis of the Zeno Hamiltonian $H_N$}
The matrix representation of $H_N$ (in the Hermite basis $\left\{ \psi_k^{\hbar} \right\}_{k \in \N}$, see Appendix~\ref{app:A}) is the $N\times N$ complex Hermitian matrix
$$
H_N=i\sqrt{\frac{\hbar}{2}}
\begin{bmatrix}
0 & 1& 0&\cdots & 0\\
-1& 0& \sqrt{2} && \vdots \\
0 &-\sqrt{2} &\ddots &&\\ 
\vdots && &0&\sqrt{N-1} \\ 
0 &\cdots   &&-\sqrt{N-1} &0
\end{bmatrix}.
$$
This is a \emph{Jacobi matrix} about which we have very precise spectral information (characteristic polynomial, eigenvalues and their counting measure) for all $N$.
\begin{prop} For all $N \geq 1$,
\be
\det\left(yI_N-H_N\right)=\left(\frac{\sqrt{\hbar}}{2}\right)^Nh_{N}\left(\sqrt{\hbar}y\right), 
\ee
where $h_N$ is the \emph{Hermite polynomial} of degree $N$, see Appendix~\ref{app:A}. 
In particular, the eigenvalues of $H_N$ are the $N$ (simple and real) zeros of the Hermite function $\psi^{\hbar}_N(z)$.
\end{prop}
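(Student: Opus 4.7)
The natural approach is induction on $N$ exploiting the Jacobi (tridiagonal) structure. Let $D_n(y):=\det(yI_n-H_n)$, where $H_n$ denotes the $n\times n$ upper-left principal submatrix of $H_N$ (which has the same tridiagonal shape, just truncated). Expanding $D_n$ along the last row leaves only two nonzero contributions, the diagonal entry $y$ and one off-diagonal term, and the resulting minor is itself tridiagonal of the same type. This produces the three-term recurrence
$$
D_n(y)\;=\;y\,D_{n-1}(y)\;-\;\tfrac{\hbar}{2}(n-1)\,D_{n-2}(y),
$$
with initial data $D_0=1$ and $D_1(y)=y$. The coefficient $\hbar(n-1)/2$ arises as the product of the two purely imaginary, complex-conjugate off-diagonal entries $\bigl(-i\sqrt{\hbar/2}\sqrt{n-1}\bigr)\bigl(i\sqrt{\hbar/2}\sqrt{n-1}\bigr)$, which is real and positive — exactly what is needed to produce a genuine Hermite-type recurrence with the correct sign.

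The second step is to compare this recurrence with the physicists' Hermite recurrence $h_n(z)=2z\,h_{n-1}(z)-2(n-1)\,h_{n-2}(z)$, $h_0=1$, $h_1(z)=2z$. Setting $D_n(y)=(\sqrt{\hbar}/2)^n\,P_n(y)$ converts the recurrence for $D_n$ into the Hermite recurrence for $P_n$ evaluated at the appropriately rescaled argument, and a check on the initial data $P_0=1$ and $P_1$ confirms they match. Hence $P_n$ is the Hermite polynomial of degree $n$ evaluated at the rescaled argument, which gives the stated formula for $\det(yI_N-H_N)$.

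For the eigenvalue statement, three independent facts combine. First, $H_N$ is Hermitian, so its eigenvalues are real. Second, all off-diagonal entries of this Jacobi matrix are nonzero, so the eigenvalues are automatically simple: any eigenvector is determined up to a scalar by its first coordinate through the three-term recurrence satisfied by its components, so each eigenspace is one-dimensional. Third, the Hermite function $\psi_N^{\hbar}$ from Appendix~\ref{app:A} differs from $h_N$ (with its argument suitably rescaled) by a nowhere vanishing Gaussian prefactor, so the zero sets coincide, and by the characteristic polynomial identity these zeros are precisely the spectrum of $H_N$.

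There is no conceptual obstacle — the computation is essentially a one-line Laplace expansion followed by matching of recurrences. The only point that requires a moment's care is tracking the sign: in $yI_N-H_N$ the two off-diagonal entries switch signs relative to $H_N$, but since they remain mutually complex conjugate their product is unchanged, giving the positive coefficient $\hbar(n-1)/2$ that, once absorbed into the rescaling by $(\sqrt{\hbar}/2)^n$, reproduces the $-2(n-1)$ coefficient of the Hermite recurrence.
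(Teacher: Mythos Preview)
Your proof is correct and takes a genuinely different route from the paper's. The paper argues in two lines: first, $H_N=P_{<N}\hat{p}P_{<N}$ is unitarily equivalent to $P_{<N}\hat{x}P_{<N}$ (via the relations~\eqref{eq:Hermite_x}--\eqref{eq:Hermite_p}), so the two share a characteristic polynomial; second, it invokes a general result of Simon for orthogonal polynomials on the real line, which identifies the characteristic polynomial of the truncated Jacobi matrix of the multiplication operator with the $N$-th orthogonal polynomial itself. Your argument is instead direct and self-contained: expand the tridiagonal determinant along the last row, recognise the resulting three-term recurrence $D_n=yD_{n-1}-\tfrac{\hbar}{2}(n-1)D_{n-2}$, and match it against the Hermite recurrence after rescaling. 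The paper's route is conceptual and exhibits the formula as an instance of a general phenomenon in orthogonal-polynomial theory; yours is elementary, requires no external reference, and makes the simplicity of the eigenvalues transparent via the standard Jacobi-matrix argument rather than as a byproduct of the cited result. One minor remark: carrying your rescaling through explicitly gives $P_n(y)=h_n(y/\sqrt{\hbar})$, which is exactly what is needed for the eigenvalues to coincide with the zeros of $\psi_N^{\hbar}(z)\propto e^{-z^2/(2\hbar)}h_N(z/\sqrt{\hbar})$.
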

\begin{proof}
$H_N$ is unitarily equivalent to $P_{<N} \hat{x}P_{<N}$ (see equations~\eqref{eq:Hermite_x}-\eqref{eq:Hermite_p}), and so the two operators have equal characteristic polynomial. The claim now follows from a  result for general orthogonal polynomials on the real line due to Simon~\cite[Prop. 2.2]{Simon09}.  
\end{proof}
If $y^{(j)}_N$ are the zeros of $\psi^{\hbar}_N(y)$, we define the eigenvalues counting measure  $\nu_N$ of the Zeno Hamiltonian $H_N$ to be the nonnegative measure that puts weight $1/N$ on each eigenvalue of $H_N$ (the $y^{(j)}_N$'s). From well-known results on Hermite polynomials~\cite{Deift99} it follows that the measure $\nu_N$ weakly converges to the semicircular density in the simultaneous limit $\hbar\to0$, $N\to\infty$ with $\hbar N$ asymptotically fixed. See Figure~\ref{fig:semi}.
\begin{prop} \label{prop:semicircular} For all continuous bounded functions $f$, 
\be
\int_{\R} f(y)d\nu_N(y)\to\int_{\R}f(y)\rho_{\mu}(y)dy,
\ee
as $N\to\infty$, $\hbar\to0$, with the product $\hbar N$ converging to $\mu>0$.
\end{prop}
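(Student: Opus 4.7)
The plan is to reduce the statement to the classical Wigner semicircle law for the zeros of Hermite polynomials, leveraging the Jacobi matrix representation of $H_N$ recalled just above the proposition.

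First, I would remove the imaginary unit from the Jacobi presentation by conjugating with the diagonal unitary $U = \operatorname{diag}(1, i, i^2, \ldots, i^{N-1})$, which converts the skew-Hermitian tridiagonal pattern with entries $\pm\sqrt{n}$ into the real symmetric one with off-diagonal entries $\sqrt{n}$. Writing $J_N$ for the resulting symmetric tridiagonal matrix, one has $H_N = \sqrt{\hbar/2}\, U J_N U^{*}$, and $J_N$ is nothing but the standard Jacobi matrix for the orthonormal (probabilists') Hermite polynomials. Its spectrum coincides with the zero set $\{z_j^{(N)}\}_{j=1}^N$ of the $N$-th such polynomial (consistent with the preceding proposition), and hence the eigenvalues of $H_N$ are $y_j^{(N)} = \sqrt{\hbar/2}\, z_j^{(N)}$.

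The second step is to invoke the classical semicircle law for Hermite zeros (see~\cite{Deift99}): the rescaled empirical measure
$$
\tilde{\nu}_N := \frac{1}{N}\sum_{j=1}^{N} \delta_{z_j^{(N)}/\sqrt{N}}
$$
converges weakly to $\rho_{\mathrm{sc}}(z)\, dz := \frac{1}{2\pi}\sqrt{4 - z^2}\,\chi_{[-2,2]}(z)\, dz$, together with the uniform edge bound $\max_j |z_j^{(N)}|/\sqrt{N} \leq 2 + o(1)$, so that the $\tilde{\nu}_N$ all sit in a common compact set. Writing $y_j^{(N)} = \sqrt{\hbar N / 2}\, (z_j^{(N)}/\sqrt{N})$, for any $f \in C_b(\R)$ we then have
$$
\int_\R f(y)\, d\nu_N(y) \;=\; \int_\R f\!\left(\sqrt{\hbar N/2}\, z\right) d\tilde{\nu}_N(z).
$$
Since $\hbar N \to \mu$ and the $\tilde{\nu}_N$ are uniformly compactly supported, the uniform continuity of $f$ on that compact set allows one to replace $\sqrt{\hbar N/2}$ by $\sqrt{\mu/2}$ at vanishing cost; weak convergence of $\tilde{\nu}_N$ then gives the limit $\int f(\sqrt{\mu/2}\, z)\,\rho_{\mathrm{sc}}(z)\, dz$. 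Substituting $y = \sqrt{\mu/2}\, z$ rewrites this as $\int_\R f(y)\,\rho_\mu(y)\,dy$ with the semicircle density
$$
\rho_\mu(y) \;=\; \frac{1}{\pi\mu}\sqrt{2\mu - y^2}\,\chi_{[-\sqrt{2\mu},\,\sqrt{2\mu}]}(y).
$$

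The main obstacle is essentially bookkeeping: pinning down the conventions for the Hermite polynomials and keeping track of the factor $\sqrt{\hbar/2}$. The only genuinely analytic point is the passage to the limit through the $N$-dependent rescaling of the test function, which is immediate from the uniform compact support of the $\tilde{\nu}_N$ and the uniform continuity of $f$ on that compact set; no information on the edge behavior of the Hermite zeros is required for this weak statement.
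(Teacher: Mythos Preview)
Your proposal is correct and follows the same route as the paper, which does not actually give a detailed proof of this proposition but simply invokes ``well-known results on Hermite polynomials~\cite{Deift99}''; your argument supplies exactly the bookkeeping (the diagonal unitary conjugation, the identification with the Jacobi matrix of the orthonormal Hermite polynomials, and the rescaling $y=\sqrt{\hbar N/2}\,z$) that the paper leaves implicit. One cosmetic point: with your choice $U=\operatorname{diag}(1,i,\dots,i^{N-1})$ the conjugation produces $-\sqrt{\hbar/2}\,J_N$ rather than $+\sqrt{\hbar/2}\,J_N$, but since the spectrum of $J_N$ is symmetric about the origin this is immaterial.
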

\begin{figure}[t]
	\centering
	\includegraphics[width=.55\textwidth]{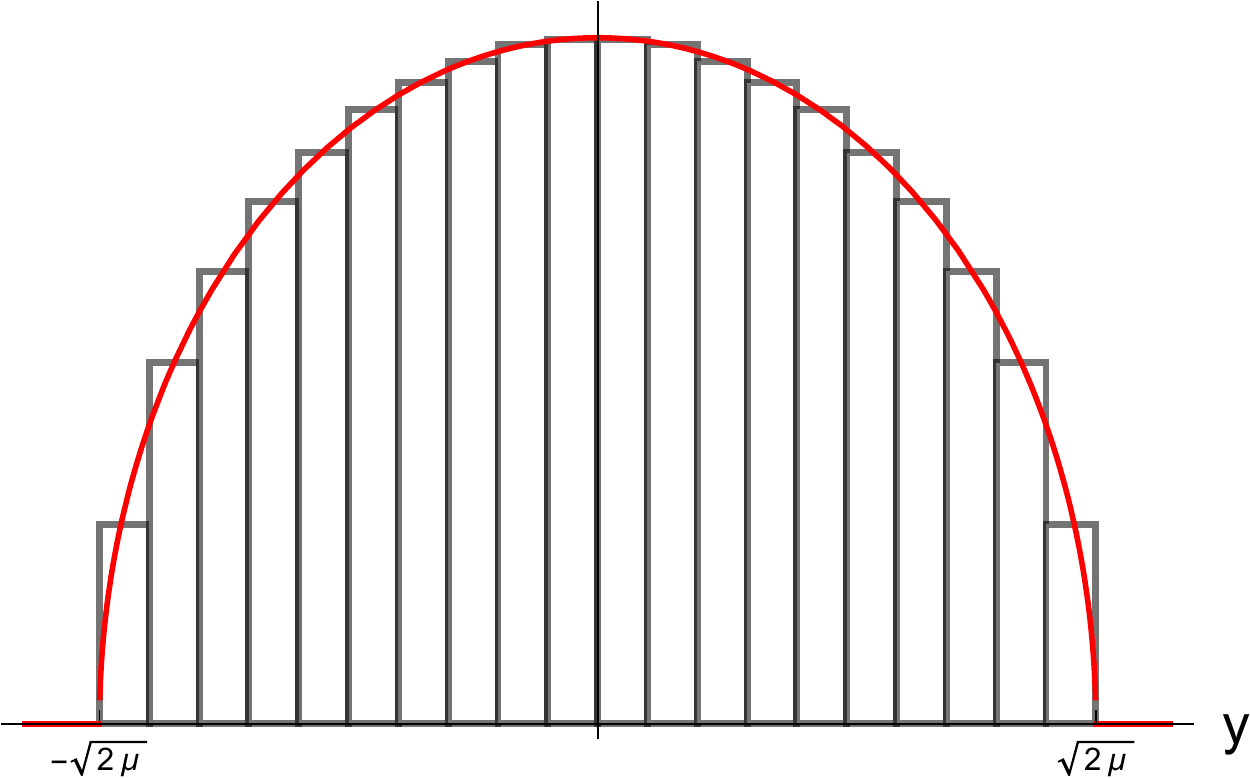}
	\caption{Illustration of Proposition~\ref{prop:semicircular}. The histogram of the eigenvalues of the Zeno Hamiltonian $H_N$ for $N=2000$, and $\hbar N=\mu=2$ is compared with the semicircular density $\rho_{\mu}(y)=\frac{1}{\pi\mu}\sqrt{(2\mu-y^2)_+}$ of Eq.~\eqref{eq:semic}.}
	\label{fig:semi}
\end{figure}
\begin{rem}
The semicircular spectral distribution can be obtained formally from the semiclassical limit of Theorem~\ref{thm:convergenceAprime_intro}. Indeed, in the limit symbol $p\chi_D(x,p)$ of the Zeno Hamiltonian is concentrated on the disk $D$ of radius $\sqrt{2\mu}$. Semiclassically, the density of the eigenvalues is the fraction of the phase space volume with energy between $y$ and $y+dy$:
$$
\frac{\mathrm{Area}\left(\{y\leq p\leq y+dy\}\cap D\right)}{\mathrm{Area}\left(D\right)}=\frac{2\sqrt{2\mu-y^2}dy}{\pi(2\mu)}=\rho_{\mu}(y).
$$
\end{rem}

\subsection{Proof strategy and relations to other works}
When $N$ is large, the symbols $\sigmaKc$ and $\sigmaQc$ are highly oscillating smooth functions. As discussed in Remark~\ref{rempoint}, looking for a global semiclassical limit in a pointwise sense is hopeless. It turns out that the sought convergence of the symbols holds in a weak sense if the set of test functions is chosen to be $\AA$.

The proofs presented in this paper are based on the following observations: 
\begin{enumerate}
\item The asymptotics of integrals of the Weyl symbols $\sigmaKc(x,p)$ and $\sigmaQc(x,p)$ against functions on the phase space is related to the pointwise asymptotics of the Fourier transforms $\FF_2\sigmaKc(x,y)$ and $\FF_2\sigmaHc(x,y)$;
\item The function $\FF_2\sigmaKc(x,y)$ is a sum of $N$ terms (cross products of Hermite functions), see
Eq.~\eqref{eq:K_N1}. However, thanks to the Christoffel-Darboux formula (Lemma~\ref{lem:CD}) this sum can be expressed in terms of the $N$-th and $(N-1)$-th Hermite functions only. Hence, studying the large $N$ asymptotics with $\hbar N=\mu$ amounts to study the large degree asymptotics of Hermite functions. This is a well-studied topic in the theory of orthogonal polynomials from which we can freely borrow explicit asymptotic formulae. So, to prove  the convergence of the symbols we will show the convergence of the Christoffel-Darboux kernel along with its derivatives to the \emph{sine} and \emph{Airy kernels} (in the formulation presented in the book of Anderson, Guionnet and Zeitouni~\cite{Anderson10})
\item The symbol $\sigmaQc(x,p)$ is `asymptotically close' to $p\sigmaKc(x,p)$ in the dual space $\AA'$ (Proposition~\ref{prop:bound_symbols}). This is suggested by the heuristic observation that, in  the limit $\hbar\to0$, the algebra of observables should become commutative. What we gain is that, once we know the asymptotics of $\sigmaPc(x,p)$ we can directly deduce the asymptotics of  $\sigmaQc(x,p)$.
\end{enumerate}

The seminal paper by Lions and Paul~\cite{Lions93} on the semiclassical limit of Wigner measures, and the more recent developments~\cite{Ambrosio11,Athanassoulis13,Curtright01,Figalli12}  were  instrumental in our study.

We mention that the symbol $\sigmaPc(x,p)$ of the orthogonal projection $P_{<N}$  studied in the present paper has close connection to the fuzzy approximation of two-dimensional disk proposed by Lizzi, Vitale and Zampini~\cite{Lizzi03,Lizzi06}. A \emph{fuzzy space} is an approximation of an abelian algebra of functions on an ordinary space with a sequence of finite-rank matrix algebras, which preserve the symmetries of the original space, at the price of non-commutativity. Eq.~\eqref{eq:asymp_sigma_P_<N} of Theorem~\ref{thm:convergenceAprime_intro} is the precise mathematical statement behind the numerical results of~\cite{Lizzi03,Lizzi06}. To our knowledge, the finer asymptotics of Theorem~\ref{thm:convergenceAiry_intro} is a new result, that has not been observed numerically neither.

The convergence of symbols of projection operators to the characteristic function of the classically allowed region is folklore in theoretical physics. In recent years, there has been an explosion of results on the asymptotics of the Christoffel-Darboux kernel for orthogonal polynomials on the real line, especially in connections to eigenvalue statistics of random matrices and integrable probability models~\cite{Deift99,Mehta04,Peres06,Romik15}. The interest to these asymptotics in the theoretical and mathematical physics community has been mostly motivated by applications to the number statistics of non-interacting fermions. The asymptotics at the `edge' has been also investigated at various levels of rigour.  See, e.g.~\cite{Bettelheim11,Bettelheim12,Bornemann16,Cunden18,Cunden19,Dean18,Dean19,Deleporte21,Eisler13,Torquato08}. 

The semiclassical structure of quite general \emph{cut quantum observables} $\Pi Q\Pi$ (with $\Pi$ a spectral projection and $Q$ a pseudodifferential operator) was  studied by Hernandez-Duenas and Uribe~\cite{Hernandez-Duenas15}. Their results is consistent with Theorem~\ref{thm:convergenceAprime_intro} of the present paper. Those authors also studied the unitary dynamics generated by the cut quantum observables (the analogue of the Zeno Hamiltonian $H_N$ of the present paper), and numerically found fascinating phenomena of splitting of the wave-packets and infinite propagation speed near the boundary of the classically allowed region. This is again consistent with our findings.

Given the universality results on the asymptotics of orthogonal polynomials and random matrices~\cite{Deift99}, we expect that Theorem~\ref{thm:convergenceAprime_intro} is valid for a rather large class of symbols associated to finite-rank orthogonal projections. The recent paper by Deleporte and Lambert~\cite{Deleporte21} suggests that Theorem~\ref{thm:convergenceAiry_intro} would be valid as long as the gradient of the confining potential does not vanish at the points of classical inversion of motion.  In any case the statement of analogues of Theorem~\ref{thm:convergenceAiry_intro} should depend on the geometry of the level sets of the corresponding classical Hamiltonian function. Further study is in progress.

\subsection{Outline of the paper}
The structure of the paper is as follows. In the next section we recall some preliminary background material, introduce a precise presentation of the model and provide the calculation of the symbols. In Section~\ref{sect:scaling} we discuss the different scaling limits in Theorems~\ref{thm:convergenceAprime_intro} and~\ref{thm:convergenceAiry_intro}.
Section~\ref{sec:proof} is entirely devoted to the proofs of the main technical results, and of Theorems~\ref{thm:convergenceAprime_intro}, \ref{thm:convergenceAiry_intro} and ~\ref{thm:edge_out_pointwise}. The paper includes two appendices. In Appendix~\ref{app:A} we collect known formulae on the Hermite functions. Appendix~\ref{app:B} contains the definition and a few properties of the sine and the Airy kernel.

 \section{Notation, preliminaries, Weyl symbols and kernels} \label{sec:notation}
We first introduce some notation and preliminary notions that we use throughout this work. For a linear  operator $L$ on $L^2(\R)$ we write  $L\doteq L(u,v)$ to indicate that $L$ has kernel $L(u,v)\in L^2(\R\times\R)$. In this paper all kernels are continuous. 
For $A,B$ linear operators, we use the notation $[A,B]:=AB-BA$ for the commutator. 
Let $D$ be the linear operator defined, for  $f\in H^1(\mathbb{R})$, by the formula $(Df)(u)=\frac{d}{du}f(u)$. We have
\be
[D,L]\doteq\left(\frac{\partial}{\partial u}+\frac{\partial}{\partial v}\right)L(u,v).
\label{eq:commutator_DL}
\ee
For $x\in\R$ and $\gamma>0$, let
  \begin{alignat}{2}
V_{x,\gamma}\colon L^2(\R)&\longrightarrow&& L^2(\R) \nonumber \\
f(u)&\longmapsto&& \left(V_{x,\gamma}f\right)(u):=\sqrt{\gamma}f\left(x+\gamma u\right). 
\label{eq:unitary_affine}
\end{alignat}
Of course $\left(V^{-1}_{x,\gamma}f\right)(u)=\sqrt{1/\gamma}f\left(\gamma^{-1}(u-x)\right)$, and $V_{x,\gamma}$ is unitary. If we conjugate the operator $L$ by the scaling unitary $V_{x,\gamma}$, its  kernel gets changed into  
\begin{align}
V_{x,\gamma}LV_{x,\gamma}^{-1}
\doteq \gamma L\left(x+\gamma u,x+\gamma v\right).
\end{align}

We shall consider the following space of test functions introduced by Lions and Paul~\cite{Lions93},
\be
\AA=\left\{f\in C^{}_0(\R_x\times\R_p)\colon  \|f\|_{\AA}:=\int_{\R} dy\sup_x\left|\FF_2 f(x,y)\right| < \infty \right\},
\ee
where $C^{}_0(\R_x\times\R_p)$ is the usual space of continuous functions tending to zero at infinity.

$\AA$ is a Banach algebra with the following properties (see~\cite{Lions93}): 
\begin{itemize}
\item[-] $\mathcal{S}(\R_x\times\R_p)$, $C^{\infty}_c(\R_x\times\R_p)$, and $\BB=\{f\in\AA\colon \FF_2f\in C^{}_c(\R_x\times\R_y)\}$ are dense subspaces in $\AA$.
\item[-] $\sup_{x,p}|f(x,p)|\leq (1/2\pi)\|f\|_{\AA}$; hence $\AA$ is contained in the space of bounded continuous functions in the phase space $C_b(\R_x\times\R_p)$.
\end{itemize}
Let $\AA'$ be the dual of $\AA$. From the Parseval identity
it follows that
\be
\|h\|_{\AA'}=\frac{1}{2\pi}\sup_y\int \left|\FF_2 h(x,y)\right|dx.
\ee
A basic property is $\|h\|_{\AA'}\leq \frac{1}{2\pi} \|h\|_{L^1}$ (hence $L^1(\R_x\times\R_p)\subset \AA'$).

\begin{definition}\label{def:Weyl}
Given a number $\hbar>0$, the Weyl symbol of the operator $L\doteq L(u,v)$ is defined as
\be
\sigma_L^{\hbar}(x,p):=\int_{\R_y}\hbar L\left(x-\frac{\hbar y}{2},x+\frac{\hbar y}{2}\right)e^{ipy}dy.
\ee
Equivalently, $\sigma_L^{\hbar}$ is defined by the identity
\be
\left(\FF_2 \sigma_L^{\hbar}\right)(x,y)=\left(2\pi\hbar\right) L\left(x-\frac{\hbar y}{2},x+\frac{\hbar y}{2}\right)
\ee
and, by Plancherel's theorem,
\be
\int\limits_{\R_x\times\R_p}\overline{\sigma_L^{\hbar}(x,p)}\varphi(x,p) dx dp=\int\limits_{\R_x\times\R_y}\overline{\hbar L\left(x-\frac{\hbar y}{2},x+\frac{\hbar y}{2}\right)}\FF_2\varphi(x,y)dxdy.
\label{eq:Plancherel_f}
\ee
\end{definition}
We will often use the shorthand 
\be
\langle \sigma,f\rangle:=\int_{\R_x\times\R_p} \overline{\sigma(x,p)}f(x,p) dx dp.
\ee

We recall that the Weyl symbol of the product of two operators $A$, $B$ is \emph{not} the ordinary product of the symbols $\sigma^{\hbar}_{AB}\neq \sigma^{\hbar}_{A}\sigma^{\hbar}_{B}$, unless $A$ and $B$ commute. 
The noncommutative Moyal product  $\sharp$ is defined as the composition law that does the job: $\sigma^{\hbar}_{AB}= \sigma^{\hbar}_{A}\, \sharp \, \sigma^{\hbar}_{B}$~\cite{folland}.
\begin{definition}
Given two linear operators $A$ and $B$ on $L^2(\R)$ with Weyl symbols $\sigma_A^\hbar$ and $\sigma_B^\hbar$ respectively, the Moyal product is defined as follows:
$$
\sigma^{\hbar}_{A} \, \sharp \, \sigma^{\hbar}_{B}(x,p)=\int_{\R^4} \sigma^{\hbar}_{A}(x_1,p_1)\sigma^{\hbar}_{B}(x_2,p_2) e^{\frac{2i}{\hbar}[(x-x_1)(p-p_2)-(x-x_2)(p-p_1)]}\frac{dx_1 dp_1 dx_2 dp_2 }{(\pi \hbar)^2}.
$$
\end{definition}

Recall that the normalised eigenfunctions of the harmonic oscillator operator $H_{\operatorname{h.o.}}$ in~\eqref{eq:Hho} are the Hermite functions 
\be
\psi_k^{\hbar} (x)=\sqrt{\frac{\alpha }{\sqrt{\pi } 2^k k!}} \exp \left(-\frac{1}{2}\alpha ^2 x^2\right) h_k(\alpha  x),\qquad k=0,1,2,\ldots
\ee
where $\alpha^2=1/\hbar$ and
\be
h_k(y)=(-1)^ke^{y^2}\frac{d^k}{dy^k}e^{-y^2}
\ee
is the $k$-th \emph{Hermite polynomials}, see Appendix~\ref{app:A}. Consider the \emph{orthogonal projection} 
$$P_{<N}=\chi_{(-\infty,\hbar N)}(H_{\operatorname{h.o.}})$$ onto the span of the first  $N$ Hermite eigenfunctions in~(\ref{eqn:PN}).  The  \emph{Zeno Hamiltonian} in~(\ref{eqn:HN}) is the \emph{truncated momentum operator} 
\be
H_N=P_{<N}\hat{p}P_{<N}=\hat{p}P_{<N}-[\hat{p},P_{<N}]P_{<N}.
\ee
\begin{prop}[Integral kernels]
\be
P_{<N}\doteq K_N^{}(u,v)=\sum_{k=0}^{N-1}\psi_k^{\hbar} (u)\psi_{k}^{\hbar} (v)
\label{eq:K_N1}
\ee
\begin{align}\label{kernelHN}
P_{<N}\hat{p}P_{<N}\doteq Q_N^{}(u,v)&=\int_{\R}K_N(u,w)\left(-i\hbar\frac{\partial}{\partial w}\right)K_N(w,v)dw\nonumber\\
&=i \sqrt{\frac{\hbar }{2}}\sum _{j=0}^{N-2} \sqrt{j+1} \left[\psi^{\hbar}_{j+1} \left(u\right) \psi^{\hbar}_j \left(v\right)-\psi^{\hbar}_j \left(u\right) \psi^{\hbar}_{j+1} \left(v\right)\right],
\end{align}
\be
\label{eq:commutator_kernel2}
[\hat{p},P_{<N}]P_{<N}\doteq R_N(u,v)= i\sqrt{\frac{\hbar N}{2}}\psi^{\hbar}_{N} \left(u\right) \psi^{\hbar}_{N-1}\left(v\right).
\ee
\end{prop}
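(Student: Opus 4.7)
\smallskip

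\noindent\textbf{Proof proposal.} All three identities follow from the orthonormality of the Hermite basis $\{\psi^\hbar_k\}_{k\in\N}$ together with the (three-term) action of $\hat p$ on those functions recorded in Appendix~A; no real analytic subtlety is involved, and all manipulations take place on $\mathcal S(\R)$ where every kernel integration and differentiation is justified by Schwartz decay.

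For \eqref{eq:K_N1}: since $P_{<N}$ is by definition the orthogonal projection onto $\mathrm{span}(\psi^\hbar_0,\dots,\psi^\hbar_{N-1})$ and the Hermite functions are real and orthonormal, the general expression for the kernel of such a projection, $\sum_{k<N}\psi^\hbar_k(u)\overline{\psi^\hbar_k(v)}$, collapses to the stated sum.

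For the first equality in \eqref{kernelHN}: I would observe that $\hat p$ acting on $P_{<N}f$ can be moved inside the integral against $K_N$ (valid since $\psi^\hbar_k \in \SS(\R)$), so $\hat p P_{<N} \doteq -i\hbar\,\partial_u K_N(u,v)$. Composing with $P_{<N}$ on the left by the general rule $(AB)(u,v)=\int A(u,w)B(w,v)dw$ yields exactly the stated convolution-type integral, with the derivative $-i\hbar\,\partial_w$ acting on the first argument of $K_N(w,v)$. For the second equality, I would expand in the Hermite basis,
\[
P_{<N}\hat p P_{<N} \;=\; \sum_{j,k=0}^{N-1} \langle \psi^\hbar_j,\hat p\,\psi^\hbar_k\rangle \,|\psi^\hbar_j\rangle\langle\psi^\hbar_k|,
\]
whose kernel is $\sum_{j,k<N}\langle\psi^\hbar_j,\hat p\psi^\hbar_k\rangle\psi^\hbar_j(u)\psi^\hbar_k(v)$. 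Using the ladder identity (Appendix~A) $\hat p\,\psi^\hbar_k = i\sqrt{\hbar/2}\,(\sqrt{k+1}\,\psi^\hbar_{k+1}-\sqrt{k}\,\psi^\hbar_{k-1})$, only the matrix elements with $k=j\pm1$ survive; grouping the two tridiagonal contributions and reindexing produces the antisymmetric sum on the right-hand side of \eqref{kernelHN}.

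For \eqref{eq:commutator_kernel2}: I would use $P_{<N}^2=P_{<N}$ to write
\[
[\hat p,P_{<N}]P_{<N} \;=\; \hat p\,P_{<N}-P_{<N}\hat p\,P_{<N}.
\]
The operator $\hat p P_{<N}$ admits the same bilinear expansion as above, except that the index $j$ is \emph{not} truncated at $N$, since only the right projection has been applied. Subtracting $P_{<N}\hat p P_{<N}$ therefore leaves precisely the terms with $j\ge N$ and $k<N$. By the same ladder identity, the unique nonzero matrix element in this range is $\langle \psi^\hbar_N,\hat p\,\psi^\hbar_{N-1}\rangle=i\sqrt{\hbar N/2}$, which yields the rank-one kernel \eqref{eq:commutator_kernel2}.

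There is no genuine obstacle here; the only points demanding care are consistent sign conventions (fixed once and for all by the ladder formulas of Appendix~A and the Jacobi matrix already displayed before the statement) and the justification of moving $\hat p$ through the finite sum defining $P_{<N}$, both of which are routine on the Schwartz common core.
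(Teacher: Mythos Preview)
Your proposal is correct and follows essentially the same approach as the paper: the projection-kernel formula for \eqref{eq:K_N1}, the three-term recurrence \eqref{eq:Hermite_p} for \eqref{kernelHN}, and the ladder identity together with orthonormality for \eqref{eq:commutator_kernel2}. The only cosmetic difference is in the last item, where the paper first applies the commutator--kernel identity \eqref{eq:commutator_DL} to obtain $[\hat p,P_{<N}]$ and then projects, while you subtract the truncated matrix expansion of $P_{<N}\hat p P_{<N}$ from that of $\hat p P_{<N}$ directly; the two routes are equivalent one-line computations.
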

\begin{proof}
Formula \eqref{eq:K_N1} follows directly by the definition of the Hermite functions.  Formula \eqref{kernelHN} is obtained by a direct calculation using the three-term recurrence~\eqref{eq:Hermite_p}, while~(\ref{eq:commutator_kernel2}) follows by applying the identity~\eqref{eq:commutator_DL} to $P_{<N}$, and using the orthonormality of the  eigenfunctions  $\{  \psi^{\hbar}_{k}\}_{k \in \N}$.
\end{proof}
\begin{rem}
The kernels $K_N$, $Q_N$ and $R_N$ are rapidly decreasing functions in  $\SS(\R_u\times \R_v)$.
\end{rem}
The Weyl symbols of $P_{<N}$ and $H_N$ 
\begin{align}
\sigmaPc(x,p)=
\int_{\R}\hbar K_N\left(x-\frac{\hbar y}{2},x+\frac{\hbar y}{2}\right)e^{ipy}dy,
\\
\sigmaHc(x,p)=
\int_{\R}\hbar Q_N\left(x-\frac{\hbar y}{2},x+\frac{\hbar y}{2}\right)e^{ipy}dy,
\end{align}
have explicit representations in terms of associated Laguerre polynomials (this is a manifestation of the  so-called `Laguerre connection'~\cite[\S 1.9]{folland}).
\begin{prop}[Weyl symbols]
\label{prop:symbols_finiteN} For all $x,p \in \R$: 
\begin{align}
\sigmaPc(x,p)&=2e^{-(p^2+x^2)/\hbar}\sum_{j=0}^{N-1}(-1)^jL_j\left(2(p^2+x^2)/\hbar\right),\\
\sigmaHc(x,p)&=4pe^{-(p^2+x^2)/\hbar}\sum_{j=0}^{N-2}(-1)^{j}L_j^{(1)}\left(2(p^2+x^2)/\hbar\right),
\end{align}
where 
$$
L_k^{(j)}(y)=\sum_{m=0}^k \frac{(k+j)!}{(k-m)!(j+m)!m!} (-y)^m
$$
are the \emph{associated Laguerre polynomials}.
\end{prop}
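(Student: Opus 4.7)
The plan is to compute both symbols via the \emph{Laguerre connection} for Wigner functions of Hermite states, treating the diagonal contributions to $P_{<N}$ and the off-diagonal contributions in the kernel $Q_N$ separately.

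For $\sigmaPc$, I would first compute the Weyl symbol of a single rank-one projector. Starting from
$$\sigma^\hbar_{|\psi_k^\hbar\rangle\langle\psi_k^\hbar|}(x,p) = \hbar\int_\R \psi_k^\hbar\bigl(x-\tfrac{\hbar y}{2}\bigr)\,\psi_k^\hbar\bigl(x+\tfrac{\hbar y}{2}\bigr)\, e^{ipy}\,dy,$$
one substitutes the explicit form of $\psi_k^\hbar$, collects the Gaussian factors (the cross terms in $xy$ cancel), and evaluates the resulting integral of a product of Hermite polynomials against a Gaussian using the classical identity obtainable from Mehler's formula. This yields
$$\sigma^\hbar_{|\psi_k^\hbar\rangle\langle\psi_k^\hbar|}(x,p) = 2(-1)^k e^{-(x^2+p^2)/\hbar}\,L_k\!\left(\tfrac{2(x^2+p^2)}{\hbar}\right),$$
and summing over $k=0,\dots,N-1$ produces the stated formula for $\sigmaPc$.

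For $\sigmaHc$, I would use the kernel representation~\eqref{kernelHN} to write
$$\sigmaHc(x,p) = i\sqrt{\tfrac{\hbar}{2}}\sum_{j=0}^{N-2}\sqrt{j+1}\,\Bigl[\sigma^\hbar_{|\psi_{j+1}^\hbar\rangle\langle\psi_j^\hbar|}(x,p) - \sigma^\hbar_{|\psi_j^\hbar\rangle\langle\psi_{j+1}^\hbar|}(x,p)\Bigr].$$
Because $\psi_k^\hbar$ is real, the two cross symbols are complex conjugates, so the bracket equals $2i\,\mathrm{Im}\,\sigma^\hbar_{|\psi_{j+1}^\hbar\rangle\langle\psi_j^\hbar|}$. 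The cross-Wigner identity for consecutive Hermite states (another instance of the Laguerre connection, checkable already for $j=0$ by a direct Gaussian integration) reads
$$\sigma^\hbar_{|\psi_{j+1}^\hbar\rangle\langle\psi_j^\hbar|}(x,p) = \frac{2(-1)^j\sqrt{2/\hbar}}{\sqrt{j+1}}\,(x-ip)\,e^{-(x^2+p^2)/\hbar}\,L_j^{(1)}\!\left(\tfrac{2(x^2+p^2)}{\hbar}\right),$$
whose imaginary part is proportional to $-p$. Substituting back, the $\sqrt{j+1}$ from the kernel cancels the $1/\sqrt{j+1}$ in the cross-Wigner, $\sqrt{\hbar/2}\cdot\sqrt{2/\hbar}=1$, and collecting the remaining numerical constants yields exactly $4p\,e^{-(x^2+p^2)/\hbar}\sum_{j=0}^{N-2}(-1)^j L_j^{(1)}(2(x^2+p^2)/\hbar)$.

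The main obstacle is establishing the cross-Wigner identity for $|\psi_{j+1}^\hbar\rangle\langle\psi_j^\hbar|$ with the correct normalization and phase. The cleanest route is through the Mehler generating function: forming $\sum_{k,l\ge 0} s^k t^l (k!\,l!)^{-1/2}\psi_k^\hbar(u)\psi_l^\hbar(v)$, substituting $u=x-\hbar y/2$, $v=x+\hbar y/2$, performing the Gaussian Fourier integral in $y$ to obtain a closed-form two-variable generating function on phase space, and reading off the coefficient of $s^{j+1}t^j$. An equivalent route is an induction on $j$ using the ladder relations~\eqref{eq:Hermite_x}--\eqref{eq:Hermite_p} together with the standard recursions for $L_j$ and $L_j^{(1)}$ to propagate the identity from the base case $j=0$.
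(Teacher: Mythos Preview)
Your proposal is correct and follows essentially the same route as the paper: both rely on the Groenewold/Laguerre-connection identity for the (cross-)Wigner functions of Hermite states, applied in the diagonal case $k=j$ for $\sigmaPc$ and the nearest-off-diagonal case $k=j+1$ for $\sigmaHc$ via the kernel~\eqref{kernelHN}. The only difference is that the paper simply cites Groenewold's general formula~\cite{Groenewold46,Curtright01} rather than sketching its derivation through Mehler's kernel or induction, as you do.
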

\begin{proof}
A consequence of the following formula by Groenewold~\cite{Groenewold46} valid for all $j\leq k$ (we write the formula as in~\cite[Eq. (30)]{Curtright01}),
\begin{multline}
\int\psi_j^{\hbar}\left(x-\frac{y}{2}\right)\psi_k^{\hbar}\left(x+\frac{y}{2}\right)e^{ipy}dy\\=
2\sqrt{\left(\frac{2}{\hbar}\right)^{k-j}\frac{j!}{k!}}\left( x + i p\right)^{k-j}e^{-(p^2+x^2)/\hbar}(-1)^jL_j^{(k-j)}\left( 2(p^2+x^2)/\hbar\right).
\end{multline}

\end{proof}

\begin{rem} 
\label{rem:complex}
The symbols $\sigmaPc$ and $\sigmaHc$ are rapidly decreasing functions in $\SS(\R_x\times \R_p)$. Notice that $\sigmaPc$ is rotational symmetric.
 It may be convenient in the following to consider  $\sigmaPc$ and $\sigmaHc$ as complex-valued functions defined on the complexification $\C_x\times\C_p$ of the real phase space. They are entire functions in both variables $x$ and $p$.
\end{rem}

\section{Scaling limits}\label{sect:scaling}
In this section we provide an heuristic explanation of the different scaling limits in Theorems~\ref{thm:convergenceAprime_intro} and~\ref{thm:convergenceAiry_intro}. The following discussion is somewhat breezy. For a more careful exposition of similar ideas, see~\cite{Bornemann16,Cunden18,Deleporte21}.

Recall that $P_{<N} \doteq K_N(u,v)$ and 
\be
\label{eq:symb_ker}
\FF_2\sigmaKc(x,y)=2\pi\hbar K_N\left(x-\frac{\hbar y}{2},x+\frac{\hbar y}{2}\right).
\ee
At scale $\hbar$, the kernel has an asymptotic limit that can be identified as follows.  
We start by writing the rescaled  kernel in terms of the conjugation of a unitary transformation on the operator. 
If we conjugate the projection $P_{<N}$ by the scaling unitary $V_{x,\hbar}$ in~\eqref{eq:unitary_affine}, we get that the kernel of the rescaled projection is the rescaled kernel:
\begin{align}
V_{x,\hbar}P_{<N}V_{x,\hbar}^{-1}=\chi_{(-\infty,\hbar N)}\left(V_{x,\hbar}H_{\operatorname{h.o.}}V_{x,\hbar}^{-1}\right)
\doteq \hbar K_N\left(x+\hbar u,x+\hbar v\right).
\end{align}
The action of the rescaled harmonic oscillator operator on a function $f$ in its domain is
\begin{align*}
\left(V_{x,\hbar}H_{\operatorname{h.o.}}V_{x,\hbar}^{-1}f\right)(u)=\frac{1}{2}\left[-\frac{d^2 }{d u^2}+\hbar^2 u^2 +2\hbar u x +x^2\right]f(u).
\end{align*}
So we expect that 
\be
\chi_{(-\infty,\hbar N)}\left(V_{x,\hbar}H_{\operatorname{h.o.}}V_{x,\hbar}^{-1}\right)
\simeq  \chi_{(-\infty,2\mu-x^2)}\left(-\frac{d^2 }{d u^2}\right),\quad \text{ for $\hbar\to0$, $N\to\infty$, with $\hbar N=\mu$}.
\label{eq:approx_proj1}
\ee
We recall the following result adapted from~\cite[Lemma A.5]{Bornemann16}.
\begin{lem}
\label{lem:Greenf1} 
The operator $-\frac{d^2 }{d u^2}$ is essentially self-adjoint on $C^{\infty}_c(\R)$, and its unique self-adjoint extension has only absolutely continuous spectrum $\sigma\left(-\frac{d^2 }{d u^2}\right)=\sigma_{\mathrm{ac}}\left(-\frac{d^2 }{d u^2}\right)=[0,\infty)$. Moreover, 
\be
\chi_{(-\infty,2\mu-x^2)}\left(-\frac{d^2 }{d u^2}\right)\doteq 
\mu \rho_\mu (x)K_{\sine}\left(\mu \rho_\mu (x)u,\mu \rho_\mu (x)v\right),
\ee
where $K_{\sine}$ is the sine kernel~\eqref{eq:sinek}.
\end{lem}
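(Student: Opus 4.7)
The plan is to diagonalize $-d^2/du^2$ via the Fourier transform and then read off all three claims from standard facts about multiplication operators. Let $\FF$ denote the unitary Fourier transform on $L^2(\R)$; then $\FF(-d^2/du^2)\FF^{-1}$ is the multiplication operator $M_{\xi^2}$ by $\xi^2$ on a dense subspace of $L^2(\R_\xi)$.

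For essential self-adjointness on $C^{\infty}_c(\R)$, recall that $M_{\xi^2}$ is self-adjoint on its maximal domain $\{f \in L^2(\R_\xi) : \xi^2 f \in L^2\}$ and that $\SS(\R_\xi)$ is a core (standard). A routine cutoff-and-mollification argument shows that $\FF(C^{\infty}_c(\R_u))$ is dense in $\SS(\R_\xi)$ in the graph norm of $M_{\xi^2}$, so it is itself a core; transporting back, $C^{\infty}_c(\R_u)$ is a core for $-d^2/du^2$. For the spectrum: since $E_{M_{\xi^2}}(A)$ is multiplication by $\chi_{\{\xi^2 \in A\}}$, the spectrum $\sigma(M_{\xi^2})$ equals the essential range of $\xi \mapsto \xi^2$, namely $[0,\infty)$; and since the preimage under $\xi\mapsto \xi^2$ of any Lebesgue-null set is again null, every spectral measure is absolutely continuous, giving $\sigma = \sigma_{\mathrm{ac}} = [0,\infty)$.

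For the kernel, the functional calculus gives $\chi_{(-\infty,\, 2\mu - x^2)}(-d^2/du^2) = \FF^{-1} M_{\chi_{\{\xi^2 < 2\mu - x^2\}}} \FF$. Writing $a := (2\mu - x^2)_+$, the multiplier equals $\chi_{(-\sqrt{a},\sqrt{a})}(\xi)$, which vanishes when $|x|\ge\sqrt{2\mu}$ (a case in which the claim is trivial since $\rho_\mu(x)=0$ there as well). The corresponding integral kernel is
\[
\frac{1}{2\pi}\int_{-\sqrt{a}}^{\sqrt{a}} e^{i\xi(u-v)}\,d\xi = \frac{\sin(\sqrt{a}(u-v))}{\pi(u-v)},
\]
and using $\pi\mu\rho_\mu(x) = \sqrt{(2\mu - x^2)_+} = \sqrt{a}$ this rewrites as $\mu\rho_\mu(x)\, K_{\sine}\!\left(\mu\rho_\mu(x)\, u,\; \mu\rho_\mu(x)\, v\right)$, as claimed. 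The only mild technicalities are the Fourier-transform normalization (chosen to match the convention for $K_{\sine}$ in~\eqref{eq:sinek}) and the routine core argument for $M_{\xi^2}$; neither poses any serious obstacle.
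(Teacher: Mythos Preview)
Your proof is correct. The paper does not give its own proof of this lemma: it simply records the statement as ``adapted from~\cite[Lemma~A.5]{Bornemann16}'' and moves on. Your direct argument via Fourier diagonalization is the standard route to all three claims and matches what one finds in the cited reference; in particular your kernel computation is consistent with the paper's conventions for $\rho_\mu$ and $K_{\sine}$.
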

From~\eqref{eq:symb_ker}, we see that a rescaling $\hbar$ in the Hilbert space $L^2(\R)$ corresponds to zooming at scale $\hbar^0$ in the phase space. The precise statement of~\eqref{eq:approx_proj1} is Proposition~\ref{prop:Asymptotics_K_N} in Section~\ref{sec:proof}.

To explain how a different asymptotics arises at the boundary $\partial D$, we need to study the rescaled harmonic oscillator operator in a neighbourhood of the classical turning points $x=\pm\sqrt{2\mu}$. Let us zoom at scale $\hbar^\alpha$, with $\alpha>0$ an exponent to be determined:
\begin{align*}
\left(V_{\sqrt{2\mu},\hbar^{\alpha}}H_{\operatorname{h.o.}}V_{\sqrt{2\mu},\hbar^{\alpha}}^{-1}f\right)(u)=\frac{1}{2}\left[-\hbar^{2(1-\alpha)}\frac{d^2}{d u^2}+\hbar^{2\alpha} u^2+2^{\frac{3}{2}}\mu^{\frac{1}{2}}\hbar^\alpha u +2\mu\right]f(u).
\end{align*}
If we choose $\alpha=\frac{2}{3}$ we then expect that 
\be
\chi_{(-\infty,\hbar N)}\left(V_{\sqrt{2\mu},\hbar^{\frac{2}{3}}}H_{\operatorname{h.o.}}V_{\sqrt{2\mu},\hbar^{\frac{2}{3}}}^{-1}\right)
\simeq  \chi_{(-\infty,0)}\left(-\frac{d^2}{d u^2}+c_{\mu}^3\hat{u}\right),
\label{eq:approx_proj2}
\ee
for $\hbar\to0$, $N\to\infty$, with $\hbar N=\mu$, where $\hat{u}$ is the position operator and $c_{\mu}=2^{\frac{1}{2}}\mu^{\frac{1}{6}}$ is a constant given in~\eqref{eq:const}. Thus, the limit  at the edge is related to the Airy differential operator for which we have the following spectral result, adapted from~\cite[Lemma A.7]{Bornemann16}.
\begin{lem}
\label{lem:Greenf2} 
The operator $-\frac{d^2}{d u^2}+c_{\mu}^3\hat{u}$ is essentially self-adjoint on $C^{\infty}_c(\R)$, and its self-adjoint extension has only absolutely continuous spectrum $\sigma\left(-\frac{d^2}{d u^2}+c_{\mu}^3\hat{u} \right)=\sigma_{\mathrm{ac}}\left(-\frac{d^2}{d u^2}+c_{\mu}^3\hat{u}\right)=(-\infty,\infty)$. Moreover,
\be
\chi_{(-\infty,0)}\left(-\frac{d^2}{d u^2}+c_{\mu}^3\hat{u}\right)\doteq 
c_{\mu}K_{\Ai}(c_{\mu}u,c_{\mu}v).
\ee
where $K_{\Ai}$ is the Airy kernel~\eqref{eq:Airyk}.
\end{lem}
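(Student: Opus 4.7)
My plan is to reduce $H:=-\frac{d^2}{du^2}+c_{\mu}^{3}\hat{u}$ to the \emph{standard} Airy operator $A:=-\frac{d^2}{du^2}+\hat{u}$ by a dilation, and then invoke the well-known spectral theory of $A$. Define the unitary $U:L^{2}(\R)\to L^{2}(\R)$ by $(Uf)(u):=c_{\mu}^{-1/2}f(u/c_{\mu})$; a direct computation on $C^{\infty}_c(\R)$ gives $UHU^{-1}=c_{\mu}^{2}A$. Since essential self-adjointness, the spectrum and its absolutely continuous part, and spectral projections are all preserved under unitary conjugation and under multiplication by a positive scalar, it suffices to prove every assertion for $A$.

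For essential self-adjointness of $A$ on $C^{\infty}_c(\R)$ I would use a Weyl limit-point analysis at $\pm\infty$: the Airy equation $-f''+uf=\l f$ has no $L^{2}$-solution near $+\infty$ (the recessive Airy function decays rapidly, the dominant one blows up), while both fundamental solutions oscillate with amplitude $\sim|u|^{-1/4}$ near $-\infty$ and thus fail to be square integrable there; hence $A$ is limit point at both ends and is essentially self-adjoint. To diagonalise the closure (still denoted $A$), I would use the Airy transform
\be
(Tf)(\l):=\int_{\R}\Ai(u-\l)\,f(u)\,du,\qquad f\in\SS(\R).
\ee
Since $\Ai''(z)=z\Ai(z)$, each map $u\mapsto\Ai(u-\l)$ is a bounded generalised eigenfunction of $A$ with eigenvalue $\l\in\R$; the orthogonality/completeness relation $\int_{\R}\Ai(u-\l)\Ai(u-\l')\,du=\delta(\l-\l')$ promotes $T$ to a unitary isomorphism of $L^{2}(\R)$ that conjugates $A$ to multiplication by $\l$, from which the identification $\sigma(A)=\sigma_{\mathrm{ac}}(A)=\R$ follows at once.

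The kernel computation is then immediate from the functional calculus in the spectral representation:
\be
\chi_{(-\infty,0)}(A)\doteq\int_{-\infty}^{0}\Ai(u-\l)\Ai(v-\l)\,d\l=\int_{0}^{+\infty}\Ai(u+s)\Ai(v+s)\,ds=K_{\Ai}(u,v),
\ee
by the definition of the Airy kernel recalled in Appendix~\ref{app:B}. Transferring back through $U$ by a short change of variables in the identity $\chi_{(-\infty,0)}(H)=U^{-1}\chi_{(-\infty,0)}(A)U$ yields the stated $\chi_{(-\infty,0)}(H)\doteq c_{\mu}K_{\Ai}(c_{\mu}u,c_{\mu}v)$.

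The main technical obstacle is the rigorous unitarity of the Airy transform $T$. The cleanest route is the integral representation $\Ai(z)=\frac{1}{2\pi}\int_{\R}e^{i(zk+k^{3}/3)}\,dk$: after a Fubini exchange that is legitimate on $\SS(\R)$, it factors $T$ as the composition of the Fourier transform with the unitary multiplication by $e^{ik^{3}/3}$, so unitarity of $T$ reduces to the Plancherel theorem, and the statement is extended to all of $L^{2}(\R)$ by density. Alternatively one may invoke the one-dimensional Weyl-Titchmarsh spectral theory for Schrödinger operators with linear potential, as in the reference \cite{Bornemann16} quoted in the statement.
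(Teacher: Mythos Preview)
The paper does not supply a proof of this lemma; it simply records it as ``adapted from~\cite[Lemma~A.7]{Bornemann16}'' and uses the result as a black box in the heuristic discussion of Section~\ref{sect:scaling}. Your proposal therefore goes well beyond what the paper does: you give a self-contained argument via the dilation $UHU^{-1}=c_\mu^{2}A$, Weyl's limit-point/limit-circle alternative for essential self-adjointness, and the Airy transform to diagonalise $A$ and compute the spectral projection. This is the standard route and is essentially what Bornemann's appendix does; the factorisation of the Airy transform as Fourier--multiplication-by-$e^{ik^{3}/3}$--Fourier is a clean way to obtain unitarity, and the kernel identity then follows from the integral representation~\eqref{eq:int_rep_Ai}.

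One small slip to fix: in your limit-point analysis at $+\infty$ you write that the Airy equation ``has no $L^{2}$-solution near $+\infty$'', but the recessive solution $\Ai(u-\l)$ \emph{is} square-integrable there (it decays like $\exp(-\tfrac{2}{3}u^{3/2})$). The correct statement is that there is \emph{exactly one} $L^{2}$ solution up to scalar multiples, which is still the limit-point case; your conclusion (essential self-adjointness) is unaffected. Your analysis at $-\infty$ is correct as stated.
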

The precise statement of~\eqref{eq:approx_proj2} is Proposition~\ref{prop:Asymptotics_K_N2} in Section~\ref{sec:proof}.

From~\eqref{eq:symb_ker}, we see that a rescaling $\hbar^{\frac{2}{3}}$ at the edge in the Hilbert space $L^2(\R)$ corresponds to zooming at scale $\hbar^{\frac{2}{3}-1}=\hbar^{-\frac{1}{3}}$ around the boundary $\partial D$ in the phase space. This explains the rescaling in Theorem~\ref{thm:convergenceAiry_intro}.

\section{Proofs}
\label{sec:proof}
The proofs presented in this section are based on the following three observations: 
\begin{enumerate}
\item The asymptotics of the Weyl symbols $\sigmaKc$, $\sigmaQc$ is related (by Fourier transform in the second variable $\FF_2$) to the asymptotics of the integral kernels ${K_N}(x,y)$ and ${Q_N}(x,y)$.
\item The kernel ${K_N}(x,y)$ is a sum of $N$ terms (cross products of Hermite functions), see
Eq.~\eqref{eq:K_N1}. However, thanks to Christoffel-Darboux formula this sum can be expressed in terms of the $N$-th and $(N-1)$-th Hermite functions only. Hence, studying the large $N$ asymptotics with $\hbar N\sim\mu$ amounts to study the large degree asymptotics of the Hermite functions. 
\item  $\sigmaQc(x,p)$ is `asymptotically close' to $p\sigmaKc(x,p)$ for $N \to \infty$, $\hbar \to 0$ with $\hbar N = \mu >0$, see Proposition~\ref{prop:bound_symbols}, therefore
once we know the asymptotics of $K_N$ (and hence of $\sigmaKc$) we can directly deduce the asymptotics of  $\sigmaQc$.
\end{enumerate}
\subsection{Asymptotics of the kernels}
By telescoping the sum in~\eqref{eq:K_N1} and using the three-term relation~\eqref{eq:Hermite_x}, we get the celebrated \emph{Christoffel-Darboux formula}~\cite{Simon08}.
\begin{lem}\label{lem:CD} For all $u,v\in\mathbb{R}$, 
\be
K_N(u,v)=
\begin{cases}
\displaystyle\sqrt{\frac{\hbar N}{2}}\frac{\psi_{N}^{\hbar} (u)\psi_{N-1}^{\hbar} (v)-\psi_{N-1}^{\hbar} (u)\psi_{N}^{\hbar} (v)}{u-v}&\text{if $u\neq v$}\\\\
\displaystyle\sqrt{\frac{\hbar N}{2}}({\psi_{N}^{\hbar}}' (u)\psi_{N-1}^{\hbar} (u)-\psi_{N}^{\hbar} (u){\psi_{N-1}^{\hbar}}'(u))&\text{if $u= v$}
\end{cases}.
\ee
\end{lem}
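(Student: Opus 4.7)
The plan is to derive the identity directly from the three-term recurrence for Hermite functions, which in the conventions of Appendix~\ref{app:A} (equation~\eqref{eq:Hermite_x}) reads
\[
u\,\psi_k^{\hbar}(u)=\sqrt{\frac{\hbar(k+1)}{2}}\,\psi_{k+1}^{\hbar}(u)+\sqrt{\frac{\hbar k}{2}}\,\psi_{k-1}^{\hbar}(u),
\]
with the usual convention $\psi_{-1}^{\hbar}\equiv 0$. This is the classical Christoffel--Darboux derivation specialised to the Hermite family.

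First I would multiply $\psi_k^{\hbar}(u)\psi_k^{\hbar}(v)$ by $(u-v)$ and substitute the recurrence into both $u\psi_k^{\hbar}(u)$ and $v\psi_k^{\hbar}(v)$. Introducing the auxiliary quantity
\[
b_k:=\sqrt{\frac{\hbar(k+1)}{2}}\bigl[\psi_{k+1}^{\hbar}(u)\psi_k^{\hbar}(v)-\psi_k^{\hbar}(u)\psi_{k+1}^{\hbar}(v)\bigr],
\]
one sees, after reorganising the four resulting cross terms, that
\[
(u-v)\,\psi_k^{\hbar}(u)\psi_k^{\hbar}(v)=b_k-b_{k-1}.
\]
Summing this identity over $k=0,\dots,N-1$ produces a telescoping cancellation,
\[
(u-v)\,K_N(u,v)=b_{N-1}-b_{-1}=\sqrt{\frac{\hbar N}{2}}\bigl[\psi_N^{\hbar}(u)\psi_{N-1}^{\hbar}(v)-\psi_{N-1}^{\hbar}(u)\psi_N^{\hbar}(v)\bigr],
\]
since $b_{-1}=0$ thanks to the vanishing prefactor $\sqrt{\hbar\cdot 0/2}$. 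Dividing by $u-v$ yields the off-diagonal case of the lemma.

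For the diagonal value $u=v$ both sides of the off-diagonal formula are of the form $0/0$, but since each $\psi_k^{\hbar}$ lies in $C^{\infty}(\R)$ I would invoke L'H\^opital's rule, differentiating the numerator and denominator in $v$ and evaluating at $v=u$. A straightforward computation produces
\[
K_N(u,u)=\sqrt{\frac{\hbar N}{2}}\bigl[{\psi_N^{\hbar}}'(u)\psi_{N-1}^{\hbar}(u)-\psi_N^{\hbar}(u){\psi_{N-1}^{\hbar}}'(u)\bigr],
\]
which is precisely the stated diagonal formula.

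No genuine obstacle arises; the argument is essentially bookkeeping. The only points requiring care are to match exactly the normalisation constants of the three-term recurrence as given in~\eqref{eq:Hermite_x}, and to handle the boundary term $b_{-1}$ via the convention $\psi_{-1}^{\hbar}\equiv 0$ (equivalently, by noting that its coefficient vanishes).
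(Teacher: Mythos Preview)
Your proof is correct and follows exactly the approach indicated in the paper, which simply states that the formula is obtained by telescoping the sum~\eqref{eq:K_N1} via the three-term relation~\eqref{eq:Hermite_x}. Your write-up is a faithful expansion of that one-line sketch, with the diagonal case handled by L'H\^opital as expected.
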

Thus, the large-$N$ asymptotics of the \emph{Christoffel-Darboux kernel} $K_N(x,y)$ boils down to the classical subject of large degree asymptotics of orthogonal polynomials. 
A consequence of the Plancherel-Rotach asymptotics for $\psi_N^{\hbar}(x)$ (Equations~\eqref{eq:PR_bulk}-\eqref{eq:PR_edge}) are the following asymptotic behaviours of the kernel $K_N(x,y)$. 
\begin{prop}[Bulk asymptotics of the Christoffel-Darboux kernel]
\label{prop:Asymptotics_K_N}
Suppose that $\hbar=\hbar_N$ is the sequence defined by the condition $\hbar N=\mu$. Then, for any compact sets $U\Subset\R$ and $V\Subset \R^2$, and for any  $\alpha,\beta\in\{0,1\}$, there exists a constant $C>0$ such that
\be
\sup_{x\in U}\sup_{ (t,s)\in V}\left|\partial^{\alpha}_t\partial^{\beta}_s\left\{\hbar K_N\left(x+\hbar t,x+\hbar s\right)-\mu\rho_{\mu}(x)K_{\sine}\left(\mu\rho_{\mu}(x)t,\mu\rho_{\mu}(x)s\right)\right\}\right|\leq C \hbar,
\label{eq:convergence_K_N_K_sine}
\ee
where  $\rho_{\mu}(x)$ is the semicircular density~\eqref{eq:semic}, and $K_{\sine}$ is the sine kernel~\eqref{eq:sinek}. 
\end{prop}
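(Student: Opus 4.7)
The plan is to combine the Christoffel-Darboux identity of Lemma~\ref{lem:CD} with the Plancherel-Rotach bulk asymptotics~\eqref{eq:PR_bulk} for the Hermite functions, applied on a window of width $O(\hbar)$ around a fixed bulk point $x$. For $t \ne s$, Lemma~\ref{lem:CD} gives
$$
\hbar K_N(x+\hbar t, x+\hbar s) \;=\; \sqrt{\frac{\hbar N}{2}}\,\frac{\psi_N^{\hbar}(x+\hbar t)\,\psi_{N-1}^{\hbar}(x+\hbar s) - \psi_{N-1}^{\hbar}(x+\hbar t)\,\psi_N^{\hbar}(x+\hbar s)}{t-s},
$$
so the whole problem reduces to a uniform expansion of $\psi_{N-1}^{\hbar}$ and $\psi_N^{\hbar}$ on that window. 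Under the coupling $\hbar N = \mu$, the Plancherel-Rotach expansion~\eqref{eq:PR_bulk} can be recast as $\psi_k^{\hbar}(x+\hbar\tau) = A(x)\cos(\Phi_k(x) + \mu\rho_\mu(x)\tau) + O(\hbar^{1/2})$ uniformly for $\tau$ in a compact set, where $A(x)$ is the common leading amplitude for $k \in \{N-1,N\}$ and the two phases $\Phi_{N-1}(x), \Phi_N(x)$ differ by $\pi/2 + O(\hbar)$.

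Inserting these expansions and using $\cos\alpha \sin\beta - \sin\alpha\cos\beta = \sin(\beta-\alpha)$, the $N$-dependent phase $\Phi_N(x)$ cancels and the bracketed cross-product collapses to $A(x)^2 \sin(\mu\rho_\mu(x)(t-s)) + O(\hbar)$. The constant $\sqrt{\hbar N/2}\, A(x)^2$ is then identified with $\mu\rho_\mu(x)/\pi$ through the normalization of $\psi_N^{\hbar}$ (equivalently, from the semicircular density of~Eq.~\eqref{eq:semic}), producing exactly $\mu\rho_\mu(x) K_{\sine}(\mu\rho_\mu(x) t, \mu\rho_\mu(x) s)$. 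The diagonal $t = s$ is handled either via the confluent branch of Lemma~\ref{lem:CD} combined with the three-term relation~\eqref{eq:Hermite_p} (which writes $(\psi_N^{\hbar})'$ as a bounded linear combination of $\psi_{N\pm 1}^{\hbar}$), or equivalently by noting that the numerator is antisymmetric in $(t,s)$, hence vanishes on the diagonal, and passing to the l'H\^opital limit. The derivatives $\partial_t^{\alpha}\partial_s^{\beta}$ with $\alpha,\beta \in \{0,1\}$ are handled identically by differentiating the Plancherel-Rotach expansion term-by-term; since $\hbar(\psi_N^{\hbar})'(x+\hbar t)$ is, via~\eqref{eq:Hermite_p}, a uniformly bounded combination of $\psi_{N\pm 1}^{\hbar}(x+\hbar t)$ under $\hbar N = \mu$, the $O(\hbar)$ remainder survives these differentiations.

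The main obstacle is to keep all error terms genuinely $O(\hbar)$ and uniform over the compact sets $U$ and $V$, in particular near the diagonal $t=s$ where the factor $1/(t-s)$ in Christoffel-Darboux is singular. This is resolved by the antisymmetry observation above: the numerator factorizes as $(t-s)$ times a smooth, uniformly bounded (in $C^1$) function of $(t,s)$, so the division by $t-s$ produces no blow-up. Uniformity in $x$ over a compact subset of the bulk $(-\sqrt{2\mu},\sqrt{2\mu})$ is the compact-set version of Plancherel-Rotach (Deift~\cite{Deift99}); if $U$ touches the complement of the bulk, one invokes the edge expansion~\eqref{eq:PR_edge} and the exponential decay of $\psi_N^{\hbar}$ outside the support to show that both $\hbar K_N$ and the limit sine kernel expression are individually $O(\hbar)$, so~\eqref{eq:convergence_K_N_K_sine} holds trivially there.
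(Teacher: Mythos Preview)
Your approach is essentially the same as the paper's: Christoffel--Darboux plus Plancherel--Rotach, with the derivative relation~\eqref{eq:derivative_H} handling the $C^1$ statement. The one place where the paper is more explicit than you is the treatment of the diagonal $t=s$. You invoke antisymmetry of the numerator and say that ``the numerator factorizes as $(t-s)$ times a smooth, uniformly bounded function, so no blow-up''. That is true of the \emph{exact} numerator, but the $O(\hbar)$ remainder coming from Plancherel--Rotach is \emph{not} antisymmetric in $(t,s)$; if you insert the asymptotics first and only then divide by $(t-s)$, you cannot control the quotient of the error uniformly as $t\to s$, and a naive near/far splitting only yields $O(\hbar^{1/2})$ rather than $O(\hbar)$. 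The paper avoids this by first desingularizing via the elementary identity
\[
\frac{f(t)g(s)-f(s)g(t)}{t-s}
= g(s)\int_0^1 f'(\lambda t+(1-\lambda)s)\,d\lambda
- f(s)\int_0^1 g'(\lambda t+(1-\lambda)s)\,d\lambda,
\]
then using~\eqref{eq:derivative_H} to rewrite $(\psi_N^{\hbar})'$, $(\psi_{N-1}^{\hbar})'$ as Hermite functions, and \emph{only then} inserting Plancherel--Rotach. This order of operations is precisely what secures the uniform $O(\hbar)$ rate across the diagonal; your sketch would need this step (or an equivalent Hadamard-type representation) to close. Two incidental remarks: the Plancherel--Rotach remainder in~\eqref{eq:PR_bulk} is already $O(N^{-1})=O(\hbar)$, not $O(\hbar^{1/2})$; and the linear phase increment on the $\hbar$-window is $\pi\mu\rho_\mu(x)\tau$, with the factor $\pi$ absorbed in the definition~\eqref{eq:sinek} of $K_{\sine}$.
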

\begin{prop}[Edge asymptotics of the Christoffel-Darboux kernel]
\label{prop:Asymptotics_K_N2}
Suppose that $\hbar=\hbar_N$ is the sequence defined by the condition $\hbar N=\mu$.  For any compact set $W\Subset \C^2$, and for any  $\alpha,\beta\in\{0,1\}$, there exists a constant $C>0$ such that
    \be
  \sup_{ (t,s)\in W}\left| \partial^{\alpha}_t\partial^{\beta}_s\left\{ \hbar^{\frac{2}{3}}K_N\left(\sqrt{2\mu}+\hbar^{\frac{2}{3}}t,\sqrt{2\mu}+\hbar^{\frac{2}{3}}s\right)
    -c_{\mu}K_{\Ai}\left(c_{\mu}t,c_{\mu}s\right)\right\}\right|\leq C\hbar^{\frac{1}{3}},
    \ee
    where $c_{\mu}$ is given in~\eqref{eq:const}, and  $K_{\Ai}$ is the Airy kernel~\eqref{eq:Airyk}.
\end{prop}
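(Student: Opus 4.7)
The plan is to mirror exactly the strategy that yields the bulk asymptotics of Proposition~\ref{prop:Asymptotics_K_N}, replacing the bulk Plancherel--Rotach formula~\eqref{eq:PR_bulk} by its edge counterpart~\eqref{eq:PR_edge}. Concretely, I would use the Christoffel--Darboux formula of Lemma~\ref{lem:CD} to express $K_N$ as a Wronskian-type combination of the top two Hermite functions, and then feed in the Airy-type edge asymptotics.

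Setting $u = \sqrt{2\mu}+\hbar^{2/3}t$ and $v = \sqrt{2\mu}+\hbar^{2/3}s$ with $t\neq s$, Lemma~\ref{lem:CD} gives
\[
\hbar^{2/3}K_N(u,v) \;=\; \sqrt{\tfrac{\mu}{2}}\;\frac{\psi_N^{\hbar}(u)\psi_{N-1}^{\hbar}(v)-\psi_{N-1}^{\hbar}(u)\psi_N^{\hbar}(v)}{t-s},
\]
since $\hbar N=\mu$. The edge Plancherel--Rotach formula~\eqref{eq:PR_edge} yields, uniformly on compacts in $\C$, an expansion of the form $\psi_N^{\hbar}(\sqrt{2\mu}+\hbar^{2/3}t) = \hbar^{-1/6}\bigl(A\,\Ai(c_\mu t) + O(\hbar^{1/3})\bigr)$ and a matching expansion for $\psi_{N-1}^{\hbar}$, whose leading term differs from the $\psi_N^{\hbar}$ expansion by a controlled infinitesimal shift; this shift, once substituted into the antisymmetrised numerator above, produces the Airy derivative and yields the Wronskian $\Ai(c_\mu t)\Ai'(c_\mu s)-\Ai'(c_\mu t)\Ai(c_\mu s)$. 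Tracking the prefactors together with $\sqrt{\mu/2}$ and the definition of $c_\mu$ in~\eqref{eq:const} then reproduces exactly $c_\mu K_{\Ai}(c_\mu t, c_\mu s)$ with an $O(\hbar^{1/3})$ remainder.

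The main obstacles are twofold. First, the compact set $W\Subset\C^2$ contains diagonal points $t=s$ at which the quotient is formally $0/0$; this is resolved by observing (cf.\ Remark~\ref{rem:complex} and the entirety of $\Ai$) that both $\hbar^{2/3}K_N(\sqrt{2\mu}+\hbar^{2/3}\cdot,\sqrt{2\mu}+\hbar^{2/3}\cdot)$ and $c_\mu K_{\Ai}(c_\mu\cdot,c_\mu\cdot)$ are entire functions on $\C^2$, since the numerators vanish to first order on the diagonal and the removable singularity extends holomorphically. Convergence established on a slightly enlarged compact set $W'\Supset W$ avoiding an arbitrarily thin neighbourhood of the diagonal thus propagates by the maximum modulus principle (or directly by Cauchy's formula for the analytic continuation) to all of $W$. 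Second, one needs uniform control of the partial derivatives of order $\alpha,\beta\in\{0,1\}$, not merely pointwise values: this is handled by proving the $C^{0}$ bound on a slightly thicker complex neighbourhood and then invoking Cauchy's integral formula, which converts $O(\hbar^{1/3})$ uniform control on $W'$ into $O(\hbar^{1/3})$ uniform control of first partials on $W$, at the cost of a constant depending only on $\operatorname{dist}(\partial W',W)$.

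The delicate computational point will be bookkeeping the prefactors emerging from~\eqref{eq:PR_edge} so that they match $c_\mu$ and the $2^{1/2}\mu^{1/6}$ normalisation underlying Lemma~\ref{lem:Greenf2}; this is a routine but error-prone calculation that essentially checks that the heuristic in Section~\ref{sect:scaling} (rescaling the harmonic oscillator at $\sqrt{2\mu}$ on the scale $\hbar^{2/3}$ and obtaining $-d^2/du^2 + c_\mu^3\hat{u}$) is quantitatively consistent with the standard Airy-type Plancherel--Rotach formula. Once the constants agree, the uniform rate $\hbar^{1/3}$ is inherited directly from the remainder term in~\eqref{eq:PR_edge}.
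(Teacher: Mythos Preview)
Your overall plan---Christoffel--Darboux plus edge Plancherel--Rotach, then analyticity and Cauchy's formula to upgrade $C^0$ to $C^1$ bounds---matches the paper's. The substantive difference is the intermediate step. You keep the Christoffel--Darboux numerator in the form $\psi_N\psi_{N-1}-\psi_{N-1}\psi_N$ and rely on the ``infinitesimal shift'' between the turning points of $\psi_N$ and $\psi_{N-1}$ to manufacture the Airy derivative; since the leading Plancherel--Rotach term of $\psi_{N-1}$ at the $\psi_N$-edge is again $\Ai$, the top orders cancel and the Airy Wronskian emerges only at the next order, which forces you to track subleading constants precisely. The paper sidesteps this by first applying~\eqref{eq:derivative_H} to eliminate $\psi_{N-1}$ altogether, rewriting the kernel as
\[
K_N(x,y)=\frac{\hbar}{2}\,\frac{\psi_N^{\hbar}(x){\psi_N^{\hbar}}'(y)-\psi_N^{\hbar}(y){\psi_N^{\hbar}}'(x)}{x-y}-\frac{1}{2}\psi_N^{\hbar}(x)\psi_N^{\hbar}(y).
\]
After rescaling (setting $\Psi_N^{\hbar}(t)=\hbar^{1/6}\psi_N^{\hbar}(\sqrt{2\mu}+\hbar^{2/3}t)$) the first term is already the Wronskian quotient of $\Psi_N^{\hbar}$ and ${\Psi_N^{\hbar}}'$, so the leading Plancherel--Rotach term \emph{directly} produces $c_\mu K_{\Ai}$ with no cancellation, and the $O(\hbar^{1/3})$ rate comes transparently from the second term $-\tfrac{\hbar^{1/3}}{2}\Psi_N^{\hbar}(t)\Psi_N^{\hbar}(s)$. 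This also makes the constant bookkeeping you flag as ``error-prone'' essentially automatic.

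One point in your write-up needs repair: you propose to prove the bound on a set $W'\Supset W$ that \emph{avoids} a neighbourhood of the diagonal and then propagate to $W$ by the maximum modulus principle, but if $W$ meets the diagonal then no such $W'$ can contain $W$. The clean fix (which the paper uses implicitly) is that once the kernel is written in the $\Psi_N,{\Psi_N}'$ form above, entirety of $\Psi_N^{\hbar}$ plus Cauchy gives locally uniform convergence of ${\Psi_N^{\hbar}}'$, and then the Wronskian quotient---an entire function of $(t,s)$---converges uniformly on all of $W$ with no need to excise the diagonal; equivalently, one may use the integral representation $\frac{f(t)g(s)-f(s)g(t)}{t-s}=g(s)\int_0^1 f'(\lambda t+(1-\lambda)s)\,d\lambda-f(s)\int_0^1 g'(\lambda t+(1-\lambda)s)\,d\lambda$ (as in the bulk proof) to remove the singularity from the outset.
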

The scaling limits of the Christoffel-Darboux kernel to the sine and Airy kernel are well-known results. It is perhaps less known that the local uniform convergence can be promoted to their derivatives as well. We outline here a proof, adapting the presentation of the book by Anderson, Guionnet and Zeitouni~\cite[Chap. 3]{Anderson10}.

\begin{notation}
From now on, $(\hbar_N)_{N\geq1}$ is the positive sequence such that product $\hbar_NN=\mu$, where $\mu$ is a fixed positive number. We will write $\hbar$ instead of $\hbar_N$ for short, when no confusion arises.
We will also use the following shorthand
\begin{gather}
K_{N,x_0,\gamma}(t,s):=\gamma K_{N}(x_0+\gamma t,x_0+\gamma s).
\end{gather}
\end{notation}
\begin{proof}[Proof of Propositions~\ref{prop:Asymptotics_K_N} and~\ref{prop:Asymptotics_K_N2}] 
Consider first the case $\alpha=\beta=0$:
\be
\label{eq:proof1}
\sup_{x\in U}\sup_{ (t,s)\in V}\left| K_{N,x,\hbar}\left(t,s\right)-\mu\rho_{\mu}(x)K_{\sine}\left(\mu\rho_{\mu}(x)t,\mu\rho_{\mu}(x)s\right)\right|\leq C \hbar,
\ee
and 
\be
\label{eq:proof2}
  \sup_{ (t,s)\in W}\left| K_{N,\sqrt{2\mu},\hbar^{\frac{2}{3}}}\left(t,s\right)
    -c_{\mu}K_{\Ai}\left(c_{\mu}t,c_{\mu}s\right)\right|\leq C\hbar^{\frac{1}{3}}.
\ee
It is useful to get rid of the removable singularity $t=s$ in $K_{N,x,\hbar}$.
Toward this end, noting that for any differentiable functions $f, g$ on $\R$,
$$\frac{f (t)g(s) - f (s)g(t)}{
t-s}=
   g(s)\int_{0}^{1}
f' (\lambda t+(1-\lambda)s)d\lambda-
   f(s)\int_{0}^{1}
g' (\lambda t+(1-\lambda)s)d\lambda,
$$
we deduce that
\begin{align*}
K_{N,x,\hbar}(t,s)
&=
\sqrt{\frac{\hbar N}{2}}\psi_{N-1}^{\hbar} (x+\hbar s)\int_0^1{\psi_{N}^{\hbar}}'(\lambda (x+\hbar t)+(1-\lambda)(x+\hbar s))d\lambda\\
&-\sqrt{\frac{\hbar N}{2}}\psi_{N}^{\hbar} (x+\hbar s)\int_0^1{{\psi}_{N-1}^{\hbar}}'(\lambda (x+\hbar t)+(1-\lambda)(x+\hbar s))d\lambda\\
&=
\sqrt{\frac{\hbar N}{2}}\psi_{N-1}^{\hbar} (x+\hbar s)\int_0^1
\left(\sqrt{\frac{2N}{\hbar}}\psi_{N-1}^{\hbar} (z)-\frac{z}{\hbar}\psi_{N}^{\hbar} (z)\right)_{z=x+\hbar[\lambda  t+(1-\lambda) s]}d\lambda\\
&-\sqrt{\frac{\hbar N}{2}}\psi_{N}^{\hbar} (x+\hbar s)\int_0^1
\left(\sqrt{\frac{2N-2}{\hbar}}\psi_{N-2}^{\hbar} (z)-\frac{z}{\hbar}\psi_{N-1}^{\hbar} (z)\right)_{z=x+\hbar[\lambda  t+(1-\lambda) s]}d\lambda
\end{align*}
where we used relation~\eqref{eq:derivative_H} in the last equality.

We can now insert the uniform Plancherel-Rotach asymptotics~\eqref{eq:PR_bulk}-\eqref{eq:PR_out}, perform the integrals and use elementary trigonometric identities to conclude the proof of~\eqref{eq:proof1}. 

To prove the $C^1$-local uniform convergence,
 we start by taking the derivative(s) of the Christoffel-Darboux kernel $\partial^{\alpha}_t\partial^{\beta}_sK_{N,x,\hbar}\left(t,s\right)$. This entails computing the derivatives of Hermite functions. Now the trick is to write the derivative ${\psi_n^{\hbar}}'$ as a combination of Hermite functions (not differentiated) using again formula~\eqref{eq:derivative_H}. Hence, the local uniform asymptotics of ${\psi_n^{\hbar}}'$ can be read off from the Plancherel-Rotach asymptotics~\eqref{eq:PR_bulk}-\eqref{eq:PR_out} of $\psi_n^{\hbar}$. The proof of the $C^1$-convergence is therefore a simple modification of the proof of~\eqref{eq:proof1}.

To prove~\eqref{eq:proof2}, we  use again~\eqref{eq:derivative_H} to write the kernel as
\begin{align*}
K_{N}(x,y)&={\frac{\hbar }{2}}\frac{\psi_{N}^{\hbar} (x){\psi_{N}^{\hbar}}' (y)-\psi_{N}^{\hbar} (y){\psi_{N}^{\hbar}}'(x)}{x-y}-\frac{1}{2}\psi_{N}^{\hbar} (x)\psi_{N}^{\hbar} (y).
\end{align*}
If we set
\be
\Psi_N^{\hbar}(t):=\hbar^{-\frac{1}{6}}\left(V_{\sqrt{2\mu},\hbar^{\frac{2}{3}}}\psi_N^{\hbar}\right)(t)=\hbar^{\frac{1}{6}}\psi_{N}^{\hbar}(\sqrt{2\mu}+\hbar^{\frac{2}{3}}t),
\label{eq:Hermite_scaled_edge}
\ee
then,
\be
K_{N,\sqrt{2\mu},\hbar^{\frac{2}{3}}}\left(t,s\right)
={\frac{1 }{2}}\frac{\Psi_N^{\hbar}(t){\Psi_N^{\hbar}}'(s)-\Psi_N^{\hbar}(s){\Psi_N^{\hbar}}'(t)}{t-s}-\frac{\hbar^{\frac{1}{3}}}{2}\Psi_N^{\hbar}(t)\Psi_N^{\hbar}(s).
\label{eq:kernel_scaled_edge}
\ee
By the Plancherel-Rotach asymptotics~\eqref{eq:PR_edge}, for any compact set $J\Subset\C$,
\be
\lim_{N\to\infty}\sup_{t\in J}|\Psi_N^{\hbar}(t)-2^{\frac{1}{2}}c_{\mu}^{-\frac{1}{2}}\operatorname{Ai}(c_{\mu}t)|=0.
\ee

 Since the functions $\Psi_N^{\hbar}$ are entire, the above locally uniform convergence entails the uniform convergence of ${\Psi_N^{\hbar}}'$  to $\Ai'$ on compact subsets of $\C$ (a standard application of Cauchy's integral formula).

 By the very same argument, each finite-$N$ kernel $K_{N,\sqrt{2\mu},\hbar^{\frac{2}{3}}}$ is analytic, and hence their derivatives converge to the derivatives of the Airy kernel. The proof is complete.
\end{proof}

\begin{rem}Since $\sigmaPc\in \SS(\R_x\times\R_p)$, we have that the Fourier transform are rapidly decreasing functions too, $\FF_2\sigmaPc\in  \SS(\R_x\times\R_y)$. 
On the contrary,
$$
\FF_2\chi_D(x,y)=\mu\rho_{\mu}(x)K_{\sine}\left(-\mu\rho_{\mu}(x)y/2,\mu\rho_{\mu}(x)y/2\right)
=\frac{\sin\left[\sqrt{(2\mu-x^2)_+} y\right]}{\pi y}
$$ 
is not integrable in $\R_x\times\R_y$ and this tells that we cannot get in~\eqref{eq:convergence_K_N_K_sine} a convergence stronger than uniform on compact subsets.
\end{rem}
In order to prove Theorem~\ref{thm:convergenceAiry_intro} we will also need to  show that the Airy kernel on the antidiagonal is dominated by and integrable function.
\begin{lem} \label{lem:tail_K} There exist positive constants $C$, $c$ such that, for all $N\in\mathbb{N}$,  with $\hbar N$ fixed,
$$\left|K_{N,\sqrt{2\mu},\hbar^{\frac{2}{3}}}\left(-y,y\right)\right|\leq C e^{-c\left|y\right|^{\frac{3}{2}}},\quad \text{for all $y\in \R$}.$$
\end{lem}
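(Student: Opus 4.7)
The strategy is to exploit the explicit representation \eqref{eq:kernel_scaled_edge} of $K_{N,\sqrt{2\mu},\hbar^{2/3}}$ in terms of the rescaled Hermite function $\Psi_N^\hbar$ and its derivative, evaluated on the antidiagonal $t=-y,\,s=y$:
\begin{equation*}
K_{N,\sqrt{2\mu},\hbar^{\frac{2}{3}}}(-y,y)
= -\frac{1}{4y}\bigl[\Psi_N^\hbar(-y)\,{\Psi_N^\hbar}'(y) - \Psi_N^\hbar(y)\,{\Psi_N^\hbar}'(-y)\bigr]
- \frac{\hbar^{\frac{1}{3}}}{2}\Psi_N^\hbar(-y)\Psi_N^\hbar(y).
\end{equation*}
For every $y\neq 0$, one of the two arguments $\sqrt{2\mu}\pm\hbar^{2/3}y$ lies in the classically forbidden region, so that the corresponding Hermite function should decay like $\exp(-c|y|^{3/2})$, while the other lies in the oscillatory bulk and only contributes a bounded factor. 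The $1/y$ factor in front is harmless once $|y|$ is bounded away from $0$, and the removable singularity at $y=0$ is handled by continuity.

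The analytic input I would use is the following pair of uniform (in $N$, with $\hbar N=\mu$) estimates on the rescaled Hermite functions: there exist $C,c>0$ such that
\begin{equation*}
|\Psi_N^\hbar(t)|+|{\Psi_N^\hbar}'(t)|\;\leq\; C e^{-c\,t^{3/2}} \quad\text{for all } t\geq 0,
\qquad
|\Psi_N^\hbar(t)|+|{\Psi_N^\hbar}'(t)|\;\leq\; C \quad\text{for all } t\leq 0.
\end{equation*}
Granting these bounds and applying them to the displayed formula immediately yields $|K_{N,\sqrt{2\mu},\hbar^{2/3}}(-y,y)|\leq C(1/|y|+1)e^{-c|y|^{3/2}}$ for $|y|\geq 1$, say; for $|y|\leq 1$ the quantity is uniformly bounded by continuity and Proposition~\ref{prop:Asymptotics_K_N2}, and a crude majoration by $C e^{-c|y|^{3/2}}$ on this compact range (adjusting $C,c$) completes the proof.

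The main obstacle, and the only nonroutine ingredient, is establishing the uniform exponential decay of $\Psi_N^\hbar(t)$ and its derivative for $t\geq 0$ with the correct $t^{3/2}$ rate. This is a matching problem between the Airy regime \eqref{eq:PR_edge}, valid for $t$ in compact sets, and the outside regime \eqref{eq:PR_out}, valid for $\sqrt{2\mu}+\hbar^{2/3}t$ away from the turning point; the two asymptotic forms overlap in an intermediate strip where both apply, and the Airy-like decay $\Ai(c_\mu t)\sim\frac{1}{2\sqrt{\pi}}(c_\mu t)^{-1/4}\exp(-\tfrac{2}{3}(c_\mu t)^{3/2})$ glues to the exponential decay of the outer WKB approximation, producing a bound of the form $C\exp(-c\,t^{3/2})$ valid uniformly in $N$ on the whole half-line $t\geq 0$. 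The same Plancherel--Rotach machinery (together with Cauchy's integral formula exactly as in the proof of Proposition~\ref{prop:Asymptotics_K_N2}) transfers the bound from $\Psi_N^\hbar$ to ${\Psi_N^\hbar}'$, and the trivial $L^\infty$ bound $\|\psi_N^\hbar\|_\infty\leq C\hbar^{-1/6}$ takes care of the oscillatory side $t\leq 0$.
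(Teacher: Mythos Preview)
Your overall strategy is natural, but there is a genuine gap: the claimed uniform bound $|{\Psi_N^\hbar}'(t)|\leq C$ for all $t\leq 0$ is \emph{false}. Once $|t|$ is large enough that $x=\sqrt{2\mu}+\hbar^{2/3}t$ sits well inside the bulk, differentiating the Plancherel--Rotach oscillation~\eqref{eq:PR_bulk} brings down a factor proportional to $N$ from the phase; after the rescaling ${\Psi_N^\hbar}'(t)=\hbar^{5/6}{\psi_N^\hbar}'(x)$ one finds $|{\Psi_N^\hbar}'(-y)|\asymp y^{1/4}$ for $1\ll y\ll \hbar^{-2/3}$, not $O(1)$. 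The Krasikov $L^\infty$ bound on $\psi_N^\hbar$ controls $\Psi_N^\hbar$ but says nothing about its derivative. Your argument can be salvaged with the weaker estimate $|{\Psi_N^\hbar}'(-y)|\leq C(1+y)^{1/4}$, which is still beaten by the $e^{-cy^{3/2}}$ coming from the other factor, but proving this uniformly in $N$ over the whole half-line is a genuine WKB/Sturm--Liouville estimate, not the ``trivial'' sup bound. Likewise, transferring the $e^{-ct^{3/2}}$ decay from $\Psi_N^\hbar$ to ${\Psi_N^\hbar}'$ for $t\geq 0$ via Cauchy's formula only works on compacta; for the global bound one must argue through the ODE (e.g.\ $-{\Psi_N^\hbar}'(t)=\int_t^\infty V\Psi_N^\hbar$), not by analyticity alone.

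The paper circumvents derivative estimates in the oscillatory region altogether by a different device. Starting from~\eqref{eq:kernel_scaled_edge} and using the Hermite differential equation, it rewrites the antidiagonal kernel as a sum $I_1+I_2+I_3$ of integrals over $u\in[0,\infty)$ whose dominant part $I_1=\sqrt{2}\mu^{2/3}\int_0^\infty \Psi_N^\hbar(u-y)\Psi_N^\hbar(u+y)\,du$ involves \emph{no derivatives}. Since $u+y>0$ for $y>0$, one factor always lies on the decaying side, and only two inputs are needed: the global Krasikov bound $|\Psi_N^\hbar|\leq C$, and the one-sided decay $\Psi_N^\hbar(s)\leq C e^{-cs^{3/2}}$ for $s\geq 0$. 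The latter is obtained not by asymptotic matching but by a clean Sonin--Polya comparison for the equation ${\Psi_N^\hbar}''=V\Psi_N^\hbar$, which gives $(\log\Psi_N^\hbar)'\leq -\sqrt{V}$ beyond the turning point and hence the $e^{-cs^{3/2}}$ bound uniformly in $N$ without patching regimes. The remaining terms $I_2,I_3$ carry prefactors $\hbar^{2/3}$ and $\hbar^{1/3}$ that absorb whatever mild growth appears.
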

\begin{proof} 
We start from the formula (recall the notation \eqref{eq:unitary_affine}):
\begin{multline}
K_{N,\sqrt{2\hbar N},\hbar^{\frac{1}{2}}\mu^{\frac{1}{6}}N^{-\frac{1}{6}}}(s,t)=\\
\sqrt{2} \mu^{\frac{1}{6}}N^{\frac{1}{3}}
\int_0^{\infty}
\left(V_{\sqrt{2\hbar N},\hbar^{\frac{1}{2}}\mu^{\frac{1}{6}}N^{-\frac{1}{6}}}\psi_N^{\hbar} \right)\left(u+s\right) 
\left(V_{\sqrt{2\hbar N},\hbar^{\frac{1}{2}}\mu^{\frac{1}{6}}N^{-\frac{1}{6}}}\psi_N^{\hbar}\right) \left(u+t\right) 
\,du\\
+ \frac{\mu^{\frac{1}{3}}}{2N^{\frac{1}{3}}}
\int_0^{\infty}(s+t+2u)
\left(V_{\sqrt{2\hbar N},\hbar^{\frac{1}{2}}\mu^{\frac{1}{6}}N^{-\frac{1}{6}}}\psi_N^{\hbar} \right)\left(u+s\right) 
\left(V_{\sqrt{2\hbar N},\hbar^{\frac{1}{2}}\mu^{\frac{1}{6}}N^{-\frac{1}{6}}}\psi_N^{\hbar}\right) \left(u+t\right) 
\,du\\
+\frac{1}{2}
\int_0^{\infty}\left(V_{\sqrt{2\hbar N},\hbar^{\frac{1}{2}}\mu^{\frac{1}{6}}N^{-\frac{1}{6}}}\psi_N^{\hbar} \right)'\left(u+s\right) 
\left(V_{\sqrt{2\hbar N},\hbar^{\frac{1}{2}}\mu^{\frac{1}{6}}N^{-\frac{1}{6}}}\psi_N^{\hbar}\right) \left(u+t\right) +\\
\bigl[\left(V_{\sqrt{2\hbar N},\hbar^{\frac{1}{2}}\mu^{\frac{1}{6}}N^{-\frac{1}{6}}}\psi_N^{\hbar} \right)\left(u+s\right) 
\left(V_{\sqrt{2\hbar N},\hbar^{\frac{1}{2}}\mu^{\frac{1}{6}}N^{-\frac{1}{6}}}\psi_N^{\hbar}\right)' \left(u+t\right) \bigr]
\,du.
\end{multline}
This is an identity true for all $\hbar$, $N$, and $\mu$. It can be proved from the representation~\eqref{eq:kernel_scaled_edge} using the differential equation for the Hermite functions.

If $\hbar N = \mu$, on the antidiagonal $-s=t=y$ we thus get
\begin{multline}
K_{N,\sqrt{2\mu},\hbar^{\frac{2}{3}}}(-y,y)=
\underbrace{
\sqrt{2} \mu^{\frac{2}{3}}
\int_0^{\infty} 
\Psi_N^{\hbar}(u-y)
\Psi_N^{\hbar} (u+y)
\,du}_{I_1}\\
+
\underbrace{\hbar^{\frac{2}{3}}
\int_0^{\infty} u\,
\Psi_N^{\hbar} (u-y)
\Psi_N^{\hbar}(u+y)
\,du}_{I_2}\\
+
\underbrace{\frac{\hbar^{\frac{1}{3}}}{2}
\int_0^{\infty}
\left[{\Psi_N^{\hbar}}'  (u-y)
\Psi_N^{\hbar} (u+y)
+
\Psi_N^{\hbar}  (u-y)
{\Psi_N^{\hbar}}'(u+y)\right]
\,du}_{I_3},
\end{multline}
where we used the notation~\eqref{eq:Hermite_scaled_edge}.
To estimate $I_1$, $I_2$, and $I_3$, we need some explicit bounds on the rescaled wavefunctions $\Psi_N^{\hbar}$.  Note that $K_{N,\sqrt{2\mu},\hbar^{\frac{2}{3}}}(-y,y)$ is even, and so it suffices to study the case $y>0$. 
A useful bound is
$$
\left|\psi_N^{\hbar}(y)\right|\leq \frac{C'}{N^{\frac{1}{12}}\hbar^{\frac{1}{4}}},
$$
for all $y$, see~\cite{Krasikov04}. Hence the rescaled wavefunctions are uniformly bounded by a constant
\be
\left|\Psi_N^{\hbar}(y)\right|\leq \frac{C'}{N^{\frac{1}{12}}\hbar^{\frac{1}{12}}}\leq C''.
\label{eq:Hermite_unif_bound}
\ee
To get an integrable estimate on $\Psi_N^{\hbar}(y)$ for $y>0$ we employ a theorem by Sonin and Polya~\cite[Theorem 7.31.1]{Szego39} giving quantitative growth information on the solutions of Sturm-Liouville equation. We observe that $\Psi_N^{\hbar}(y)$ satisfies the differential equation 
$$
{\Psi_N^{\hbar}}''=V\Psi_N^{\hbar},\quad  \text{where} 
\quad 
V(y)=2\sqrt{2\mu}y+\hbar^{\frac{2}{3}}y^2-\hbar^{\frac{1}{3}},
$$ 
for all $y\in\R$.  If $b$ denotes the positive zero of $V$, then we have
\begin{enumerate}
\item $\Psi_N^{\hbar}>0$ on $[b,+\infty)$;
\item $\lim_{y\to\infty}\left(\log\Psi_N^{\hbar}(y)\right)'=-\infty$.
\item $V>0$ and $V'>0$ on $[b,+\infty)$;
\end{enumerate} 
(For the first we use known bounds~\cite{Krasikov04} on the largest zero of the Hermite polynomial of degree $N$; the second is true because $\Psi_N^{\hbar}$ is a polynomial times a Gaussian.) The above mentioned theorem of Sonin and Polya (see the formulation in~\cite[Lemma 3.9.31]{Anderson10}) allows to conclude that
$$
\left(\log\Psi_N^{\hbar}(y)\right)'\leq -\sqrt{V}\quad \text{on $[b,+\infty)$}. 
$$
Hence,
\begin{align*}
\Psi_N^{\hbar}(y)&\leq \Psi_N^{\hbar}(b)\exp\left(-\int_{b}^y\sqrt{V(y')}dy'\right)\\
&\leq  \Psi_N^{\hbar}(b)\exp\left(-\int_{0}^y\sqrt{\left(2\sqrt{2\mu}y+\hbar^{\frac{2}{3}}y^2-\hbar^{\frac{1}{3}}\right)_+}dy'\right)\\
&\leq \Psi_N^{\hbar}(b)\exp\left(-\int_{0}^y\sqrt{\left(2\sqrt{2\mu}y -\hbar^{\frac{1}{3}}\right)_+}dy'\right)\\
&\leq c' \exp\left(-\frac{2}{3}c'\left(y-c'\hbar^{\frac{1}{3}}\right)^{\frac{3}{2}}\right),
\end{align*}
for all $y\geq b$. A short calculation shows that $0<b<\frac{1}{2}\frac{\hbar^{\frac{1}{3}}}{(2\mu)^{\frac{1}{2}}}+\frac{1}{8}\frac{\hbar^{\frac{4}{3}}}{(2\mu)^{\frac{3}{2}}}$.  Since $\Psi_N^{\hbar}(y)\to \Ai(y)$ pointwise, with different constants we have 
\be
\Psi_N^{\hbar}(y)\leq  c' \exp\left(-c'y^{\frac{3}{2}}\right), \quad\text{for $y\geq0$}.
\label{eq:Hermite_int_bound}
\ee
We can now estimate, for $y>0$,
\begin{align*}
I_1&\leq C
\int_0^{\infty} 
\left|\Psi_N^{\hbar}(u-y)\right|
\left|\Psi_N^{\hbar} (u+y)\right|
\,du\\
&\leq C\int_0^{+\infty}\left|\Psi_N^{\hbar}(u-y)\right|\exp\left(-c\left(u+y\right)^{\frac{3}{2}}\right)du\\
&\leq C\exp\left(-cy^{\frac{3}{2}}\right) \int_0^{+\infty}\left|\Psi_N^{\hbar}(u-y)\right|du\\
&\leq C\exp\left(-cy^{\frac{3}{2}}\right)\left( \int_0^{y}\left|\Psi_N^{\hbar}(u-y)\right|du+\int_y^{\infty}\left|\Psi_N^{\hbar}(u-y)\right|du\right)\\
&\leq C\exp\left(-cy^{\frac{3}{2}}\right)\left( \int_{-y}^{0}\Psi_N^{\hbar}(u)du+\int_0^{\infty}\Psi_N^{\hbar}(u)du\right)\\
&\leq C\exp\left(-cy^{\frac{3}{2}}\right)\left( C y+C\right)\\
&\leq C\exp\left(-cy^{\frac{3}{2}}\right).
\end{align*}
where  $C, c$ denote different constants in each line.
In the second to last step we used the uniform bound~\eqref{eq:Hermite_unif_bound} on $\R$ and the integrable bound~\eqref{eq:Hermite_int_bound} on $[0,\infty)$.

The analysis of $I_2$ and $I_3$ as functions of $y$ proceeds almost verbatim. Moreover, they are $o(1)$ as $N\to\infty$, so that their contribution is negligible.
\end{proof}

\begin{rem}
Since $K_{N,\sqrt{2\mu},\hbar^{\frac{2}{3}}}(-y,y)\to K_{\Ai}(-y,y)$ pointwise, it follows that $K_{\Ai}(-y,y)$ is also dominated by $Ce^{-c|y|^{\frac{3}{2}}}$. For an illustration of the kernels see Fig.~\ref{fig:KAiry}.
\end{rem}

\begin{figure}[t]
	\centering
	\includegraphics[width=1\textwidth]{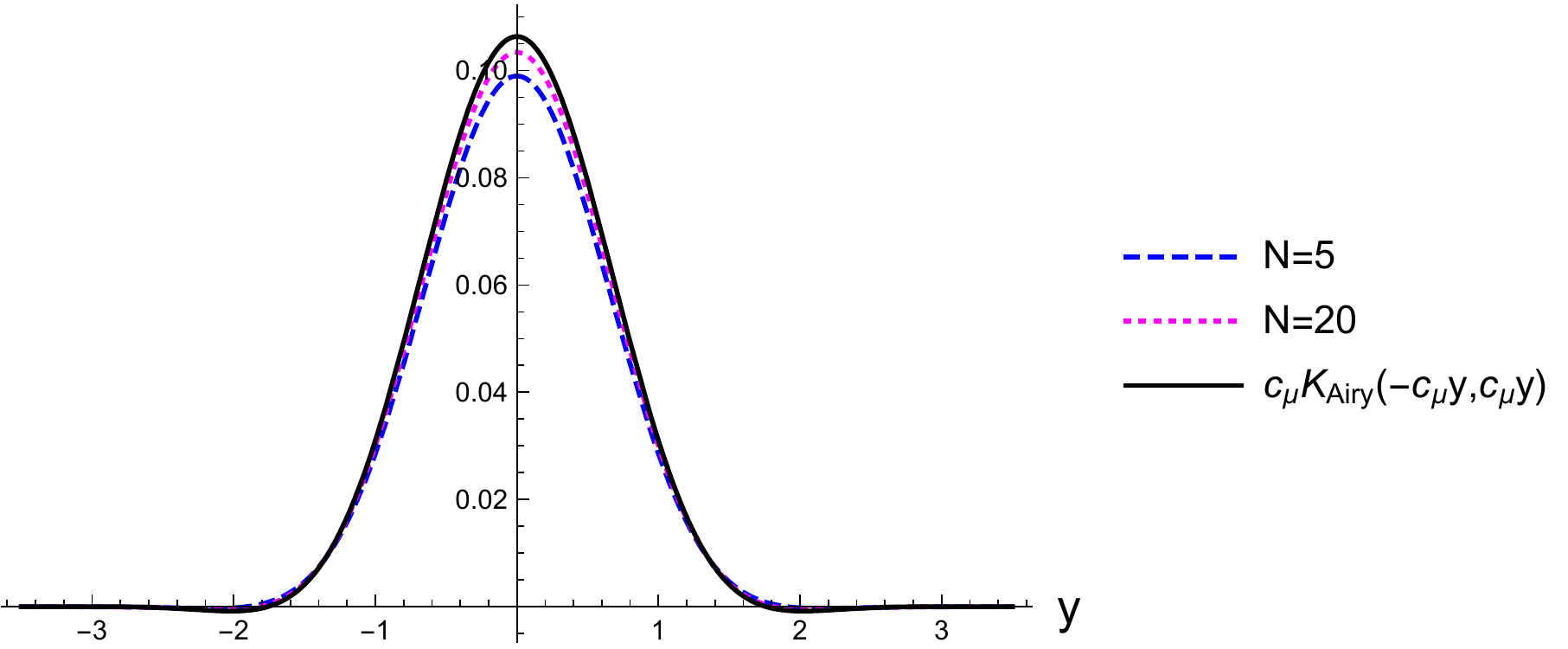}
	\caption{The kernel at the edge $K_{N,\sqrt{2\mu},\hbar^{\frac{2}{3}}}(-y,y)$ compared to its  limit $c_{\mu}K_{\Ai}(-c_{\mu}y,c_{\mu}y)$. Here $\mu=2$. Note the integrable tails (see Lemma~\ref{lem:tail_K}). }
	\label{fig:KAiry}
\end{figure}

\subsection{Asymptotics of the symbols}
Note that $\sigmaQc(x,p)\neq p\sigmaKc(x,p)$. This is \emph{not} surprising since the operators $\hat{p}$ and $P_{<N}$ do not commute. 
\begin{lem}
\label{lem:prod_symb}
For all $x,p \in \R$:
\begin{multline}
\sigmaQc(x,p)= \ p\sigmaKc(x,p)\\
+\frac{i}{2} \sqrt{\frac{\hbar N}{2}}\int_{\R_y}\hbar\left[\psi^{\hbar}_{N-1} \left(x-\frac{\hbar y}{2}\right) \psi^{\hbar}_{N}\left(x+\frac{\hbar y}{2}\right)-\psi^{\hbar}_{N} \left(x-\frac{ \hbar y}{2}\right) \psi^{\hbar}_{N-1} \left(x+\frac{\hbar y}{2}\right)\right]e^{ipy}dy.
\label{eq:sigmaQc-sigmaKc}
\end{multline}
\end{lem}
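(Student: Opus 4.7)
The plan is to split $H_N$ so as to isolate the non-commutativity of $\hat{p}$ and $P_{<N}$ into a Hermitianised correction built from the operator $R_N=[\hat{p},P_{<N}]P_{<N}$ whose kernel was already computed in \eqref{eq:commutator_kernel2}. Using $P_{<N}^{2}=P_{<N}$, the two rewritings
$$\hat{p}P_{<N}=H_N+R_N,\qquad P_{<N}\hat{p}=H_N-P_{<N}[\hat{p},P_{<N}]=H_N+R_N^{*}$$
hold (the latter uses $R_N^{*}=-P_{<N}[\hat{p},P_{<N}]$, which follows from $\hat{p}^{*}=\hat{p}$ and $P_{<N}^{*}=P_{<N}$). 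Summing and halving yields
$$H_N=\tfrac{1}{2}\bigl(\hat{p}P_{<N}+P_{<N}\hat{p}\bigr)-\tfrac{1}{2}\bigl(R_N+R_N^{*}\bigr),$$
and we then compute the Weyl symbol of each piece separately.

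For the symmetric part, the generic identities
$$\sigma_{\hat{p}A}^{\hbar}=p\,\sigma_{A}^{\hbar}-\frac{i\hbar}{2}\partial_x\sigma_{A}^{\hbar},\qquad \sigma_{A\hat{p}}^{\hbar}=p\,\sigma_{A}^{\hbar}+\frac{i\hbar}{2}\partial_x\sigma_{A}^{\hbar}$$
follow directly from Definition~\ref{def:Weyl}: with the change of variables $(u,v)=(x-\hbar y/2,x+\hbar y/2)$ and $\tilde{A}(x,y)=A(x-\hbar y/2,x+\hbar y/2)$, one has $(\partial_{u}A)(x-\hbar y/2,x+\hbar y/2)=\tfrac{1}{2}\partial_x\tilde{A}(x,y)-\hbar^{-1}\partial_y\tilde{A}(x,y)$; after one integration by parts in $y$ the $\partial_y$ piece produces the multiplication by $p$, while the $\partial_x$ piece gives $\mp (i\hbar/2)\partial_x\sigma_A^{\hbar}$. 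Adding the two identities and dividing by two, the derivative contributions cancel, yielding
$$\sigma_{\frac{1}{2}(\hat{p}P_{<N}+P_{<N}\hat{p})}^{\hbar}(x,p)=p\,\sigmaKc(x,p).$$

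It remains to compute the symbol of $-\tfrac{1}{2}(R_N+R_N^{*})$. Because each Hermite function is real-valued, the kernel of $R_N^{*}$ is $\overline{R_N(v,u)}=-i\sqrt{\hbar N/2}\,\psi_{N-1}^{\hbar}(u)\psi_N^{\hbar}(v)$, so
$$(R_N+R_N^{*})(u,v)=i\sqrt{\frac{\hbar N}{2}}\bigl[\psi_N^{\hbar}(u)\psi_{N-1}^{\hbar}(v)-\psi_{N-1}^{\hbar}(u)\psi_N^{\hbar}(v)\bigr].$$
Inserting this into the definition of the Weyl symbol and absorbing the overall minus sign reverses the order of the two terms in the bracket, giving precisely the integral on the right-hand side of \eqref{eq:sigmaQc-sigmaKc}. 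The only subtlety in the argument is the bookkeeping of signs and of the conjugation used to identify $R_N^{*}$; no analytic input beyond an elementary integration by parts is needed, since the statement is an exact identity at fixed $N$ and $\hbar$.
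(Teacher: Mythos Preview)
Your argument is correct, and it takes a somewhat different route from the paper's one-line proof.

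The paper simply says ``An application of the three-term recurrence of the Hermite functions~\eqref{eq:derivative_H}.'' This points to a direct kernel computation: one writes $p\,\sigmaKc(x,p)=-i\int\hbar\,\partial_y\bigl[K_N(x-\hbar y/2,x+\hbar y/2)\bigr]e^{ipy}\,dy$ via integration by parts, expands $\partial_y K_N$ using $\psi_k'=\sqrt{2k/\hbar}\,\psi_{k-1}-(z/\hbar)\psi_k$ term by term, and compares with the explicit formula~\eqref{kernelHN} for $Q_N$; after telescoping, only the boundary terms at level $N$ survive, producing the correction integral in~\eqref{eq:sigmaQc-sigmaKc}.

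Your approach is more structural: you never touch the recurrence directly. Instead you split $H_N=\tfrac{1}{2}(\hat{p}P_{<N}+P_{<N}\hat{p})-\tfrac{1}{2}(R_N+R_N^{*})$, invoke the standard Weyl-calculus identity that the Jordan product with $\hat{p}$ has symbol $p\,\sigma_A^{\hbar}$ (the $\pm(i\hbar/2)\partial_x$ corrections cancel), and then read off the symbol of the Hermitian correction from the already-computed kernel~\eqref{eq:commutator_kernel2}. This buys you a cleaner bookkeeping---no index sums, no telescoping---at the cost of importing the symbol identities $\sigma_{\hat{p}A}^{\hbar}=p\,\sigma_A^{\hbar}\mp(i\hbar/2)\partial_x\sigma_A^{\hbar}$, which you correctly rederive from Definition~\ref{def:Weyl}. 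The paper's route is self-contained in the Hermite-function machinery and makes transparent \emph{why} only the $(N-1,N)$ cross terms survive; your route makes transparent that the correction is exactly the (symmetrised) commutator $[\hat{p},P_{<N}]$, which is the conceptual content of the lemma and the reason Proposition~\ref{prop:bound_symbols} works.
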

\begin{proof}
An application of the three-term recurrence of the Hermite functions~(\ref{eq:derivative_H}).
\end{proof}

\begin{prop}\label{prop:bound_symbols} The families $\{\sigmaKc\}_{N \geq 1}$ and $\{\sigmaQc\}_{N \geq 1}$ are bounded in $\AA'$.
Moreover, 
\be
\|\sigmaQc-p\sigmaKc\|_{\AA'}\leq \hbar\sqrt{\frac{\mu}{2}},
\label{eq:dist_lemma}
\ee
thus the distance between $\sigmaQc(x,p)$ and $p\sigmaKc(x,p)$ is  asymptotically small in $\AA'$, as $N\to\infty$, $\hbar \to 0$, with $\hbar N \to \mu>0$.
\end{prop}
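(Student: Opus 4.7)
The starting point is the dual characterization
$$\|h\|_{\AA'} = \frac{1}{2\pi}\sup_{y\in\R}\int_\R |\FF_2 h(x,y)|\,dx$$
recalled in Section~\ref{sec:notation}, combined with the Fourier identity $\FF_2\sigma_L^{\hbar}(x,y) = 2\pi\hbar\, L\left(x - \hbar y/2,\, x + \hbar y/2\right)$ of Definition~\ref{def:Weyl}. Translating $u = x - \hbar y/2$ on the anti-diagonal converts this into the master identity
\be
\|\sigma_L^{\hbar}\|_{\AA'} \;=\; \hbar\,\sup_{y\in\R}\int_\R \bigl|L(u,\,u + \hbar y)\bigr|\,du,
\label{eq:plan_master}
\ee
which is finite as soon as the kernel of $L$ is integrable along each translate of the diagonal. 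Every part of the proposition then reduces to an $L^{1}$ estimate of such a kernel.

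For the boundedness of $\{\sigmaKc\}$, I would plug $L = K_N$ with the expansion~\eqref{eq:K_N1} into~\eqref{eq:plan_master}, use the triangle inequality, and apply Cauchy--Schwarz term by term with $\|\psi_k^{\hbar}\|_{L^2} = 1$, obtaining $\int |K_N(u,u+\hbar y)|\,du \leq N$ and hence $\|\sigmaKc\|_{\AA'}\leq \hbar N = \mu$ uniformly. For $\{\sigmaQc\}$, the same procedure applied to the explicit expansion~\eqref{kernelHN} of $Q_N$ gives $\int|Q_N(u,u+\hbar y)|\,du \leq 2\sqrt{\hbar/2}\sum_{j=0}^{N-2}\sqrt{j+1}$; since $\sum_{j=0}^{N-2}\sqrt{j+1} = O(N^{3/2})$, this delivers a crude but uniform-in-$N$ bound of order $\mu^{3/2}$.

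For the sharper estimate~\eqref{eq:dist_lemma}, the essential ingredient is Lemma~\ref{lem:prod_symb}, which exhibits $\sigmaQc - p\sigmaKc$ as the Weyl symbol of the rank-two operator $T_N$ with kernel
\be
T_N(u,v) \;=\; \frac{i}{2}\sqrt{\frac{\hbar N}{2}}\,\bigl[\psi_{N-1}^{\hbar}(u)\,\psi_N^{\hbar}(v) - \psi_N^{\hbar}(u)\,\psi_{N-1}^{\hbar}(v)\bigr].
\ee
Applying~\eqref{eq:plan_master} with $L = T_N$, then the triangle inequality, and finally two Cauchy--Schwarz estimates $\int|\psi_{N-1}^{\hbar}(u)\,\psi_N^{\hbar}(u+\hbar y)|\,du \leq 1$ (and its twin) yields $\int|T_N(u,u+\hbar y)|\,du \leq \sqrt{\hbar N/2} = \sqrt{\mu/2}$, uniformly in $y\in\R$; multiplying by the $\hbar$ of~\eqref{eq:plan_master} produces the claim $\|\sigmaQc - p\sigmaKc\|_{\AA'}\leq \hbar\sqrt{\mu/2}$. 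There is no real obstacle here: the whole point of Lemma~\ref{lem:prod_symb} is that the non-commutativity correction $\sigmaQc - p\sigmaKc$ lives on the top pair of modes $\psi_{N-1}^{\hbar},\psi_N^{\hbar}$, for which $L^{2}$-normalization alone suffices---no sharp Plancherel--Rotach input is required---and the explicit prefactor $\sqrt{\hbar N/2}=\sqrt{\mu/2}$ combines with the $\hbar$ in~\eqref{eq:plan_master} to produce the decay rate $\hbar\sqrt{\mu/2}$.
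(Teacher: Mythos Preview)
Your proposal is correct and follows essentially the same route as the paper: both arguments use the dual-norm identity $\|\sigma_L^{\hbar}\|_{\AA'}=\hbar\sup_y\int|L(x-\hbar y/2,x+\hbar y/2)|\,dx$, expand the kernels $K_N$, $Q_N$ term by term and apply Cauchy--Schwarz, and for~\eqref{eq:dist_lemma} invoke Lemma~\ref{lem:prod_symb} to reduce the correction to the two top modes $\psi_{N-1}^{\hbar},\psi_N^{\hbar}$. The only cosmetic difference is that the paper works with the pairing $\langle\sigma,f\rangle$ and the bound $|\langle\sigma,f\rangle|\leq C\|f\|_{\AA}$ rather than with the explicit dual-norm formula, but the computations are identical.
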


\begin{proof}[Proof of Proposition~\ref{prop:bound_symbols}] Let $f\in\AA$. From Plancherel's theorem
\begin{align*}
\langle\sigmaPc,f\rangle
&
=\int_{\R_x\times\R_y} \overline{\hbar K_N\left(x-\frac{\hbar y}{2},x+\frac{\hbar y}{2}\right)}\FF_2f(x,y) dy dx,\\
\langle\sigmaHc,f\rangle
&
=\int_{\R_x\times\R_y}\overline{ \hbar Q_N\left(x-\frac{\hbar y}{2},x+\frac{\hbar y}{2}\right)}\FF_2f(x,y) dy dx.
\end{align*}
We can estimate 
\begin{align*}
\left|\langle\sigmaKc,f\rangle\right|&\leq\hbar\left(\int  \sup_x\left|\FF_2f(x,y) \right|dy\right)\left(\sup_y\int \left|K_N\left(x-\frac{\hbar y}{2},x+\frac{\hbar y}{2}\right) \right|dx\right)\\
&\leq\hbar  \|f\|_{\AA}\sup_y\sum_{j=0}^{N-1}\left(\int\left| \psi_k^{\hbar} \left( x-\hbar y/2\right)\psi_{k}^{\hbar}\left( x+\hbar y/2\right) \right|dx\right)\\
&\leq\hbar  \|f\|_{\AA}\sup_y\sum_{j=0}^{N-1}\left(\int \left| \psi_k^{\hbar} \left( x-\hbar y/2\right)\right|^2dx\,\int \left| \psi_k^{\hbar} \left( x+\hbar y/2\right)\right|^2dx\right)^{1/2}\\
&\leq\hbar  N \|f\|_{\AA}.
\end{align*}
Similarly, 
\begin{align*}
\left|\langle\sigma^\hbar_{H_N},f\rangle\right|
\leq \hbar\|f\|_{\AA}\sqrt{\frac{\hbar }{2}} \sup_y\sum_{j=0}^{N-2}2\sqrt{j+1}  
\leq \sqrt{2} (\hbar N)^{3/2}\|f\|_{\AA}.
\end{align*}
The convergent sequence $\hbar N$ is  bounded from above. The proof of the uniform boundedness of the symbols is complete.

With the help of Lemma~\ref{lem:prod_symb}, similar calculations are used in the proof of~\eqref{eq:dist_lemma},
\begin{align*}
&\|\sigmaQc-p\sigmaKc\|_{\AA'}\\
&\leq\sup_y\frac{\hbar}{2} \sqrt{\frac{\mu}{2}}\int_{\R}\left|\psi^{\hbar}_{N} \left(x-\frac{\hbar y}{2}\right) \psi^{\hbar}_{N-1}\left(x+\frac{\hbar y}{2}\right)-\psi^{\hbar}_{N-1} \left(x-\frac{\hbar y}{2}\right) \psi^{\hbar}_{N} \left(x+\frac{\hbar y}{2}\right)\right|dx\\
&\leq \hbar \sqrt{\frac{\mu}{2}}\|\psi_N^{\hbar}\|_2\|\psi_{N-1}^{\hbar}\|_2.
\end{align*}
\end{proof}

\subsection{Proofs of Theorems~\ref{thm:convergenceAprime_intro}, \ref{thm:convergenceAiry_intro} and~\ref{thm:edge_out_pointwise}}

\begin{proof}[Proof of Theorem~\ref{thm:convergenceAprime_intro}] 
Notice that, by Proposition~\ref{prop:Asymptotics_K_N}, for any compact sets $U,V\Subset\R$, 
there is a constant $C=C(U,V,\mu)>0$, such that
\begin{align}
\label{eq:estimate_conv}
&\sup_{x\in U}\sup_{ y\in V}\left|\left(\FF_2\sigmaKc-\FF_2\chi_{D}\right)\left(x,y\right)\right|\leq \frac{C}{N},\\
\label{eq:estimate_conv2}
&\sup_{x\in U}\sup_{ y\in V}\left|\left(\FF_2p \sigmaKc-\FF_2p\chi_{D}\right)\left(x,y\right)\right|\leq \frac{C}{N},
\end{align}
for all $N\geq1$, where 
$$
\FF_2\chi_{D}(x,y)=\mu\rho_{\mu}(x)K_{\sine}\left(-\mu\rho_{\mu}(x)y/2,\mu\rho_{\mu}(x)y/2\right).
$$
It is enough to show the two claims~\eqref{eq:asymp_sigma_P_<N}-\eqref{eq:asymp_sigma_H_N}  for all $f\in\BB$. By the density of $\BB$ in $\AA$ the thesis will follow. 
Let $f\in\BB$ (so that $\FF_2f$ has compact support $J\Subset\R_x\times\R_y$, see Section~\ref{sec:notation}). Then,
\begin{align*}
&\left|\int\overline{ \left[\sigmaPc\left(x,p\right)-\chi_{D}(x,p)\right]}f(x,p)dx dp\right|\\
&=\left|\int
\overline{\left[\FF_2\sigmaKc(x,y)-\FF_2\chi_{D}(x,y)\right]}\FF_2f(x,y)dxdy\right|\\
&\leq\|\FF_2f\|_{\infty}\int_J\left|\FF_2\sigmaKc(x,y)-\FF_2\chi_{D}(x,y)\right|dxdy\leq C\|\FF_2f\|_{\infty}N^{-1}.
\end{align*}
for some constant $C$ (dependent on $J$). Hence, for all $f\in\BB$, 
$$
\lim_{N\to\infty}\langle\sigmaKc-\chi_D,f\rangle=0.
$$
Similarly, for $f\in\BB$,
$$
\left|\langle\sigmaQc-p\chi_D,f\rangle\right|\leq \left|\langle \sigmaQc- p\sigmaKc,f\rangle\right|+\left|\langle p\sigmaKc-p\chi_D,f\rangle\right|.
$$ 
The first term is of order $O(N^{-1})$ by Proposition~\ref{prop:bound_symbols}; the second term is also $O(N^{-1})$ by Eq.~\eqref{eq:estimate_conv2}.
Hence, for all $f\in\BB$, 
$$
\lim_{N\to\infty}\langle\sigmaQc-p\chi_D,f\rangle=0.
$$
\end{proof}
Before proving Theorem~\ref{thm:convergenceAiry_intro}, we need some notation. Consider the change of coordinates
    \begin{align}
    T\colon \R_x\times \R_p &\to \C^2 \nonumber \\
(x,p)&\mapsto (\theta,\zeta)
    \end{align}
where $\zeta$ is solution of
\begin{align}
x^2+p^2=2\mu+\zeta^2,
\end{align}
and $\theta\in[0,2\pi)$ is given by
\be
\theta=
\begin{cases}
\arctan\frac{p}{x} &\text{if $x\neq0$}\\
\frac{\pi}{2} &\text{if $x=0$, $p>0$}\\
\frac{3\pi}{2} &\text{if $x=0$, $p<0$}
\end{cases}.
\ee
If  $(x,p)\notin D$ then $\zeta\in(0,\infty)$; 
if t$(x,p)\in D$, then $\zeta\in[0,i\sqrt{2\mu}]$. $T$ is a bijection from $\R_x\times \R_p$  to $T(\R^2\setminus(0,0))=[0,2\pi)\times (\R_{+}\cup [0,i\sqrt{2\mu}])$ with Jacobian determinant 
\be 
|J|:=\left|\det\frac{\partial T(x,p)}{\partial (\zeta,\theta)}\right|= |\zeta|.
\ee
We can now prove Theorem~\ref{thm:convergenceAiry_intro}.
\begin{proof}[Proof of Theorem~\ref{thm:convergenceAiry_intro}] 
Let $g\in C^{\infty}_c(\mathbb{R})$.  We have
\begin{align*}
&\int_{\R_x\times\R_p}\sigmaKc(x,p)\frac{1}{\hbar^{\frac{2}{3}}}g\left(\frac{x^2+p^2-2\mu}{\hbar^{\frac{2}{3}}}\right)dxdp\\
&=\int_{D^c}\sigmaKc(x,p)\frac{1}{\hbar^{\frac{2}{3}}}g\left(\frac{x^2+p^2-2\mu}{\hbar^{\frac{2}{3}}}\right)dxdp+\int_{D}\sigmaKc(x,p)\frac{1}{\hbar^{\frac{2}{3}}}g\left(\frac{x^2+p^2-2\mu}{\hbar^{\frac{2}{3}}}\right)dxdp\\
&=2\pi \int_{0}^{+\infty}\sigmaKc(\sqrt{2\mu},z\hbar^{\frac{1}{3}})zg(z^2)dz+2\pi \int_{0}^{+\infty}\sigmaKc(\sqrt{2\mu},iz\hbar^{\frac{1}{3}})zg(-z^2)\chi_{(0,\frac{2\mu}{\hbar^{2/3}})}(z)dz\\
&=2\pi  \int_{\R_y}\int_0^{+\infty}K_{N,\sqrt{2\mu},\hbar^{\frac{2}{3}}}\left(-y/2,y/2\right)e^{-izy}zg(z^2)dzdy\\
&+2\pi  \int_{\R_y}\int_0^{+\infty}K_{N,\sqrt{2\mu},\hbar^{\frac{2}{3}}}\left(-y/2,y/2\right)e^{zy}zg(-z^2)\chi_{(0,\frac{2\mu}{\hbar^{2/3}})}(z)dzdy,
\end{align*}
where in the third line we used the rotational symmetry of the symbol $\sigmaKc$ and we considered the symbol $\sigmaKc$ as a function on $\C_x\times\C_p$ (see Remark~\ref{rem:complex}). Similarly,
\begin{align*}
&\int_{\R_x\times\R_p}\chi_{D}^{(N)}\left(x,p\right)\frac{1}{\hbar^{\frac{2}{3}}}g\left(\frac{x^2+p^2-2\mu}{\hbar^{\frac{2}{3}}}\right)dxdp\\
&=2\pi\int_{0}^{+\infty}\Ai_1\left(\frac{1}{2^{\frac{1}{3}}\mu^{\frac{1}{3}}}\left(\frac{x^2+p^2-2\mu}{\hbar^{\frac{2}{3}}}\right)\right)\frac{1}{\hbar^{\frac{2}{3}}}g\left(\frac{x^2+p^2-2\mu}{\hbar^{\frac{2}{3}}}\right)dxdp\\
&=2\pi \int_{\R_y}\int_0^{+\infty}c_{\mu}K_{\Ai}(-c_{\mu}y/2,c_{\mu}y/2)e^{-izy}zg(z^2)dzdy\\
&+2\pi \int_{\R_y}\int_0^{+\infty}c_{\mu}K_{\Ai}(-c_{\mu}y/2,c_{\mu}y/2)e^{zy}zg(-z^2)\chi_{(0,\frac{2\mu}{\hbar^{2/3}})}(z)dzdy.
\end{align*}
Hence, 
\begin{align*}
&\left|\int_{\R_x\times\R_p}\left[\sigmaKc(x,p)-\chi_{D}^{(N)}\left(x,p\right)\right]\frac{1}{\hbar^{\frac{2}{3}}}g\left(\frac{x^2+p^2-2\mu}{\hbar^{\frac{2}{3}}}\right)dxdp\right|\leq
I_N+J_N,
\end{align*}
where
\begin{align*}
I_N=2\pi
\int_0^{+\infty}\left(\int_{\R_y}f_N(y)dy\right)\left|zg(z^2)\right|dz,\quad
J_N=2\pi
\int_0^{+\infty}\left(\int_{\R_y}f_N(y)e^{zy}dy\right)\left|zg(z^2)\right|dz,
\end{align*}
and $f_N(y):=\left|K_{N,\sqrt{2\mu},\hbar^{\frac{2}{3}}}\left(-y/2,y/2\right)-c_{\mu}K_{\Ai}(-c_{\mu}y/2,c_{\mu}y/2)\right|$. By Proposition~\ref{prop:Asymptotics_K_N2} $f_N(y)$ tends to zero uniformly on compact sets (and hence pointwise). 
The tail estimate of Lemma~\ref{lem:tail_K} implies that the sequences $f_N(y)$ and $f_N(y)e^{zy}$ are dominated by an integrable function, and so by the dominated convergence theorem  both $\int f_N(y)dy$ and $\int f_N(y)e^{zy}dy$ tend to zero (the latter for any $z$).  Since $\operatorname{supp}g\Subset\mathbb{R}_z$, we conclude that both $I_N$ and $J_N$ go to zero as $N\to\infty$. This proves~\eqref{eq:edge_asymp_K} from which~\eqref{eq:edge_asymp_Q} follows by recalling Lemma~\ref{lem:prod_symb}.
\end{proof}
\begin{proof}[Proof of Theorem~\ref{thm:edge_out_pointwise}] It is again enough to prove~\eqref{eq:edge_pointwise1}. The second claim~\eqref{eq:edge_pointwise2} will follow by Lemma~\ref{lem:prod_symb}. Fix $\epsilon>0$. By rotational symmetry we can assume $x=\sqrt{2\mu}$ and $p=z\in\R$.
\begin{align*}
 \left|\sigmaKc(\sqrt{2\mu},z)-\chi_{D}^{(N)}(\sqrt{2\mu},z)\right|&\leq\int_{\R_y}\left| K_{N,\sqrt{2\mu},\hbar^{\frac{2}{3}}}\left(y/2,-y/2\right)-c_{\mu}K_{\Ai}(-c_{\mu}y/2,c_{\mu}y/2)\right|dy\\
 &\leq \int_{-L}^L\left| K_{N,\sqrt{2\mu},\hbar^{\frac{2}{3}}}\left(-y/2,y/2\right)-c_{\mu}K_{\Ai}(-c_{\mu}y/2,c_{\mu}y/2)\right|dy \\
 &+2\int_{L}^{+\infty}\left| K_{N,\sqrt{2\mu},\hbar^{\frac{2}{3}}}\left(-y/2,y/2\right)\right|dy\\
 &+2\int_{L}^{+\infty}\left|c_{\mu}K_{\Ai}(-c_{\mu}y/2,c_{\mu}y/2)\right|dy:=I_1+I_2+I_3
 \end{align*}
 for every $L>0$. Choose $L_0$ such that the $I_2$ and $I_3$ are each bounded by $\epsilon/3$.  The first integral  $I_1$ is bounded by  $C\hbar^{\frac{1}{3}}$, with $C=C(L_0)$. Take $\hbar= \left(\epsilon/(3C)\right)^3$ and conclude the proof.
\end{proof}

\section*{Acknowledgements}

FDC thanks Nick Simm for helpful correspondence. The authors would also like to thank Gerardo Hernandez-Duenas and Alejandro Uribe for pointing out the paper~\cite{Hernandez-Duenas15}.
We acknowledge the support by the Italian National Group of Mathematical Physics (GNFM-INdAM), by PNRR MUR projects CN00000013-`Italian National Centre on HPC, Big Data and Quantum Computing' and PE0000023-NQSTI,  by Regione Puglia through the project `Research for Innovation' - UNIBA024, and by Istituto Nazionale di Fisica Nucleare (INFN) through the project `QUANTUM'.

\section*{Declarations}
\subsection*{Funding and/or Conflicts of interests/Competing interests} The authors have no competing interests to declare that are relevant to the content of this article.

\subsection*{Data availability} 
Data sharing not applicable to this article as no datasets were generated or analysed during the current study.

\appendix

\section{Quantum harmonic oscillator and  Hermite polynomials}
\label{app:A}

The classical harmonic oscillator Hamiltonian function is  
\be
\mathfrak{h}_{\operatorname{h.o.}}(x,p)=\frac{1}{2}\left(p^2+x^2\right).
\ee
The \emph{harmonic oscillator Schr\"odinger operator} is
\be
H_{\operatorname{h.o.}}=\frac{1}{2}\left(-\hbar^2\partial^2_x+\hat{x}^2\right).
\ee

The \emph{Hermite functions} $(\alpha^2=1/\hbar)$ are
\be
\psi_k^{\hbar} (x)=\sqrt{\frac{\alpha }{\sqrt{\pi } 2^k k!}} \exp \left(-\frac{1}{2}\alpha ^2 x^2\right) h_k(\alpha  x),\qquad k=0,1,2,\ldots
\ee
where
\be
h_k(y)=(-1)^ke^{y^2}\frac{d^k}{dy^k}e^{-y^2}
\ee
is the $k$-th \emph{Hermite polynomials}. The Hermite functions are eigenfunctions of the harmonic oscillator
\be
H_{\operatorname{h.o.}}\psi_k^{\hbar} (x)=\lambda_k\psi_k^{\hbar} (x),\quad k=0,1,2,\ldots,
\ee
with eigenvalues $\lambda_k=\hbar \left(k+\frac{1}{2}\right)$, and form an orthonormal basis in $L^2(\R)$
\be
\int_{\R} \psi_k^{\hbar} (x)\psi_{\ell}^{\hbar} (x)dx=\delta_{k,\ell}.
\ee
Useful formulae are the following three-term relations written in terms of position operator $\hat{x}$ and momentum operator $\hat{p}=-i\hbar\frac{d}{d x}$:
\begin{align}
(\hat{x}\psi_k^{\hbar}) (x)&=\sqrt{\frac{\hbar}{2}}\left[\sqrt{k+1}\psi_{k+1}^{\hbar} (x)+\sqrt{k}\psi_{k-1}^{\hbar} (x)\right]\label{eq:Hermite_x}
\\
(\hat{p}\psi_k^{\hbar} )(x)&=i\sqrt{\frac{\hbar}{2}}\left[\sqrt{k+1}\psi_{k+1}^{\hbar} (x)-\sqrt{k}\psi_{k-1}^{\hbar} (x)\right].
\label{eq:Hermite_p}
\end{align}
When combined they give the useful relation
\be
\label{eq:derivative_H}
\frac{d}{dz}{\psi_{k}^{\hbar} }(z)=\sqrt{\frac{2k}{\hbar}}\psi_{k-1}^{\hbar} (z)-\frac{z}{\hbar}\psi_{k}^{\hbar} (z).
\ee

We have the following Plancherel-Rotach asymptotics formulae (see~\cite[Theorem 8.22.9]{Szego39}). 
Let $\epsilon<\epsilon'$ be fixed positive numbers, and $n\in\Z$ fixed. Let $\hbar=\hbar_N$ so that $\hbar_N N=\mu$ , where $\mu>0$ is a fixed number. The following asymptotics hold true:
\begin{enumerate}[(i)]
\item If $x=\sqrt{(2+1/N)\mu}\cos\phi$, $\epsilon\leq\phi\leq\pi-\epsilon$, then
\be
\label{eq:PR_bulk}
\psi_{N+n}^{\hbar} \left(x\right)= \left(\frac{2}{\mu}\right)^{\frac{1}{4}}\left(\frac{1}{\pi \sin\phi  }\right)^{\frac{1}{2}} \left\{ \sin \left[\left(\frac{N}{2}+\frac{1}{4}\right) (\sin (2 \phi )-2 \phi )+\frac{3 \pi }{4}-n\phi\right]+O(N^{-1})\right\};
\ee
\item If $x=\sqrt{(2+1/N)\mu}\cosh\phi$, $\epsilon\leq\phi\leq\epsilon'$, then
\be
\label{eq:PR_out}
\psi_{N+n}^{\hbar} \left(x\right)=\left(\frac{1}{8\mu}\right)^{\frac{1}{4}} \left(\frac{1}{\pi  \sinh\phi }\right)^{\frac{1}{2}} \exp \left(-\left(\frac{N}{2}+\frac{1}{4}\right) (\sinh (2 \phi )-2 \phi )+n\phi\right)\left(1+O(N^{-1})\right);
\ee
\item
If $x=\sqrt{(2+1/N)\mu}+\sqrt{\mu/2}{N^{-\frac{2}{3}}}t$, with $t$ complex and bounded, then,
\be
\label{eq:PR_edge}
\psi_N^{\hbar} \left(x\right)=\left(\sqrt{2/\mu}N^{1/3}\right)^\frac{1}{2}\operatorname{Ai}(t)+O(N^{-1/2}).
\ee
\end{enumerate}
In all these formulae, the $O$-terms hold uniformly.
Note that the choice $\hbar N=\mu$ is the right scaling of a vanishing Planck constant that gives rise to a nontrivial asymptotics.

\section{Sine kernel, Airy kernel and their Fourier transforms}
\label{app:B}
We denote the normalised \emph{semicircular density} of radius $\sqrt{2\mu}>0$,
\be
\label{eq:semic}
\rho_{\mu}(x)=\frac{1}{\pi\mu}\sqrt{(2\mu-x^2)_+},
\ee
and 
\be
\label{eq:const}
c_{\mu}=2^{\frac{1}{2}}\mu^{\frac{1}{6}}.
\ee
Here is why these factors pop out in all formulae. If the energy of a (classical) harmonic oscillator is $\mathfrak{h}_{\mathrm{h.o.}}(x,p)=\mu$, then the momentum as a function of position is $p(x)=\sqrt{(2\mu-x^2)_+}$. At the points of inversion of motion $x=\pm\sqrt{2\mu}$, the momentum is zero, and its one-side derivative is $|p'(\pm\sqrt{2\mu})|=c_{\mu}^{3}$.

The sine and the Airy kernels are
\begin{align}
\label{eq:sinek} K_{\sine}(u,v)&=\frac{\sin \pi \left(u-v\right)}{\pi (u-v)}
 \quad&&\text{(\emph{sine kernel})},\\
 \label{eq:Airyk}
  K_{\Ai}(u,v)&=\frac{\operatorname{Ai}(u)\operatorname{Ai}'(v)-\operatorname{Ai}'(u)\operatorname{Ai}(v)}{u-v}\quad&&\text{(\emph{Airy kernel})},
\end{align}
where the \emph{Airy function} is defined by the formula 
\be
\Ai(x):=\frac{1}{2\pi i}\int_C e^{\zeta^3/3-x\zeta}d\zeta,
\ee
with $C$ a contour in the complex $\zeta$-plane consisting of the ray joining $e^{-i\pi/3}{\infty}$ to the origin plus the ray joining the origin to $e^{i\pi/3}{\infty}$.

The kernels~\eqref{eq:sinek}-\eqref{eq:Airyk} are defined for $u=v$ in the unique way making them continuous (and in fact $C^{\infty}$).
The sine kernel can be viewed as the `square' of another symmetric kernel,
\be
 K_{\sine}(u,v)=\int_{-1}^{1}e^{\pi i  u\lambda}\overline{e^{\pi i  v\lambda}}d\lambda.
\ee 
A similar identity holds for the Airy kernel,
\be
K_{\Ai}(u,v)=\int_0^{+\infty}\operatorname{Ai}(u+\lambda)\operatorname{Ai}(v+\lambda)d\lambda.
\label{eq:int_rep_Ai}
\ee
(Use~\eqref{eq:commutator_DL} and the Airy differential equation, see ~\cite{TW94}.)
Using a trick one gets the following useful representation:
\be
K_{\Ai}(u,v)=\int_{\R}e^{iq(t-s)}\left(\int_{0}^{+\infty}\operatorname{Ai}\left(\lambda+2^{2/3}q^2+(u+v)/2^{1/3}\right)d\lambda\right)\frac{dq}{2\pi}.
\label{eq:AiryK_trick}
\ee
Note that $K_{\sine}(u,-u)$ is locally integrable (but not in $L_1(\R)$), while $\int_{\R} \left|K_{\Ai}(u,-u)\right|du<\infty$.

$K_{\sine}$ is the Fourier transform of the characteristic function of the disk:
\be
\label{eq:Fourier_chiD}
\int_{\R}\mu\rho_{\mu}(x)K_{\sine}(-\mu\rho_{\mu}(x)y/2,\mu\rho_{\mu}(x)y/2)e^{ipy}dy=\chi_{D}(x,p).
\ee
From~\eqref{eq:AiryK_trick} we also get an explicit formula for the Fourier transform of $K_{\Ai}$:
\be
\int_{\R}c_{\mu}K_{\Ai}(-c_{\mu}y/2,c_{\mu}y/2)e^{izy}dy=\Ai_1\left(\frac{z^2}{\left(2\mu\right)^{\frac{1}{3}}}\right),
\ee
where 
\be
\label{eq:Ai1}
\Ai_1(\xi):=\int_{\xi}^{+\infty}\operatorname{Ai}\left(u\right)du
\ee 
is the \emph{integrated Airy function}. 
We have $\Ai_1(-\infty)=\int_{\R}\operatorname{Ai}\left(u\right)du=1$. The function $\Ai_1(\xi)$ has the following large $|\xi|$ asymptotics~\cite[Eq. (9.10.4)-(9.10.6)]{NIST}:
\be
\label{eq:asymp_Ai1}
\Ai_1(\xi)\sim
\begin{cases}
\dfrac{1}{2\pi^{1/2}|\xi|^{3/4}}e^{-\frac{2}{3}|\xi|^{2/3}}, &\text{for $\xi\to+\infty$},\\\\
1-\dfrac{1}{\pi^{1/2}|\xi|^{3/4}}\cos\left({\frac{2}{3}|\xi|^{2/3}+\frac{\pi}{4}}\right), &\text{for $\xi\to-\infty$}.
\end{cases}
\ee

\end{document}